\newcommand\vldbdoi{XX.XX/XXX.XX}
\newcommand\vldbpages{XXX-XXX}
\newcommand\vldbvolume{15}
\newcommand\vldbissue{13}
\newcommand\vldbyear{2022}
\newcommand\vldbauthors{\authors}
\newcommand\vldbtitle{\shorttitle} 
\newcommand\vldbavailabilityurl{https://github.com/dominatorX/open}
\newcommand\vldbpagestyle{empty} 
\newcommand\abs[1]{\left|#1\right|}
\begin{document}

\title{Online Ridesharing with Meeting Points}

\author{Jiachuan Wang}
\affiliation{
	\institution{The Hong Kong University of Science and Technology}
	\city{Hong Kong}
	\country{China}
}
\email{jwangey@cse.ust.hk}
\author{Peng Cheng}
\affiliation{
	\institution{East China Normal University}
	\city{Shanghai}
	\country{China}
}
\email{pcheng@sei.ecnu.edu.cn}
\author{Libin Zheng}
\affiliation{
	\institution{Sun Yat-sen University}
	\city{Guangzhou}
	\country{China}
}
\email{zhenglb6@mail.sysu.edu.cn}
\author{Lei Chen}
\affiliation{
	\institution{The Hong Kong University of Science and Technology}
	\city{Hong Kong}
	\country{China}
}
\email{leichen@cse.ust.hk}
\author{Wenjie Zhang}
\affiliation{
	\institution{The University of New South Wales}
	\country{Australia}
}
\email{wenjie.zhang@unsw.edu.au}

\begin{abstract}
	Nowadays, ridesharing becomes a popular commuting mode. Dynamically arriving riders post their origins and destinations, then the platform assigns drivers to serve them. In ridesharing, different groups of riders can be served by one driver if their trips can share common routes. Recently, many ridesharing companies (e.g., Didi and Uber) further 
	propose a new mode, namely ``ridesharing with meeting points''. Specifically, with a short walking distance but less payment, riders can be picked up and dropped off \emph{around} their origins and destinations, respectively. 
	In addition, \textit{meeting points} enables more flexible routing for drivers, which can potentially improve the global profit of the system. In this paper, we first formally define the Meeting-Point-based Online Ridesharing Problem (MORP). We prove that MORP is NP-hard and there is no polynomial-time deterministic algorithm with a constant competitive ratio for it. We notice that a structure of vertex set, $k$-skip cover, fits well to the MORP. $k$-skip cover tends to find the vertices (meeting points) that are convenient for riders and drivers to come and go. With meeting points, MORP tends to serve more riders with these convenient vertices. Based on the idea, we introduce a convenience-based meeting point candidates selection algorithm.
	We further propose a hierarchical meeting-point oriented graph (HMPO graph), which  ranks vertices for assignment effectiveness and constructs $k$-skip cover to accelerate the whole assignment process. Finally, we utilize the merits of $k$-skip cover points for ridesharing and propose a novel algorithm, namely SMDB, to solve MORP. Extensive experiments on real and synthetic datasets validate the effectiveness and efficiency of our algorithms. 
\end{abstract}

\maketitle
\pagestyle{\vldbpagestyle}
\begingroup\small\noindent\raggedright\textbf{PVLDB Reference Format:}\\
\vldbauthors. \vldbtitle. PVLDB, \vldbvolume(\vldbissue): \vldbpages, \vldbyear.\\
\href{https://doi.org/\vldbdoi}{doi:\vldbdoi}
\endgroup
\begingroup
\renewcommand\thefootnote{}\footnote{\noindent
	This work is licensed under the Creative Commons BY-NC-ND 4.0 International License. Visit \url{https://creativecommons.org/licenses/by-nc-nd/4.0/} to view a copy of this license. For any use beyond those covered by this license, obtain permission by emailing \href{mailto:info@vldb.org}{info@vldb.org}. Copyright is held by the owner/author(s). Publication rights licensed to the VLDB Endowment. \\
	\raggedright Proceedings of the VLDB Endowment, Vol. \vldbvolume, No. \vldbissue\ %
	ISSN 2150-8097. \\
	\href{https://doi.org/\vldbdoi}{doi:\vldbdoi} \\
}\addtocounter{footnote}{-1}\endgroup

\ifdefempty{\vldbavailabilityurl}{}{
	\vspace{.3cm}
	\begingroup\small\noindent\raggedright\textbf{PVLDB Artifact Availability:}\\
	The source code, data, and/or other artifacts have been made available at \url{https://github.com/dominatorX/open}.
	\endgroup
}

\section{Introduction}

Nowadays, on-demand ridesharing becomes important in civil commuting services. Together with online platforms (e.g., DiDi \cite{didi}), ridesharing surpasses traditional taxi services with more saved energy, less air pollution, and lower cost \cite{sharedmobilityweb}.

In \textit{online ridesharing}, riders arrive dynamically. Platforms need to
deal with them immediately for different objectives, including maxmizing the
number of served riders
\cite{cici2015designing,d2012empirical,kleiner2011mechanism,santos2013dynamic,yeung2016flexible},
minimizing the total travel distance \cite{alonso2017demand,
	gupta2010dial,herbawi2012genetic,huang2014large,ma2013t,ota2017stars,rubinstein2012incremental,santos2013dynamic,thangaraj2017xhare},
or maximizing the unified revenue
\cite{asghari2016price,asghari2017line,tong2018unified}.

Ridesharing allows one driver to serve more than one group of riders simultaneously. The route of a driver is a sequence of pick-up/drop-off points. Given a set of drivers and riders, \textit{route planning} is to design and update routes every time a rider arrives.
A key operation, called \emph{insertion}, shows great effectiveness and efficiency for solving online ridesharing problem \cite{ma2013t,huang2014large,thangaraj2017xhare,ota2017stars,santos2013dynamic,yeung2016flexible,cheng2017utility,cici2015designing,tong2018unified}.
It tries to insert a newly coming rider's origin and destination into a driver's route without changing the order of his/her current sequence of pick-up/drop-off points.

\begin{figure}[t!]\centering
	\scalebox{0.25}[0.25]{\includegraphics{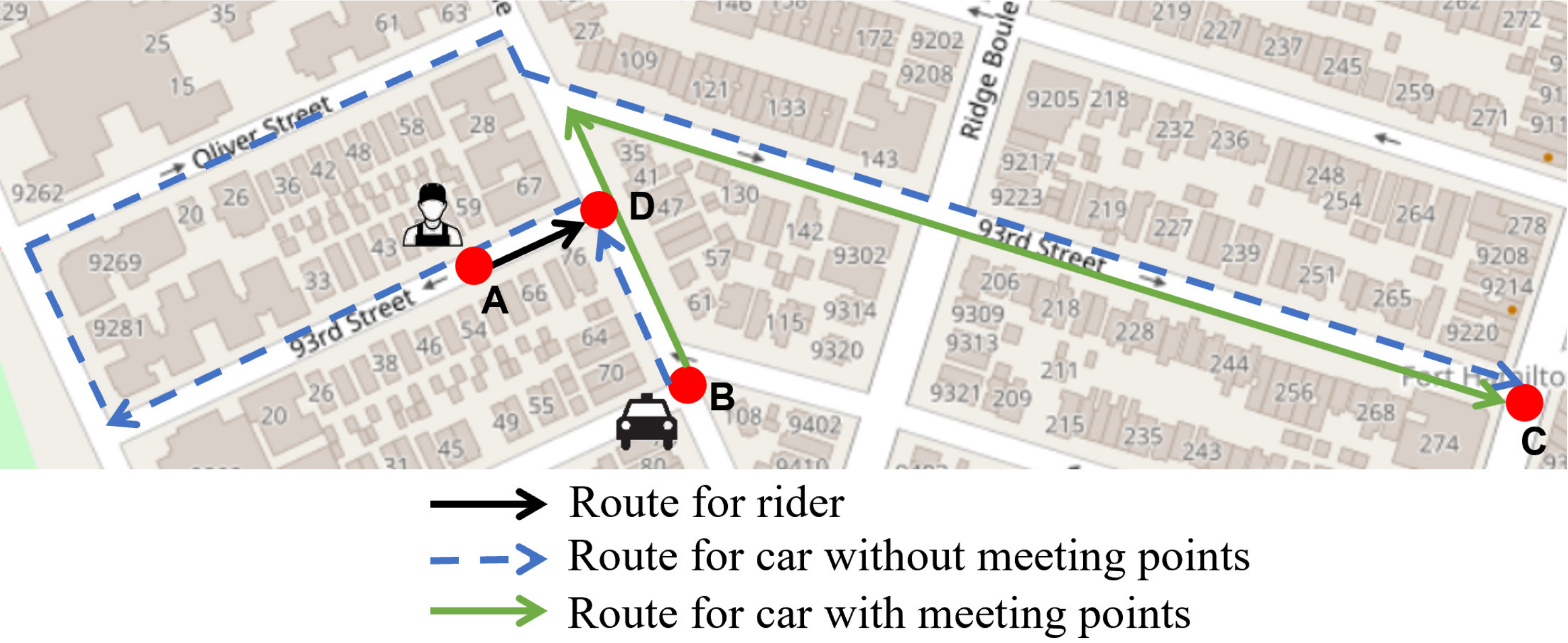}}
	\vspace{-2ex}
	\caption{\small An Example of Meeting Points}\vspace{-5ex}
	\label{fig:example}
\end{figure}

However, due to the complex topology of the city road network, some locations are spatially close to each other but hard
to access for vehicles. Especially, if two locations are only connected by a
Pedestrian Street, where vehicles cannot go through, a short walking could
greatly reduce the travel cost of the assigned vehicle. To deal with the case,
\textit{meeting points (MP for short)} are introduced as alternative locations
for pick-up/drop-off locations of riders~\cite{stiglic_benefits_2015}. As shown
in Figure \ref{fig:example}, a rider $r$ at location \textbf{A} wants to go to
location \textbf{C}. The nearby roads are directed roads. A driver $w$ at
location \textbf{B} is assigned to serve $r$. Then, the shortest route for $w$
is represented in blue dashed arrow lines. If $r$ can move a short distance,
for example, to location \textbf{D} (i.e., an MP), $w$ can serve $r$ through a
much shorter route displayed in the green line.

In a recent work~\cite{zhao_ridesharing_2018}, the authors utilize meeting
points to improve the results of offline ridesharing problems, whose method
however is slow and only can handle up to 40 riders/vehicles in real time. Thus, it is not practical for
online applications (e.g., Uber and DiDi) with hundreds of riders/vehicles
every several seconds. Existing studies also investigate the strategies to
properly select MPs with online
surveys~\cite{czioska2017gis,eser2018tracking}. In industry, Uber recently
offers Express POOL to encourage riders to walk to Express spots (meeting
points) for efficient routing~\cite{ExpressPool}. Nevertheless, Uber Express
Pool only schedules the route for each vehicle when there are shareable
ride-requests to group with, otherwise, the rider needs to wait until other
shareable riders come. In addition, the MPs in Uber Express Pool are
similar to the stops of buses for nearby riders to come together and thus not
flexible \cite{UberExpress}. For the example in Figure \ref{fig:example}, Uber
Express POOL will not assign driver $w$ to pick up rider $r$ until another
rider $r'$ appears close to point {\scriptsize$D$} (i.e., the selected pick-up
stop). In summary, to the best of our knowledge, in the existing research
works, there is no solution for the \textit{online} ride-sharing services
boosted with flexible MPs.

With MPs, online ridesharing is more flexible but challenging. To solve it, we first define \textit{Meeting-Point-based Online Ridesharing Problem}
(MORP) mathematically. Based on
existing studies \cite{asghari2016price,tong2018unified}, we prove that the
MORP problem is NP-hard and has no deterministic algorithms with a constant
competitive ratio, thus intractable.

In the traffic network, some vertices are more convenient to come and go and thus ``popular'' during assignments, such as those close to highways. Flexible MPs makes it possible to serve more riders at or near those vertices, which makes them even more frequently used. This motivates us to take the advantage of $k$-skip cover $V^*$ \cite{DBLP:conf/sigmod/TaoSP11}, which is a subset of vertices to be the skeleton of a graph $G$. %

To solve the MORP problem, we prepare \textit{Meeting point candidates} for each vertex offline. On the other hand, we propose a hierarchical meeting-point
oriented (HMPO) graph, which further filters MPs for effectiveness and accelerate shortest path queries during insertion. Based on the $k$-skip cover in HMPO graph, we propose a meeting-point-based insertion operator, named SMDB, which can solve MORP effectively and efficiently.

Here we summarize our main contributions:
\begin{itemize}[leftmargin=*]
	\item We formulate the online route planning problem with MPs
	mathematically, namely MORP. We prove that it is NP-hard and has no
	algorithm with constant competitive ratio in Section~\ref{sec:problem}.
	\item With observations and analyses, we propose a heuristic algorithm to
	select MP candidates for riders in Section~\ref{sec:MeetingPoints}, which
	is based on a unified cost function considering the travel cost from
	additional walking.  We propose a novel hierarchical structure of the road network, namely hierarchical meeting-point oriented (HMPO) graph,
	to fasten the solution for MORP in Section~\ref{sec:HSGraph}.
	\item With HMPO graph, we propose an effective and efficient insertor, namely SMDB, to handle the requests in MORP in Section
	\ref{sec:hmpo_framework}.
	\item Extensive experiments on synthetic and real data sets show the efficiency and effectiveness of SMDB
	in Section~\ref{sec:experiment}.
\end{itemize}\vspace{-2ex}

\section{Background and Related Works} 
\label{sec:background}
 
\subsection{Online Ridesharing}
Route planning for ridesharing, which has been widely studied in recent years, is a variant of the dial-a-ride problem (DARP) proposed in 1975 \cite{wilson1975advanced,wilson1976advanced}. Traditional DARP problems usually have additional restrictions, such as limiting the drivers to start from/return to depot(s) and serve all the requests \cite{ho2018survey, DBLP:journals/transci/GschwindI15}. These settings lead to small scale datasets with near-optimal solutions. In comparison, route planning for ridesharing is more applicable in the real world, which applies to hundreds of thousands of requests and tens of thousands of drivers with locations distributed over large scale road network \cite{asghari2016price,asghari2017line,huang2014large,tong2018unified}. Realistic revenue and serving cost can be designed as objectives to meet the requirement of ridesharing platforms \cite{asghari2016price,asghari2017line, tong2018unified,zheng2018order}. A common setting for the serving cost is a unified score based on distance/time cost of driving and penalty of rejecting riders. One can further extend the unified cost to an application-specific one, such as maximizing the score combined with complicated social utilities from both workers and requests \cite{cheng2017utility,feuerstein2001line}. Using meeting point results in additional costs such as walking, which is handled with a unified cost function.

\subsection{Insertion}
Real-world ridesharing services require solutions for online instead of off-line mode. Efficient heuristic methods are developed for route planning without information of future workers and requests in advance \cite{asghari2016price,asghari2017line,cici2015designing,huang2014large,ma2013t,ota2017stars,thangaraj2017xhare,yeung2016flexible}. 
With large scale dataset and requirement for real-time response, a commonly 
used operator called insertion shows good performance for route planning 
\cite{cheng2017utility,cici2015designing,huang2014large,ma2013t,jaw1984solving,ma2015real,ota2017stars,rubinstein2012incremental,santos2013dynamic,thangaraj2017xhare,tong2018unified}.
Insertion greatly reduces the search space of possible new routes to serve 
each rider from {\scriptsize$O(N!)$} to {\scriptsize$O(N^2)$}. Tong \emph{et 
	al.} further reduce its time complexity to linear time using dynamic 
programming \cite{tong2018unified, tong2017flexible}. 
We adapt the linear insertion \cite{tong2018unified} for our MORP problem as baseline and further propose a more effective insertor based on a new graph structure. 

\subsection{Ridesharing with meeting points}
As an effective way to improve ridesharing experience, meeting points (MPs) are 
used in online hailing companies, such as Didi and Uber. 
Stiglic \emph{et al.} \cite{stiglic_benefits_2015} first introduce the concept 
of ``meeting points'' to give alternatives to pick up and drop off riders. They 
devise a heuristic algorithm for meeting-point-based offline ridesharing 
problem. Zhao \emph{et al.} \cite{zhao_ridesharing_2018} develop the 
mathematical model for the offline ridesharing problem with flexible pickup and delivery locations and propose an integer 
linear programming model to solve it. In recent years, Uber had proposed 
Express Pool as an online ridesharing service, in which riders need to walk a 
little and may have a longer waiting time, but get a discount. On the other hand, Uber prefers to group passengers together with 
the same MPs, then pick up and drop off them like a bus with selectable 
stations, which has less flexibility \cite{UberExpress}. In this paper, we 
focus on the online ridesharing problem with MPs, which needs to respond to 
requests within a very short time (e.g., within 5 seconds).

\subsection{$k$-skip  cover}\label{subsec:kskip}
Tao et al. \cite{DBLP:conf/sigmod/TaoSP11} first propose $k$-skip cover: given a graph $G(V,E)$, we call a set $V^*\subseteq V$ a $k$-skip cover if for any shortest path $SP$ on $G$ with exactly $k$ vertices, there is at least one vertex $u\in SP$ satisfying $u\in V^*$. In general, for any shortest path $SP$ in $G$, vertices of $SP\cap V^*$ succinctly describes $SP$ by sampling the vertices in $SP$ with a rate of at least $\frac{1}{k}$. Such a sub-path out of the whole path is called a $k$-skip
shortest path. In many applications, such as electronic map presentation, given all vertices are unnecessary and the $k$-skip shortest path gives a good skeleton of it. The study further shows that answering k-skip queries is significantly faster than finding the original shortest paths. 
Funke et al. \cite{DBLP:journals/pvldb/FunkeNS14} further generalize the work of 
\cite{DBLP:conf/sigmod/TaoSP11} by constructing $k$-skip path cover for all 
paths instead of only shortest paths. Besides, they devise a new way of 
constructing a smaller size of $k$-skip path cover. { To construct $k$-skip cover, $k$-\emph{skip neighbor} is defined \cite{DBLP:conf/sigmod/TaoSP11}. If vertices $u$ and $v$ are both in a $k$-skip cover and $u$ is a $k$-\emph{skip neighbor} of $v$, the shortest path from $u$ to $v$ does not pass any other vertices in the $k$-skip cover.}

\vspace{-1ex}
\section{Problem Definition}
\label{sec:problem}


\subsection{Basic Notations}
We use graph $G_c=\langle V_c,E_c\rangle$ to represent a road network for cars,
where  $V_c$ and $E_c$ indicate a set of vertices and a set of edges,
respectively. Each edge, $(u,v)\in E_c(u,v\in V_c)$, is associated with a
weight $t_c(u,v)$ indicating travel time for driving from vertex $u$ to $v$
through it. Similarly, graph $G_p=\langle V_p,E_p\rangle$ is used to represent
a road network for passengers. Each edge $(u,v)\in E_p(u,v\in V_p)$ is weighed
by $t_p(u,v)$ as its travel time for walking. For the two graphs, we denote the
union of vertices as  $V = V_c\cup V_p$. In the city network, passengers are
more flexible. We set all the edges in $V_p$ undirected according to the
network of OSM \cite{OSM}. In addition, usually for any edge $(u,v)$, walking
is slower than driving (e.g., $t_c(u,v) < t_p(u,v)$). We denote \emph{path} as
a sequence of vertices $\{v_1, v_2\cdots, v_k\}$ with travel time
$\sum_{i=1}^{k-1}{t(v_i,v_{i+1})}$. For each pair of vertices $(u,v)$, we
represent the time cost of its shortest path for cars and passengers as
$SP_c(u, v)$ and $SP_p(u, v)$, respectively.

\begin{definition}[Drivers]\vspace{-1ex}
	Let $W=\{w_1,w_2, \cdots,w_n\}$ be a set of $n$ drivers that can provide transportation services. Each driver $w_i$ is defined as a tuple $w_i=\langle l_{i}, a_{i}\rangle$ with a current location $l_{i}$ and a capacity limitation  $a_{i}$.
\end{definition}\vspace{-1ex}
At any time, the number of riders in a taxi of driver $w_i$ must not exceed its capacity $a_{i}$.

\begin{definition} [Requests]
	\label{def:Request}
	Let $R=\left\{r_1,r_2, \cdots, r_m\right\}$ be a set of $m$ requests. 
	Each request 
	$r_j=\langle s_{j},e_{j},tr_{j},tp_j, td_{j},p_j, a_{j}, pi_j, de_j, wp_j, wd_j\rangle$
	is denoted with its source location $s_{j}$, destination location $e_{j}$, release
	time $tr_j$, latest pick-up time $tp_j$, deadline $td_j$, rejection penalty
	$p_j$, and a capacity $a_j$. Once it is assigned, two vertices as pick-up
	point $pi_{j}$ and drop-off point $de_j$ will be recorded. The shortest
	time for a request to walk from source to pick-up point is represented as
	$wp_j=SP_p(s_j, pi_j)$ and from drop-off point to destination is denoted as
	$wd_j=SP_p(de_j, e_j)$.
\end{definition}\vspace{-1ex}

In practice, we do not ask riders to set all the parameters in Definition \ref{def:Request}. Excluding $s_j, e_j,$ and $a_j$, which are given by the rider, other parameters can be auto-filled by the platform to improve the user's experience, such as deadline $td_j$ for reasonable serving time~\cite{huang2014large}. 
A request $r_j$ can be served by driver $w_i$ only if: (a) $w_i$ can arrive at $pi_{j}$ after $tr_j$; (b) the remaining capacity of $w_i$ is at least $a_j$ when he/she arrives at $pi_j$; and (c) $w_i$ can pick $r_j$ at $pi_j$ no later than $tp_j$ and deliver $r_j$ at $de_j$ no later than $td_j-wd_j$.

Note that in real-application, rejections are unavoidable for the ``urgent''
requests on a platform, especially at rush hours. The loss from rejecting $r_j$
is denoted by penalty $p_j$. The penalty can be application-specific.
Furthermore, we denote all the requests that are served by driver $w_i$ as
$R_{w_i}$. { \label{rw:r1-3}Then, $\hat{R}=\cup_{w_i\in W}R_{w_i}$ and $\bar{R}=R \backslash
	\hat{R}$ refer to the total served and unserved requests, respectively.} To
simplify, we will use $r_j$ to indicate a request or a rider of a request
without differentiation.


\begin{definition} [Meeting Points]
	\label{def:Meeting Point}
	For a request $r_j\in R_{w_i}$, the pick-up point $pi_j = u\in V_c\cap V_p$
	denotes that driver $w_i$ will pick up rider $r_j$ at vertex $u$. The
	drop-off point $de_j = v\in V_c\cap V_p$ denotes  that $w_i$ will drop off
	$r_j$ at vertex $v$. Pick-up and drop-off points are \emph{meeting points
		(MP for short)}.
\end{definition}
With MPs, we allow the drivers to flexibly pick up and drop off passengers.
Traditional online ridesharing solutions only assign drivers to pick up a rider
$r_j$ from its source $s_j$ and drop off $r_j$ to its destination $e_j$. With
MPs, the rider $r_j$ can move a short distance to location $pi_{j}$ and be
picked up there by a driver. After being dropped off at location $de_j$, $r_j$
walks to his/her destination $e_j$.

\begin{definition}[Route]
	\label{def:Route}
	The route of a driver $w_i$ located at $l_i$ is a sequence, $S_{w_i}=[l_i, l_{x_1},l_{x_2}, \cdots, l_{x_k}]$, where each { \label{rw:r1-5} $l_{x_n} (n\in[1, k])$} is a pick-up or drop-off point of a request $r_j\in R_{w_i}$ and the driver will reach these locations in the order from $1$ to $k$.
\end{definition}

We call the vertices of a route as \textit{stations}. Drivers move on shortest paths between stations.
A feasible route satisfies: (a) $\forall r_j\in R_{w_i}$, its drop-off time of $de_j$ is no later than $td_j-wd_j$; (b) $\forall r_j\in R_{w_i}$, its pick-up point $pi_j$ appears earlier than its drop-off point $de_j$ in $S_{w_i}$; (c) The total capacity of undropped riders is no larger than the driver's capacity $a_i$ at any time. We denote $S=\{S_{w_i}| w_i\in W\}$ as all the route plans.

Here we define $D({S_{w_i}})$ as the shortest time to finish $S_{w_i}$:\vspace{-2ex}
{\scriptsize\begin{equation}
		D(S_{w_i})=SP_c(l_i,l_{x_1})+\sum_{k=1}^{|{S_{w_i}}|-2}{SP_c(l_{x_k}, l_{x_{k+1}})}\notag
\end{equation}}\vspace{-3ex}

\subsection{Meeting-Point-based Online Ridesharing}\vspace{0.5ex}
\begin{definition}[Meeting-Point-based Online Ridesharing Problem, MORP]
	Given transportation networks {\scriptsize$G_c$} for cars and {\scriptsize$G_p$} for passengers, a set of drivers {\scriptsize$W$}, a set of dynamically arriving requests {\scriptsize$R$}, a driving distance cost coefficient {\scriptsize$\alpha$}, a walking distance cost coefficient {\scriptsize$\beta$}, MORP problem is to find a set of routes {\scriptsize$S=\{S_{w_i}| w_i\in W\}$} for all the drivers with the minimal unified cost:\vspace{-0.5ex}
	{\scriptsize\begin{equation}
			UC(W,R)=\alpha\sum_{w_i\in W}{D(S_{w_i})}+\beta\sum_{r_j\in \hat{R}}{\left(wp_j+wd_j\right)}+\sum_{r_j\in \bar{R}}{p_j}
			\label{eq:obj}
	\end{equation}}\vspace{-2.5ex}
	
	\noindent    which satisfies the following constraints: $(i)$ Feasibility constraint: each driver is assigned with a feasible route; $(ii)$ Non-undo constraint: if a request is assigned in a route, it cannot be canceled or assigned to another route; if it is rejected, it cannot be revoked.
\end{definition}

\subsection{Hardness Analysis}
\begin{figure*}[t!]\centering\vspace{-4ex}
	\scalebox{0.37}[0.37]{\includegraphics{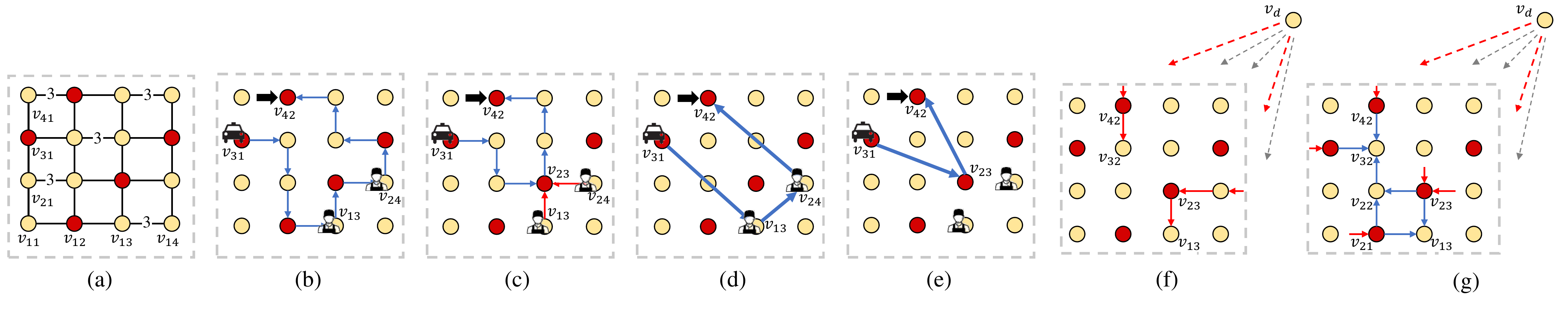}}\vspace{-4ex}
	\caption{\small An Example for the fitness of meeting points and $k$-skip cover. (a) undirected graph $G$ has edges with weight of 1 except for those marked as 3. Red vertices form a $2$-skip cover, which are also good MP candidates. (b) Two requests at $v_{13},v_{24}$ are heading to $v_{42}$. Without meeting points, driver at $v_{31}$ need to serve them with route $v_{31}\rightarrow v_{13}\rightarrow v_{24}\rightarrow v_{42}$. (b) Serving them with MP $v_{23}$ is much more effective, traversing along the popular vertices $v_{31}\rightarrow v_{23}\rightarrow v_{42}$. (d) Traditional ridesharing computes 3 shortest paths $SP(v_{31}, v_{13}),SP(v_{13}, v_{24})$, and $SP(v_{24}, v_{42})$. (e) With MPs, it only computes 2 shortest paths $SP(v_{31}, v_{23})$ and $SP(v_{23}, v_{42})$ beginning and ending with vertices in $V^*$, which can be computed efficiently with $V^*$. (f) $v_d$ is far away from the subgraph, and we want to compare 2 shortest path queries from $v_d$ to $v_{32}$ and to $v_{13}$. Before reaching them, the two paths must reach their surrounding MPs first (e.g., $v_{42}$ and $v_{23}$). (g) $k$-skip cover ``cut off'' shortest paths, so that any shortest paths towards $v_{32}$ and $v_{13}$ must reach $v_{32},v_{42},v_{21}$, and $v_{23}$ first. The distance relationships inside the ``cut'' can help us to bound the distance difference between expensive queries from $v_d$.}
	\label{fig:k-skip}
\end{figure*}
\begin{lemma}\vspace{-1ex}
	\label{proof1}
	The MORP problem is NP-hard.
\end{lemma}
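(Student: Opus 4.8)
The plan is to prove hardness by a polynomial-time reduction from a known NP-hard problem, exploiting the fact that MORP is a strict generalization of the classical unified-cost route-planning problem for online ridesharing, whose intractability is established in prior work~\cite{tong2018unified}. Because NP-hardness concerns the computational cost of producing an optimal assignment rather than the informational handicap of dynamically arriving requests --- the latter being captured separately by the competitive-ratio statement --- it suffices to consider the offline batch specialization of MORP in which every request shares release time $tr_j=0$; any solver for the online setting must in particular handle this specialization in the worst case, so hardness of the batch version transfers to MORP.

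\textbf{The embedding.} First I would show that an arbitrary instance of the meeting-point-free route-planning problem, given by $G_c$, drivers $W$, requests $R$, and coefficient $\alpha$, can be rewritten as a MORP instance in linear time. The key gadget is to make walking degenerate: take $G_p$ to have vertex set $V_p=V_c$ but \emph{no} passenger edges (equivalently, let $\beta$ grow unboundedly). Then $SP_p(s_j,\cdot)$ is finite only at $s_j$ itself, so every finite-cost feasible route must set $pi_j=s_j$ and $de_j=e_j$, forcing $wp_j=wd_j=0$ and annihilating the walking term of $UC$. Under this restriction the feasibility constraints (pick-up before drop-off, capacity, deadlines) and the objective $UC(W,R)=\alpha\sum_{w_i}D(S_{w_i})+\sum_{r_j\in\bar R}p_j$ coincide exactly with those of the classical problem. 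Hence optimal MORP routes on the constructed instance stand in value-preserving bijection with optimal classical routes, and NP-hardness of the latter immediately yields NP-hardness of MORP.

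\textbf{A self-contained alternative and the main obstacle.} To keep the argument self-contained one may instead reduce directly from metric TSP: place a single driver of capacity $n$ at a depot, introduce one request $r_j$ per city collocated at that city (so its pick-up and drop-off coincide and the city must be visited), set all release times to $0$ and deadlines generously large, and fix each penalty $p_j$ larger than the total weight of the network so that every request is served in any optimum; then $UC=\alpha\cdot D(S_{w_1})$ equals the length of the driver's route through all city vertices, and a TSP tour of length at most $B$ corresponds to a MORP cost at most $\alpha B$. The main obstacle in either route is not the combinatorial core but the bookkeeping around the temporal parameters and penalties: one must verify that the auto-filled $tp_j,td_j$ can be chosen uniformly large enough that \emph{every} request is simultaneously feasible (so the deadline constraints do not perturb the reduction), while keeping all numeric parameters polynomially bounded in the input size so the reduction is genuinely polynomial. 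Checking this feasibility-without-interference is the delicate step; once it is in place, the equivalence of objective values is immediate.
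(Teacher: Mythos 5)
Your main argument is exactly the paper's proof: the paper establishes NP-hardness by reducing from the basic route-planning problem of \cite{tong2018unified} and banning walking via $\beta=\infty$, which is precisely your degenerate-walking embedding (your no-passenger-edges gadget is an equivalent formulation). The proposal is correct and follows essentially the same route; the additional self-contained TSP reduction and the care about deadline bookkeeping go beyond what the paper spells out but are not needed to match its argument.
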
\vspace{-3ex}

\begin{proof}
	Please refer to Appendix 
	\ref{A:proof1}.
\end{proof}\vspace{-2ex}

The Competitive Ratio (CR) is commonly used to analyze the online problem. CR
is defined as the ratio between the result achieved by a given algorithm and
the optimal result for the corresponding offline scenario. The existing work
proves no constant CR to maximize the total revenue for basic route planning
for shareable mobility problems with neither deterministic nor randomized
algorithm \cite{asghari2016price,tong2018unified}. Here we have the following
lemma for MORP.

\begin{lemma}\vspace{-1ex}
	\label{proof2}
	There is no randomized or deterministic algorithm guaranteeing constant CP for the MORP problem.
\end{lemma}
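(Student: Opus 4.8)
The plan is to establish the impossibility result by reducing from (or leveraging) the known no-constant-competitive-ratio result for the basic online route planning problem for shareable mobility, as cited in \cite{asghari2016price,tong2018unified}. The key observation is that MORP strictly generalizes the classical meeting-point-free online ridesharing problem: any instance of the latter is an instance of the former in which the passenger graph $G_p$ contains no edges (equivalently, $t_p(u,v)=\infty$ for all $u\neq v$), so that the only feasible pick-up and drop-off points are the sources $s_j$ and destinations $e_j$ themselves, forcing $wp_j=wd_j=0$ and collapsing the unified cost in Equation~\eqref{eq:obj} to the classical driving-plus-penalty objective. Since meeting points can only add routing flexibility, they cannot repair the adversarial hardness of the underlying problem.

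The main steps, in order, are as follows. First I would make the reduction precise: given any hard instance of the basic online ridesharing problem witnessing the absence of a constant competitive ratio, embed it into a MORP instance by taking the same $G_c$, the same drivers $W$ and the same online request sequence $R$, and choosing $G_p$ to be edgeless so that every rider's only admissible meeting points are its own source and destination. Second, I would verify that under this embedding the feasible route sets and the objective value of MORP coincide exactly with those of the base problem, so an online MORP algorithm yields an online algorithm for the base problem with identical cost. Third, I would invoke the adversary argument: the adversary releases requests adaptively (following the construction in \cite{asghari2016price,tong2018unified}) so that any online algorithm, deterministic or randomized, is forced by the non-undo constraint $(ii)$ to commit to assignments that the offline optimum can avoid, driving the ratio of online cost to offline optimum to infinity. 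For the randomized case, I would appeal to Yao's principle, exhibiting a fixed distribution over request sequences on which every deterministic online algorithm has unbounded expected competitive ratio, which then lower-bounds the performance of any randomized algorithm.

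The hard part will be handling the rejection penalties $p_j$ carefully so that the ratio genuinely diverges rather than merely being large but bounded: the adversary must be able to make the online cost scale without bound relative to the offline optimum, which typically requires either unbounded penalties or a growing number of requests whose offline service cost stays small while the online algorithm is repeatedly trapped into either rejecting high-penalty requests or committing resources that preclude serving later, cheaper-to-serve requests. I would therefore structure the adversary so that the offline optimum remains a fixed (or slowly growing) constant while the online cost grows with the sequence length, and I expect the bulk of the argument to be citing and instantiating the existing construction rather than reinventing it, since meeting points play no role once $G_p$ is edgeless. I would close by noting that the proof details, including the explicit adversarial sequence, are deferred to the appendix.
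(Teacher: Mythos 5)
Your proposal is correct and follows essentially the same route as the paper: both reduce MORP to the basic online ridesharing problem by disabling meeting points (the paper sets $\beta=\infty$ in its Lemma~\ref{proof1} reduction, while you make $G_p$ edgeless --- two equivalent ways to force $pi_j=s_j$ and $de_j=e_j$), then invoke the known no-constant-CR adversary construction of \cite{tong2018unified} together with Yao's principle for the randomized case. The paper's appendix proof is just a terser version of exactly this argument.
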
\vspace{-3ex}
\begin{proof}
	Please refer to Appendix 
	\ref{A:proof2}.
\end{proof}\vspace{-2ex}

\section{Overview of the Framework}
\label{sec:framework}

In this section, we first introduce the $k$-skip cover and how it coincides with the demand of MORP \cite{DBLP:conf/sigmod/TaoSP11}. Then we show the detail of our framework, which makes full use of the $k$-skip cover.

A $k$-skip cover $V^*$ is vertex set on graph $G$. By definition, any shortest path of length $k$ has at least 1 vertex that is $\in V^*$ \cite{DBLP:conf/sigmod/TaoSP11}. A good $V^*$ has small size, such that each of its vertices is frequently passed for transportation.

We claim that $k$-skip cover suits MORP problem well for 2 reasons. Figure \ref{fig:k-skip} is shown as an example: 
\begin{itemize}[leftmargin=*]
	\item For a road network, vertices that are convenient to come and go are good candidates for both a $k$-skip cover $V^*$ and MPs, as they are usually components of many short paths and thus vital for transportation.  
	These convenient vertices are fast for comuting and rider-concentrated, thus ``popular'' during assignments. MPs enable drivers to serve more requests through them, which make them more popular. 
	After {\label{rw:r1-6} constructing} the cover $V^*$, shortest path queries on points of $V^*$ can be computed quickly. 
	This motivates us to build a hierarchical meeting-point oriented graph with $k$-skip cover in Section~\ref{sec:HSGraph}, which encourages more requests to be served effectively through $V^*$ and boost the overall query time cost. 
	
	\item During assignment, we try to insert nearby MP candidates into each worker, where many queries are from same source to different MPs. We claim that $k$-skip cover has underexplored merits to bound the differences between these queries, shown in Section \ref{sec:hmpo_framework}. If one of them is infeasible to insert, the bound makes it possible to prune other MPs, which greatly improve the efficiency. We explore this attribute and devise a new insertion 
	algorithm SMDB in Section \ref{subs:SMDB}.
\end{itemize}

{\label{rw:r1-7} Taking} the advantage of $k$-skip cover, we construct our framework to solve MORP problem, shown in Figure \ref{fig:frame}.

\begin{figure}[h!]\centering\vspace{-2ex}
	\scalebox{0.24}[0.24]{\includegraphics{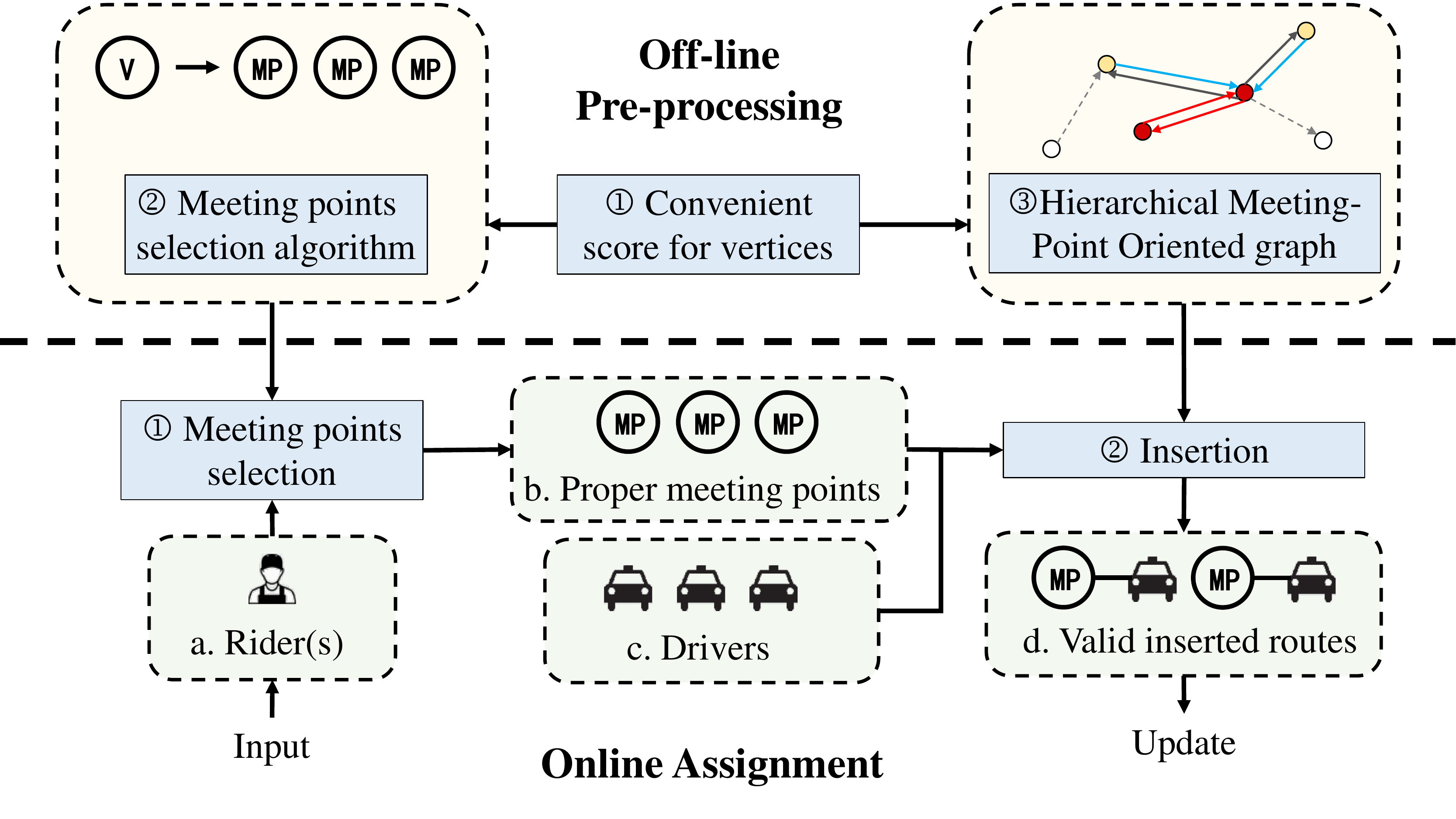}}\vspace{-4ex}
	\caption{\small Assignment framework overview.}
	\vspace{-3ex}
	\label{fig:frame}
\end{figure}

During the online assignment, requests {\label{rw:r1-8}arrive and are assigned} one-by-one. Given a new request $r_j$, we first select meeting points (MP) according to its source and destination locations. Then we iteratively insert each pair of MPs into each driver $w_i$. In our work, we adapt the insertion algorithm with time complexity $O(n)$ for MP insertion \cite{tong2018unified}. If there exists valid insertion(s), we choose the one with the minimal unified cost; otherwise, we reject $r_j$.

Compared with traditional ridesharing problem, we need to select MPs and accelerate related computations. Our work conducts off-line pre-processing to improve the efficiency and effectiveness of online assignment. we first propose a method to evaluate the convenience of each vertex, which bases on statistics of shortest path queries to meet the demand of MPs and $k$-skip cover. 
MP candidates are selected for each vertex in Section~\ref{sec:MeetingPoints}, which greatly shrink search space with $O(1)$ time complexity during online assignment. In addition, we design a structure, namely hierarchical meeting-point oriented graph (HMPO graph), to rank vertices for effective assignments in Section~\ref{sec:HSGraph}. $k$-skip cover is embedded for efficiency. Based on it, we further devise a new insertion algorithm in Section~\ref{sec:hmpo_framework}, namely SMDB, which prunes candidate MPs and drivers during the insertion phase.

\section{Select Meeting Point Candidates}
\label{sec:MeetingPoints}
After Stiglic et al.~\cite{stiglic_benefits_2015} introduced the concept of 
``meeting points'' (MP) to provide flexible pick-up and drop-off points for 
riders, many researchers aim to find an effective solution for ridesharing with 
MPs \cite{stiglic_benefits_2015, czioska2017gis}.
In this paper, we pre-select a set of vertices as candidates to serve their 
nearby vertices.

In this section, we first introduce the motivation of selecting MP candidates. 
Then, we propose a heuristic algorithm, Local-Flexibility-Filter, to select them. 

\subsection{Meeting Point Candidates}
To insert one rider into a route, traditional ridesharing only inserts 1 pair of pick-up and drop-off points. Assume that on average, one vertex has $K$ nearby vertices, which are within the acceptable distance for rider to walk to. Enumerating $K$ pick-up points and $K$ drop-off points as MPs increases the time cost by a factor of $K^2$, which is unacceptable.
Here, we pre-select \textit{Meeting Point Candidates} for each vertex. To insert a pair of origin and destination, 
we can directly get their MPs, instead of searching among all their neighboors.

\begin{definition}(Meeting Point Candidates)
	Given the road network for cars $G_c=\langle V_c,E_c\rangle$ and passengers $G_p=\langle V_p,E_p\rangle$, MP Candidates $MC$ is a dictionary, which maps each 
	vertex $u\in V_p$ to a vertex set 
	$MC(u)=\left\{v_1, v_2, \cdots\right\}\subseteq V_c\cap V_p$. For the MORP 
	problem, we only select MPs for a vertex from its MP candidates, that is, $\forall r_j\in \hat{R}, pi_j\in MC(s_j)$ and $de_j\in 
	MC(e_j)$.
\end{definition}

\subsection{Meeting Point Candidate Selection}
\label{subsec:LFF}
MP candidates should easily get to and conveniently reach other vertices. 
We introduce our Local-Flexibility-Filter algorithm to find the candidate sets $MC(\cdot)$ in two phases.

\textbf{Vertices convenient for drivers}. The first phase aims to find the vertices which are convenient for drivers, thus boosting transportation efficiency.
As the example shown in Figure \ref{fig:k-skip}, MPs and $k$-skip cover have similar preferences. We quantify the convenience from the statistic of shortest path queries, named \textit{equivalent in/out cost} 
{\scriptsize$ECI/ECO$} for each vertex. If a vertex $u$ is inserted into a 
route between $v_1$ and $v_2$, the {\scriptsize$ECI(u)$} and {\scriptsize$ECO(u)$} indicates the average 
cost from $v_1$ to $u$ and from $u$ to $v_2$, respectively. 

As riders are usually assigned to nearby drivers, shortest path queries between a vertex and its surrounding vertices {\label{rw:r1-9}will} indicate its convenience. For each source vertex $u$, we directly select its $n_r$ nearest vertices on the car graph $G_c$ as \emph{reference vertices} $n_o(u)$. Intuitively, we define the equivalent out cost of $u$ as the average distance towards its reference vertices:
$$\scriptsize ECO(u)=\frac{\sum_{v\in 
		n_o(u)}SP_c(u,v)}{n_r}$$

Similarly, for $ECI(u)$, we reverse the graph and select its $n_r$ nearest inward neighboors $n_i(u)$. $\scriptsize ECI(u)=\frac{\sum_{v\in 
		n_i(u)}SP_c(v,u)}{n_r}$
indicates the  average cost of reaching $u$ from other vertices. 
Here, $n_r$ depends on the density of a road network and the speed of drivers. 

\textbf{Vertices convenient for riders}. The second phase takes the walking convenience into account. 
For each vertex $u$, we select vertices $\{v_1,v_2,\cdots\}$ no farther than a maximum walking distance $d_m$. For each reachable vertex $v_i$, we calculate a serving-cost score {\scriptsize$SCS(u,v_i)$} combining both walking distance and the equivalent in/out costs as {\scriptsize$SCS(u,v_i)=\beta\cdot SP_p(u,v_i)+\alpha\left(ECI(v_i)+ECO(v_i)\right)$},
where $\alpha$ and $\beta$ are the weight factors for driving and walking cost defined in MORP.
Especially, each vertex $u$ has the \textit{SCS} score for itself: 
{\scriptsize$SCS(u,u)=\beta\cdot 
	SP_p(u,u)+\alpha\left(ECI(u)+ECO(u)\right)=\alpha\left(ECI(u)+ECO(u)\right)$}.

\textbf{Filtering of MP candidates.} We want a small candidate set for pruning 
effectiveness, while only good MPs are retained. A vertex with a low average cost to serve a rider (low $SCS$) does 
not need alternatives. But a vertex with high $SCS$ needs more choices to find 
a good MP.

Here, for a vertex $u$, we prune all its candidates with a score higher than 
$SCS(u,u)+thr_{CS}$. $thr_{CS}$ is a user-specified threshold. 
We also set an upper bound $nc_m$ for the number of candidates of each vertex.

Now, we  have the MP candidate set for each vertex. Once a request arrives, we 
find the candidate MPs only in the MP candidate sets of its source and 
destination.

We show an $O(\abs{V})$ selection algorithm with detailed analysis in Appendix 
\ref{A:select}.

\section{Hierarchical Meeting-Point Oriented Graph}
\label{sec:HSGraph}
With MPs, assigned routes can be concentrated on the convenient vertices, where inconvenient ones can be replaced by nearby MPs. In this section, we first validate this assumption by analyzing the real-world data. Then we rank the vertices and design a Hierarchical Meeting-Point 
Oriented graph (\emph{HMPO graph}). To be more specific, we 1) find defective vertices and guide us to assign riders through convenient vertices for effectiveness; and 2) further define core vertices, {which form a $k$-skip cover and reduce additional computing costs.}

\subsection{Graph Analysis for Meeting Point}
Take the road network of New York City on Open Street Map (OSM) \cite{OSM} as an example, which contains 58189 vertices
and 122337 edges. In this real-world road network, some vertices are more ``convenient'' 
than the others, which have larger traffic flows and lower transition costs. 
Based on this intuition, we evaluate the original road graph and 
vertex-flexibility with methods in Section~\ref{sec:MeetingPoints}. Based on 
the result in Section~\ref{sec:MeetingPoints}, vertices can be ranked by 
$ECI(\cdot)+ECO(\cdot)$ as an indication of convenience. We find that for more than 70\% vertices, each of their MP candidates has at least one vertex among the  
20\% most convenience vertices. According to this observation, convenient vertices would be more frequently used for ridesharing with MPs compared without MPs. On 
the other hand, some vertices are inconvenient to drive in and out for drivers, while MPs can be their convenient alternatives. This motivates us to build a 
hierarchical graph, which gives an indication for effective assignment and boosts the queries on convenient vertices. 


Formally, we introduce Hierarchical Meeting-Point Oriented Graph (\textsf{HMPO Graph}), which gives the hierarchical order over the vertex set $V$ and has 3 levels of vertices: \vspace{-1ex}
\begin{itemize}[leftmargin=*]
	\item \textbf{Core vertices $V_{co}$}. They are used as MPs  frequently. 
	\item \textbf{Defective vertices $V_{de}$}. They are inconvenient to access. 
	\item \textbf{Sub-level vertices $V_{su}$}. The remaining vertices are classified as sub-level vertices.
\end{itemize}\vspace{-1ex}

The three sets of vertices form a partition of all vertices in $G_c$ and $G_p$, 
that is, $V_{co}\cup V_{su}\cup V_{de}=V, V_{co}\cap V_{su}=\emptyset, 
V_{co}\cap V_{de}=\emptyset,$ and $V_{su}\cap V_{de}=\emptyset$. We introduce 
an example below, which will be used to show the main steps of our algorithms 
to build a HMPO graph in this section.

\begin{figure}[t!]\centering\vspace{-3ex}
	\centering
	\subfigure[{\scriptsize Graph for car $G_c$}]{
		\includegraphics[width=0.35\linewidth]{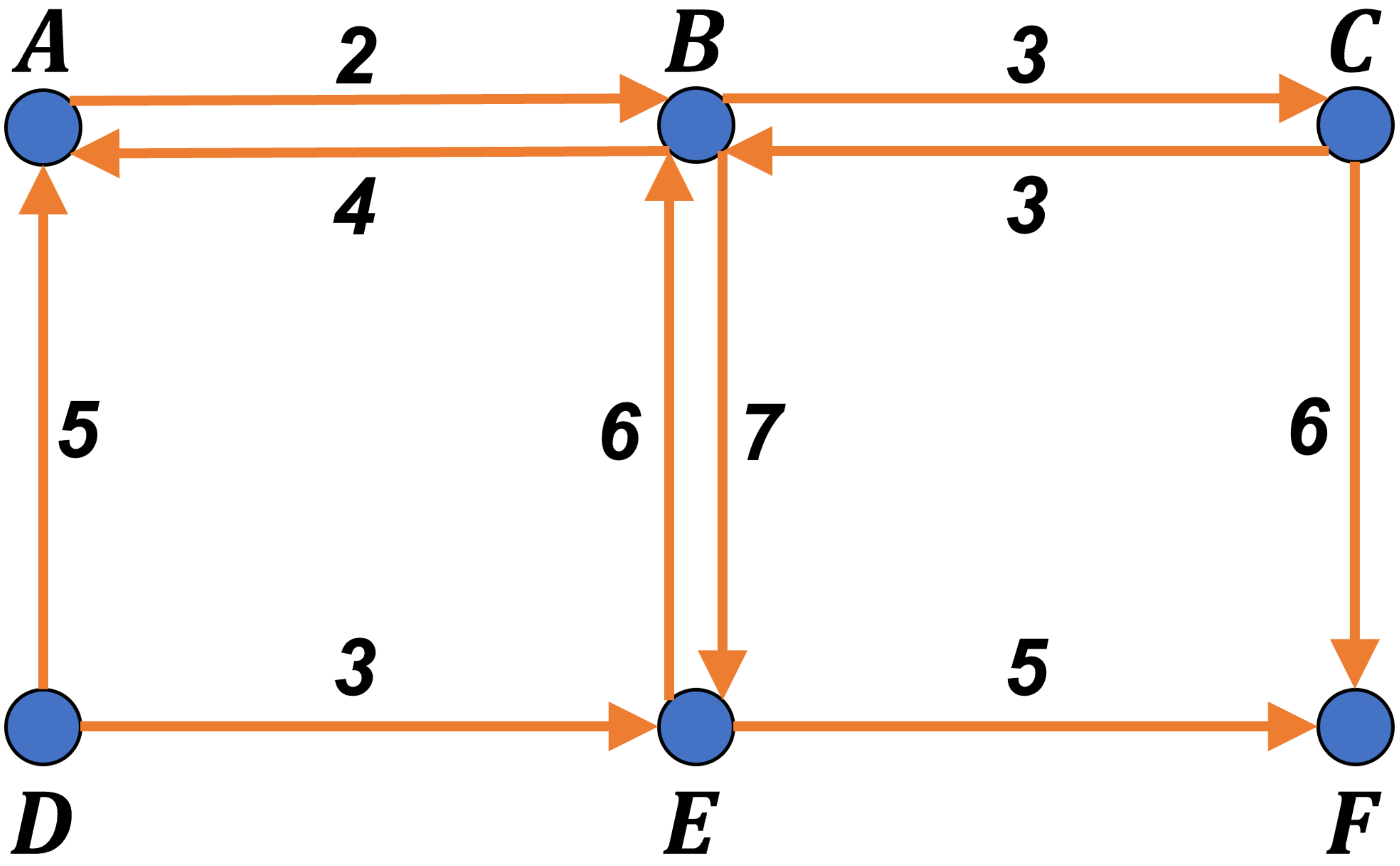}\label{fig:Gc_origin}
	}
	\subfigure[{\scriptsize Graph for passenger $G_p$}]{
		\includegraphics[width=0.35\linewidth]{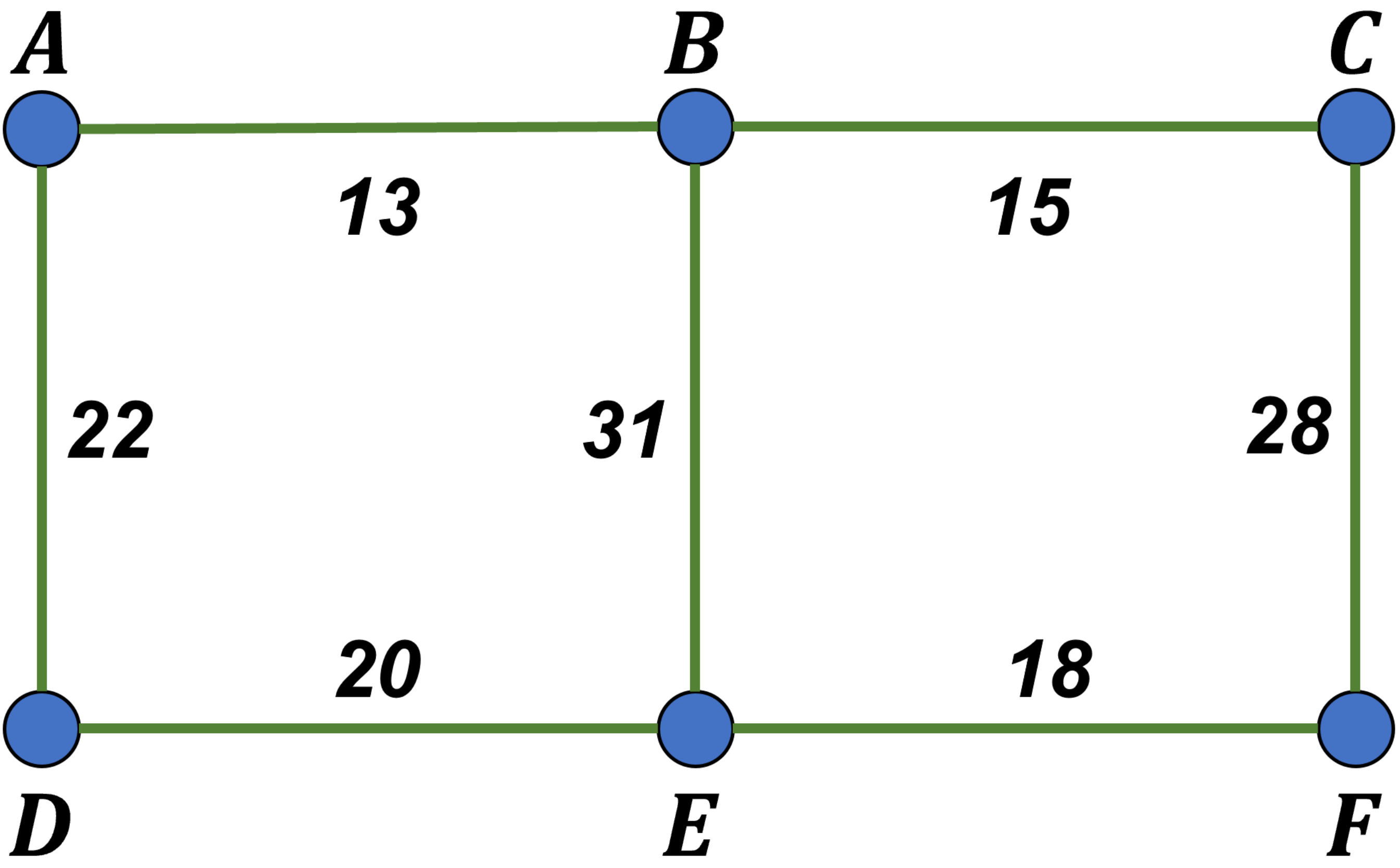}\label{fig:Gp_origin}
	}\vspace{-3ex}
	\caption{\small Original Graph for car and passenger}\label{fig:origin}
	\vspace{-2ex}
\end{figure}

\begin{example}
	\label{ex:graph}
	There are 6 vertices $A$ to $F$ in Figure \ref{fig:origin}. The graph for 
	car is directed in Figure \ref{fig:Gc_origin}. The graph for passenger is 
	undirected shown in Figure \ref{fig:Gp_origin}. Considering $n_r=3$ nearby 
	vertices, we derive $ECO(\cdot)$ and $ECI(\cdot)$ according to Section~\ref{sec:MeetingPoints} in Table~\ref{table:EC}. With $\alpha=\beta=1$, 
	$nc_m=3$, $d_m=30$, and $thr_{CS}=15$, we derive the MP candidates 
	$MC(\cdot)$ shown in Table~\ref{table:EC}.
\end{example}


\subsection{Defective Vertices}
Defective vertices are inconvenient vertices for vehicles to access and will be eliminated.
These defective vertices are not welcomed in traditional ridesharing either, but only with meeting points, removing them is feasible as we can serve riders with MPs. However, vertex removal also induces additional travel cost. We use a heuristic method to select and remove these vertices and improve the assignment effectiveness.



\textbf{Vertex removing cost.} Removing a vertex $u$ from a traditional graph leads to 2 kinds of additional costs for transportation: (i) \emph{the detour cost}. If a query finds a path containing $u$, removing $u$ means that we need to find a new path without passing $u$. The new path is usually longer than the original one and results in a detour; (ii) \emph{the inaccessibility cost}, which is from queries with $u$ as origins or destinations. No path will exist after removing $u$. 


With prepared MP candidates for each vertex in $G_p$, the walking 
cost is checked directly without query on $G_p$. Thus, for assigment, shortest path queries are only on the graph $G_c$, which is affected by vertex removing.

With MPs, the inaccessibility 
cost of removing vertices can decrease from $\infty$ to a limited mixture of 
walking and driving costs. 
We propose a heuristic algorithm called Defective Vertices Selection Algorithm (\emph{DVS 
	algorithm} for short) to remove the defective vertices without error from the 
detour cost and maintain its accessibility in the meantime. As the vertices with larger equivalent in/out cost $ECI$ and $ECO$ are harder to reach and leave, we try to remove defective vertices in decreasing order of $ECI+ECO$. For each vertex, we first check its MP candidates to ensure that at least one candidate is not defective and the vertex is accessible. Then, we search the shortest path queries passing it with a bounded search space. If there is no detour cost, we mark it as defective vertex.

We show the detail of DVS algorithm with time complexity $O(NlogN)$ in Appendix  
\ref{A:dvs} 
and an example in Appendix 
\ref{A:exp}. 
In addition, we propose two lemmas about the removing cost of defective vertices.

\begin{lemma}
	\label{proof3}
	Removing all vertices selected by the DVS algorithm from $G_c$ with 
	their edges leads to no detour cost.
\end{lemma}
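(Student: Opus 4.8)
The plan is to read ``no detour cost'' as an exact distance-preservation statement and then prove it by induction on the order in which the DVS algorithm discards vertices. Concretely, let $V_{de}=\{u_1,u_2,\dots,u_t\}$ be the defective vertices in the order the algorithm removes them, let $G^{(0)}=G_c$, and let $G^{(i)}$ be $G^{(i-1)}$ with $u_i$ and all its incident edges deleted. ``No detour cost'' for the whole set $V_{de}$ then means that for every pair of surviving vertices $a,b\in V\setminus V_{de}$ the shortest car-travel time is unchanged, i.e.\ the distance from $a$ to $b$ in $G^{(t)}$ equals $SP_c(a,b)$ computed in $G^{(0)}$. Equivalently, every shortest path of $G_c$ between two surviving vertices admits an equal-length replacement that avoids all of $V_{de}$. (Detour cost concerns only paths that merely \emph{pass through} removed vertices; the case where $a$ or $b$ is itself removed is the separate inaccessibility cost and is out of scope here.)

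First I would establish the single-step invariant that the DVS check is designed to certify: deleting $u_i$ from the \emph{current} graph $G^{(i-1)}$ leaves all pairwise distances among the vertices of $G^{(i)}$ unchanged. This is where the bounded local search is used. A shortest path of $G^{(i-1)}$ can become longer after deleting $u_i$ only if \emph{every} shortest $a$-$b$ path runs through $u_i$; any such path contains a subpath $p\to u_i\to q$, where $p,q$ are the in/out-neighbours of $u_i$ on that path and $SP_c(p,q)=t_c(p,u_i)+t_c(u_i,q)$. If the check verifies, for each such neighbour pair, that an equal-length $p$-to-$q$ path avoiding $u_i$ already exists in $G^{(i-1)}-u_i$, then I can splice that local replacement into the global path: since a shortest path visits $u_i$ at most once (positive weights $t_c$), the prefix $a\cdots p$ and suffix $q\cdots b$ already avoid $u_i$, so the spliced walk has the same length as the original shortest path and avoids $u_i$. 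Because deleting edges never shortens distances, the distance is preserved exactly. Thus the bounded search over the in/out-neighbour pairs of $u_i$ is a sound local certificate for the single-step invariant.

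With the single-step invariant in hand, the lemma follows by a telescoping induction. The base case $i=0$ is vacuous. Assuming that distances in $G^{(i-1)}$ agree with those of $G_c$ on all of $V(G^{(i-1)})$, the single-step invariant gives agreement of $G^{(i)}$ with $G^{(i-1)}$ on $V(G^{(i)})\subseteq V(G^{(i-1)})$, and composing the two equalities yields agreement of $G^{(i)}$ with $G_c$ on $V(G^{(i)})$; taking $i=t$ proves the claim. The point I would stress is that the argument genuinely requires the \emph{sequential} reading, namely that each vertex is tested against the already-reduced graph rather than against the original $G_c$: two vertices can each be individually redundant in $G_c$ (for instance, parallel equal-length shortest paths $a\to u\to b$ and $a\to v\to b$) yet be jointly indispensable, so the conjunction of independent single-vertex tests on $G_c$ would simply be false. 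The main obstacle is therefore twofold: making the local-to-global splicing rigorous (in particular, confirming that the bounded search is wide enough to witness every relevant neighbour-pair alternative), and being careful that the induction is driven by the progressively contracted graph, which is exactly what makes the per-vertex guarantees compose into the global statement.
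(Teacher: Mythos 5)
Your proof is correct and rests on exactly the same key fact as the paper's: because each DVS check is run against the already-reduced graph, the equal-cost local replacement for $u_k$'s in/out-neighbour pairs avoids all previously removed vertices, and these local certificates compose over the removal order. The paper organizes this as iterative path surgery on a fixed shortest path (tracking the strictly increasing minimal index of remaining defective vertices) rather than your per-step distance-preservation induction on $G^{(i)}$, but this is a presentational difference only.
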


\begin{proof}
	For details, please refer to Appendix 
	\ref{A:proof3}.
\end{proof}\vspace{-2ex}

\begin{lemma}
	\label{proof4}
	$\forall u\in V$ is accessible after removing vertices selected by the DVS 
	algorithm from $G_c$ with MPs.
\end{lemma}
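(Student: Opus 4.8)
The plan is to prove the lemma by exhibiting an invariant that the DVS algorithm maintains throughout its execution and then combining it with Lemma~\ref{proof3}. First I would fix the working notion of accessibility: with meeting points, a (possibly removed) vertex $u$ is accessible as long as some vertex $v\in MC(u)$ survives in the reduced car graph, since a rider at $u$ can walk to $v$ at finite cost (walking is on $G_p$, which DVS never modifies, and $v$ lies within the walking radius used to build $MC(u)$) and there be picked up or dropped off. Hence the statement reduces to two claims: (i) every $u\in V$ retains at least one non-defective MP candidate after DVS terminates, and (ii) that surviving candidate is itself reachable by a car in the reduced graph.

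For claim (ii) I would appeal directly to Lemma~\ref{proof3}: removing the DVS-selected vertices incurs no detour cost, so all shortest-path distances among the surviving vertices are preserved. Assuming the original $G_c$ is strongly connected, preserved finite distances imply the surviving subgraph is still strongly connected, so any surviving candidate $v$ can be reached from and can reach every other surviving vertex; this covers both the ``source'' and the ``destination'' use of $u$. Thus once claim (i) is established, accessibility follows.

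The core of the argument is claim (i), which I would prove by induction on the sequence of removals, carrying the invariant ``every vertex $u\in V$ has at least one non-defective MP candidate.'' The base case holds before any removal, since every vertex is non-defective and $MC(u)$ is nonempty (in particular $u\in MC(u)$: it has zero walking cost and passes the threshold filter $SCS(u,\cdot)\le SCS(u,u)+thr_{CS}$). For the inductive step, DVS removes a vertex only after its check confirms the removal keeps the relevant vertices accessible, and I would argue that this check is exactly what preserves the invariant, so no removal can leave any vertex with all of its candidates defective. Combined with claim (ii), every $u\in V$ is then accessible with MPs at termination.

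The main obstacle is making the inductive step airtight, i.e. showing that deleting a vertex $v$ cannot silently destroy the accessibility of some \emph{other} vertex $w$ that was removed earlier using $v$ as its only surviving candidate. A purely local test on $v$'s own candidate set does not by itself rule this out, so the argument must account for the global effect of the deletion. I expect the decreasing-$ECI{+}ECO$ processing order, together with the bound $ECI(v)+ECO(v)\le ECI(u)+ECO(u)+thr_{CS}/\alpha$ satisfied by every candidate $v\in MC(u)$, to control how an accessibility witness can migrate along chains of removals; and I would verify that the DVS check detailed in Appendix~\ref{A:dvs} enforces the global invariant rather than only the per-vertex condition. Pinning this down is the step I would treat most carefully.
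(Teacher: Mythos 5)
Your overall decomposition is sound: reducing accessibility to (i) every vertex retaining a surviving MP candidate and (ii) the surviving subgraph preserving reachability via Lemma~\ref{proof3}, with the base case $u\in MC(u)$ justified by the $SCS$ threshold filter, all matches what the paper needs. But the step you yourself flag as the crux --- ruling out that removing a vertex $v$ strands an earlier-removed vertex $w$ whose only surviving candidate was $v$ --- is left open, and the mechanism you conjecture for closing it is not the one the algorithm uses. The decreasing-$ECI{+}ECO$ processing order and the bound relating $ECI(v)+ECO(v)$ to $ECI(u)+ECO(u)+thr_{CS}/\alpha$ play no role in the accessibility guarantee; they are only a quality heuristic for deciding which removals to attempt first. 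What actually closes the inductive step is an explicit reservation set $V_{re}$ maintained by Algorithm~\ref{algo:DVS}: whenever a vertex $u$ is added to $V_{de}$, \emph{all} vertices of $MC(u)$ are added to $V_{re}$, and the algorithm refuses to remove any vertex that lies in $V_{re}$ or whose candidate set is already entirely contained in $V_{de}$. Since the pre-removal check guarantees $MC(u)\not\subseteq V_{de}$ at the moment $u$ is removed, at least one candidate of $u$ is non-defective at that point, it is immediately reserved, and reserved vertices are never removed afterwards --- so no accessibility witness can be silently destroyed later. This reservation bookkeeping is essentially the entire content of the paper's proof: removed vertices are served via a reserved surviving candidate, and non-removed vertices are served via themselves.

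So the gap is genuine: your proposal correctly locates where the argument must go, but does not contain the argument, and the global invariant you hope the ordering enforces would not hold under a purely local test --- it holds only because the algorithm explicitly protects the candidates of every removed vertex from future removal.
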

\begin{proof}
	For details, please refer to Appendix 
	\ref{A:proof4}.
\end{proof}

\begin{table}[t!]
	\centering
	{\small \scriptsize
		\caption{\small $ECO$ and $ECI$ with \emph{3} nearby vertices, and $MC(\cdot)$}\vspace{-2ex}
		\label{table:EC}
		\begin{tabular}{c|cccccc}
			\hline
			{ID}&$A$&$B$&$C$&$D$&$E$&$F$\\
			\hline
			$ECO(\cdot)$ &5.33&4.67&5.33&5&6.67&$\infty$ \\
			
			$ECI(\cdot)$ &5.33&3.67&5.67&$\infty$&6.33&6.33 \\
			
			$MC(\cdot)$ &$\{A,B\}$&$\{B\}$&$\{B,C\}$&$\{A,D,E\}$&$\{E\}$&$\{C,E,F\}$ \\
			\hline
		\end{tabular}
	}\vspace{-3ex}
\end{table}

\subsection{Core Vertices}
Some convenient vertices take over the major traffic flow of a city network, such as the vertices along the highways. People in nearby places usually drive to the highway and go along it, then finally turn to small road for the destination. 
By assigning more route segments along highways, drivers can finish requests fast along highways in most time. 
With MPs, we can efficiently organize more routes along highways. Thus, these fast and request-concentrated vertices are more frequently used in solving the 
MORP problem. Finding such a group of convenient vertices and optimizing their related operations are pretty beneficial.

In this subsection, we select \emph{core vertices} $V_{co}$ as a backbone of the whole graph. As shown in Figure \ref{fig:k-skip}, $k$-skip cover \cite{DBLP:conf/sigmod/TaoSP11} is a skeleton vertex set, which coincides with our demand on finding convenient vertices and accelerating their related computations. Thus, we define the Core Vertices Selection Problem to find core vertices, which have a good coverage of MP with embedded $k$-skip cover. We first formulate the coverage of MP as an integer linear program, which can be solved with constant approximation ratio. Then, using an evaluational cost from the first step, we further complement its output and make it a $k$-skip cover as well.

First, we define the Core Vertices Selection Problem:
\begin{definition}
	Given the union of vertices $V=V_c\cup V_p$ and car graph $G_c$ with its selected defective vertices $V_{de}$, the \emph{Core Vertices Selection Problem} is to find a set of core vertices $V_{co}\subseteq V-V_{de}$ with the minimum size $\abs{V_{co}}$, such that
	
	\textbf{(i)} $V_{co}$ is a $k$-skip cover of the updated graph with $V_{de}$ and adjacent edges removed from $G_c$.
	
	\textbf{(ii)} Proportion factor $\epsilon$ of vertices in passenger vertex set $V_p$ has at least one 
	vertex $u\in V_{co}$ as its MP candidate.
\end{definition}

For attribute \textbf{(i)}, the concept of the \textbf{\bm{$k$}-skip cover} is introduced in Section~\ref{sec:background}. 
Both of \cite{DBLP:conf/sigmod/TaoSP11,DBLP:journals/pvldb/FunkeNS14} tend to 
\label{rw:r1-11}{ keep the 
	vertices that are components of more shortest paths} { and remove the ones that belong to fewer shortest paths}. 
They simply rank their vertices according to the degree, which is pretty coarse
as road network has very low vertex degree. Instead, we use a cost which is evaluated from the internal connections between vertices for ranking, aiming at comparing vertex importances more quantitatively. 
Attributes \textbf{(ii)} assures that most of the requests can be assigned with 
core vertices as MPs. So the optimization for core vertices can benefit more assignments. 
If a set of core vertices is valid to represent the graph skeleton, the smaller its size is, the more efficient its inner operations are after further preprocessing. So, we want to find a $V_{co}$ which satisfies attribute  \textbf{(i) (ii)} with minimal size.

Now, to satisfy the attribute 
\textbf{(ii)}, we can formulate an equivalent interger linear program. The detailed mapping steps include: (i) constructing candidate serving set $MS$ to 
indicate the set of servable vertices for each vertex {\scriptsize$u\in 
	V-V_{de}$}, that is, for any pair of vertex {\scriptsize$u,v\in V$, $u\in 
	MC(v)$} \emph{iff} {\scriptsize$v\in MS(u)$}. 
As an inverse-function relationship of $MC$, we can convert the MP candidate set by 
scanning each vertex {\scriptsize$u\in V$} and for each {\scriptsize$v\in 
	MC(u)$}, add $u$ to {\scriptsize$MS(v)$}. This process is linear; 
(ii) the universe {\scriptsize$\mathcal{U}$} is defined to be the set of all the 
passenger-accessible-vertices {\scriptsize$V_p$}; (iii) find the smallest sub-collection of 
vertices {\scriptsize$\{u_1,u_2,\cdots\}\subseteq V-V_{de}$} that the union of their serving 
sets covers at least $\epsilon$ of the universe, that is, 
{\scriptsize$\abs{\cup_{k}MS(u_k)}\geq \epsilon\cdot\abs{V_p}$}. 

Mathematically, we have the following integer linear program (ILP):

{\small 
	\begin{equation}
		\begin{aligned}
			{\min\text{\qquad}} & {\sum_{u \in V-V_{de}} \delta_u}&{} \\ 
			{\text { s.t.\qquad}} & {\phi_v+\sum_{u:v\in MS(u)}{\delta_u}\geq1}  & 
			{\forall v \in V_p} \\ 
			{}& {\sum_{v\in V_p}{\phi_v\leq(1-\epsilon)\abs{V_p}}}  & {} \\ 
			{}&{\delta_u\in\left\{0,1\right\}} &{\forall u\in V-V_{de},}\\
			{}&{\phi_v\in\left\{0,1\right\}} &{\forall v\in V_p,}
		\end{aligned}
	\end{equation}
}

\noindent where $\delta_u=1$ iff vertex $u$ is chosen to serve its candidate serving set $MS(u)$. The binary variable $\phi_v=1$ iff vertex $v$ is not served by any selected vertex.

The partial set cover problem is an NPC problem. Luckily, in our setting, each 
element (vertex) is in at most $nc_m$ candidate serving sets as the times of 
adding a vertex $u$ to some $MS(\cdot)$ equals to the number of $u$'s MP 
candidates {\scriptsize$\abs{MC(u)}\leq nc_m$}. Such a case is called a 
low-frequency system. There exists a solution to approximate the optimum within 
a factor $nc_m$ using LP relaxation in polynomial time 
\cite{DBLP:journals/jal/GandhiKS04}. In detail,
the corresponding LP relaxation can be derived by substituting the constraints 
$\delta_u\in\left\{0,1\right\}$ by $\delta_u\geq0, \forall u\in V-V_{de}$ and 
$\phi_v\in\left\{0,1\right\}$ by $\phi_v\geq0, \forall v\in V_p$. Then the 
problem is transferred to a linear Program $LP$. After deriving the dual LP of 
it, \cite{DBLP:journals/jal/GandhiKS04} iteratively chooses each set to be the 
highest cost set and obtains a feasible cover with the primal-dual stage. This step generates a cost $cost_u$ for each set. Intuitively, in our setting, this cost $cost_u$ can be treated as how bad the ``quantity'' each vertex $u$ with its serving set $MS(u)$ to be included in the final output. The higher its value, the worse a vertex is.
Finally, it chooses the solution with minimum cost.

However, to guarantee attribute (\textbf{i}) simultaneously, we cannot apply this algorithm directly to our problem. Both of the two attributes require the set of chosen vertices to have a good ``coverage'' of other vertices, that is, each of them can serve many nearby vertices. Motivated by this observation, we first use our equivalent in/out cost $ECI/ECO$ as a coarse estimation in the solution of \cite{DBLP:journals/jal/GandhiKS04}. Then we use the $cost_u$ of \cite{DBLP:journals/jal/GandhiKS04} as the ranking for $k$-skip cover completion. For the detail of our \emph{CVS} algorithm and complexity analysis, please refer to Appendix 
\ref{A:CVS}.

We show that the size of $V_{de}$ is bounded {with the following lemma} and present an example below.
\begin{lemma}\vspace{-1ex}
	\label{proof5}
	Assume that we have $N$ vertices in total, with $M$ set as optimal solution for the attribute \emph{(ii)}, the upper bound of the size of core vertex set is $\sigma(k)=\max(\frac{N}{k}log\frac{N}{k}, nc_m\cdot M)$.
\end{lemma}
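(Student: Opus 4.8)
The plan is to bound the two phases of the CVS algorithm separately and then argue that the larger of the two governs the final size of $V_{co}$. Recall that the algorithm first solves the partial set cover that encodes attribute \textbf{(ii)} and then completes the resulting vertex set into a $k$-skip cover to meet attribute \textbf{(i)}. Accordingly I would split the bound on $|V_{co}|$ into a coverage term arising from attribute \textbf{(ii)} and a skeleton term arising from attribute \textbf{(i)}, prove each, and then reconcile them.

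For attribute \textbf{(ii)}: every vertex $v\in V_p$ appears in at most $nc_m$ of the candidate serving sets $MS(\cdot)$, since a vertex is added to $MS(\cdot)$ once per MP candidate and $|MC(v)|\le nc_m$. Hence the ILP is a low-frequency partial set cover instance with frequency at most $nc_m$. Invoking the LP-relaxation / primal--dual guarantee of Gandhi et al.~\cite{DBLP:journals/jal/GandhiKS04}, the number of vertices selected to cover $\epsilon|V_p|$ is within a factor $nc_m$ of the optimum $M$, hence at most $nc_m\cdot M$. For attribute \textbf{(i)}: the completion step builds a $k$-skip cover of the graph obtained after removing $V_{de}$, using the per-vertex cost $cost_u$ merely as a selection ranking. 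By the construction of Tao et al.~\cite{DBLP:conf/sigmod/TaoSP11}, a $k$-skip cover over a graph on at most $N$ vertices has size $O(\frac{N}{k}\log\frac{N}{k})$, independently of the ranking used; so the skeleton requirement alone never forces more than $\frac{N}{k}\log\frac{N}{k}$ vertices.

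Finally I would combine the two bounds. Since enlarging a $k$-skip cover preserves the covering property (adding vertices can only keep hitting every $k$-vertex shortest path), the union of the attribute-\textbf{(ii)} set-cover solution with any $k$-skip cover is itself a valid $V_{co}$ satisfying both attributes, giving the easy bound $|V_{co}|\le nc_m\cdot M + O(\frac{N}{k}\log\frac{N}{k})$. The subtle point, and the step I expect to be the main obstacle, is to sharpen this sum into the stated maximum. The key observation is that the completion reuses the already-selected convenient vertices rather than adding fresh ones: the set-cover solution consists of high-coverage, low-$cost_u$ vertices, which are exactly the vertices a $k$-skip cover construction favors, so most of them already lie on the skeleton. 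Concretely, when $nc_m\cdot M \ge \frac{N}{k}\log\frac{N}{k}$ these central vertices already hit every $k$-vertex shortest path, so no genuine completion is required and $|V_{co}| \le nc_m\cdot M$; otherwise the completion dominates and $|V_{co}| = O(\frac{N}{k}\log\frac{N}{k})$.

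Taking the larger of the two cases then yields $\sigma(k) = \max(\frac{N}{k}\log\frac{N}{k},\, nc_m\cdot M)$. The routine parts are the two citations-driven bounds; the part demanding genuine care is turning the informal ``reuse'' argument into a rigorous claim that the completion does not add vertices disjoint from the set-cover solution, so that the final size collapses to the maximum instead of the sum.
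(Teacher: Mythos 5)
Your two ingredient bounds are the right ones, and the first is justified exactly as the paper does it: the frequency of each element in the partial set cover instance is at most $nc_m$, so the primal--dual algorithm of Gandhi et al.\ returns $V'_{co}$ with $\abs{V'_{co}}\leq nc_m\cdot M$. The gap is in the combination step, which you correctly flag as the main obstacle but do not actually close. Your argument that the max (rather than the sum) holds rests on the claim that when $nc_m\cdot M\geq\frac{N}{k}\log\frac{N}{k}$ the set-cover vertices ``already hit every $k$-vertex shortest path.'' That does not follow: $\abs{V'_{co}}$ can be far below $\frac{N}{k}\log\frac{N}{k}$ even in that regime, and being a good cover for walking-reachability ($MS$ sets) says nothing about hitting every length-$k$ shortest driving path. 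Likewise, your assertion that the skeleton phase ``never forces more than $\frac{N}{k}\log\frac{N}{k}$ vertices, independently of the ranking'' is not automatic for the procedure at hand: a locally minimal cover reached by greedy removal could in principle be much larger than the minimum unless one supplies an extra fact.

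That extra fact is precisely what the paper's proof hinges on, and it is missing from your proposal: by Tao et al., \emph{any} subset of $V$ of size at least $\frac{N}{k}\log\frac{N}{k}$ is a $k$-skip cover. The CVS algorithm is also not a bottom-up completion as you describe (taking the union of $V'_{co}$ with a separately built cover); it initializes $V_{co}=V-V_{de}$ and greedily \emph{removes} vertices outside $V'_{co}$ whenever the remainder stays a $k$-skip cover. The paper then argues by contradiction in each of the two cases: if after this removal phase $\abs{V_{co}}>\max(\frac{N}{k}\log\frac{N}{k},\,nc_m\cdot M)$, then since $\abs{V'_{co}}\leq nc_m\cdot M$ some vertex outside $V'_{co}$ remains, and removing it leaves a set of size at least $\frac{N}{k}\log\frac{N}{k}$, which is still a $k$-skip cover by the fact above --- so the removal phase could not have terminated. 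Step (3) only shrinks the set further while preserving both attributes. This is what collapses your additive bound $nc_m\cdot M+O(\frac{N}{k}\log\frac{N}{k})$ to the stated maximum; the ``reuse of central vertices'' intuition is not needed and would not suffice on its own.
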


\begin{proof}
	For details, please refer to Appendix
	\ref{A:proof5}.
\end{proof}

\begin{example}
	Let us continue using the setting in Example \ref{ex:graph}. After selecting $F$ and $D$ to $V_{de}$, we select the core vertices among the rest of vertices. We set $\epsilon=80\%$ to guarantee that no fewer than $\epsilon\abs{V}=4.8$ vertices are covered. $MS(\cdot)$ and the result of partial set cover are shown in Table~\ref{table:MS}. The final cover is {\scriptsize$V'_{co}=\{B,E\}$} with cost 2. To maintain a $2$-skip cover, we initialize {\scriptsize$V_{co}=\{A,B,C,E\}$} and iteratively check each vertex {\scriptsize$u\in V-V_{de}-V'_{co}=\{A,C\}$}. As none of these removals violates the 2-skip cover, both of them are removed from $V_{co}$. Removing $B$ or $E$ violates attribute (ii) and we finally output {\scriptsize$V_{co}=\{B,E\}$}.
\end{example}

\begin{table}[t!] \vspace{-2ex}
	\centering
	{\small \scriptsize
		\caption{\small Candidate serving set and partial set covers.}\vspace{-2ex}
		\label{table:MS}
		\begin{tabular}{c|cccccc}
			\hline
			{ID}&$A$&$B$&$C$&$E$\\
			\hline
			$MS(\cdot)$ &$\{A,D\}$&$\{A,B,C\}$&$\{C,F\}$&$\{D,E,F\}$ \\
			Partial set cover &$None$&$None$&$\{C,A,B\}$&$\{E,B\}$ \\
			Cost &$\infty$&$\infty$&$3$&$2$ \\
			\hline
		\end{tabular}
	}\vspace{-2ex}
\end{table}

Finally, as $V_{co}$, $V_{su}$, and $V_{de}$ is a partition of $V$,  $V_{su}=V-V_{co}-V_{de}$. \vspace{1ex}

\subsection{Construction of HMPO {color{red} \label{rw:r1-10}Graph} and Fast Query}
\label{subs:cons}
After obtaining the three levels of vertices, we use them to construct the 
hierarchical graph $G_h(V, E_h)$. In this subsection, we first concentrate on the 
formulation of edge $E_h$. Then, we show that how to compute shortest path queries fast on $G_h$.

\textbf{HMPO graph construction.} First, for the defective vertices, we ``discard'' them so there is no edge for 
drivers to come nor leave them. We get an updated car graph $G_{c'}$ by removing $V_{de}$ and its adjacent vertices from $G_c$ 

Recall that $V_{co}$ is a $k$-skip cover of $G_{c'}$. Based on that, all the shortest distance queries can be answered by $G^*$ 
efficiently. 

Core vertices serve as the skeleton of the original road network. Borrowing the definition in the previous work \cite{DBLP:conf/sigmod/TaoSP11}, for each $u\in V_{co}$, one can find its { \label{rw:r1-12}\emph{$k$-skip neighbors}} and build \emph{super-edges}, leading to a new graph for fast query. To be more specific, one can construct a graph $G^*=(V_{co}, 
E_{cc})$, which has 2 vertex sets $V_{co}$ and $V_{su}$ and 3 edge sets $E_{cs}$, $E_{sc}$, and $E_{ss}$. Any query result between these core vertices on the new graph is the same as that on the original graph. Query start from or aim at sub-level vertices can be answered by temporally extending the graph with super-edges between sub-level vertex and its nearby core vertices. 

Here we summarize the construction steps. (1) Instead of the $k$-skip neighbors of $u$, $N_k(u)$, we can find a superset $M_k(u)$ of $N_k(u)$ efficiently according to \cite{DBLP:conf/sigmod/TaoSP11}. (2) We build super-edge, which is weighted by the shortest path distance between two vertices, from $u\in V_{co}$ to each $v\in M_k(u)$. Add these super-edges between core vertices to $E_{cc}$. (3) For $u\in V_{su}$, we find its $M_k(u)$ on $G-V_{de}$ and $M_k^r(u)$ on the reversed graph. The super-edges built to $M_k(u)$ are stored into edge set $E_{sc}$, each represents a path from a sub-level vertex to a core vertex. Similarly, the super-edges built to $M_k^r(u)$ are stored into edge set $E_{cs}$ for core-to-sub-level vertex pairs. (4) In the middle of finding $M_k(u)$ for each $u\in V_{su}$, every $v\in V_{su}$ sharing a shortest path with $u$ without core vertex generates a sub-to-sub level super-edge, added to $E_{ss}$.

\textbf{Fast queries.} We can calculate the exact $k$-shortest path using the three edge sets \cite{DBLP:conf/sigmod/TaoSP11}. Consider the following cases: \textbf{(i)} $u,v\in V_{co}$. We simply return the query result on graph $G^*=(V_{co}, E_{cc})$; \textbf{(ii)} $u,v\in V_{sub}$. First we check the super-edges of $u$ in $E_{ss}$. If $u$ can reach $v$ directly, return the weight of super-edge. Otherwise, follow the general cases in \textbf{(iii, iv)}; \textbf{(iii)} $u\in V_{co}$. We need an additional step to add $u$ with its super-edges $(u,\cdot)\in E_{sc}$ into $G^*$; \textbf{(iv)} $v\in V_{co}$. Add $v$ with its super-edges $(\cdot,v)\in E_{cs}$ into $G^*$ and return the query result. Its correctness has been proved. For details, please refer to~\cite{DBLP:conf/sigmod/TaoSP11}. 

In summary, our final output for HMPO graph $G_h(V,E_h)$ is consists of: $V_{co}$, which forms the graph $G^*$ for query together with super-edges $E_{cc}$; $V_{su}$, of which each vertex is added to $G^*$ with edge sets $E_{sc}, E_{cs}, E_{ss}$ temporally for query; $V_{de}$, which is only called at the route planning stage with $MC$. $E_h=E_{cc}\cup E_{cs}\cup E_{sc}\cup E_{ss}$ and $V=V_{co}\cup V_{su}\cup V_{de}$.

\begin{figure}[t!]\centering\vspace{-2ex}
	\centering
	\includegraphics[width=0.7\linewidth]{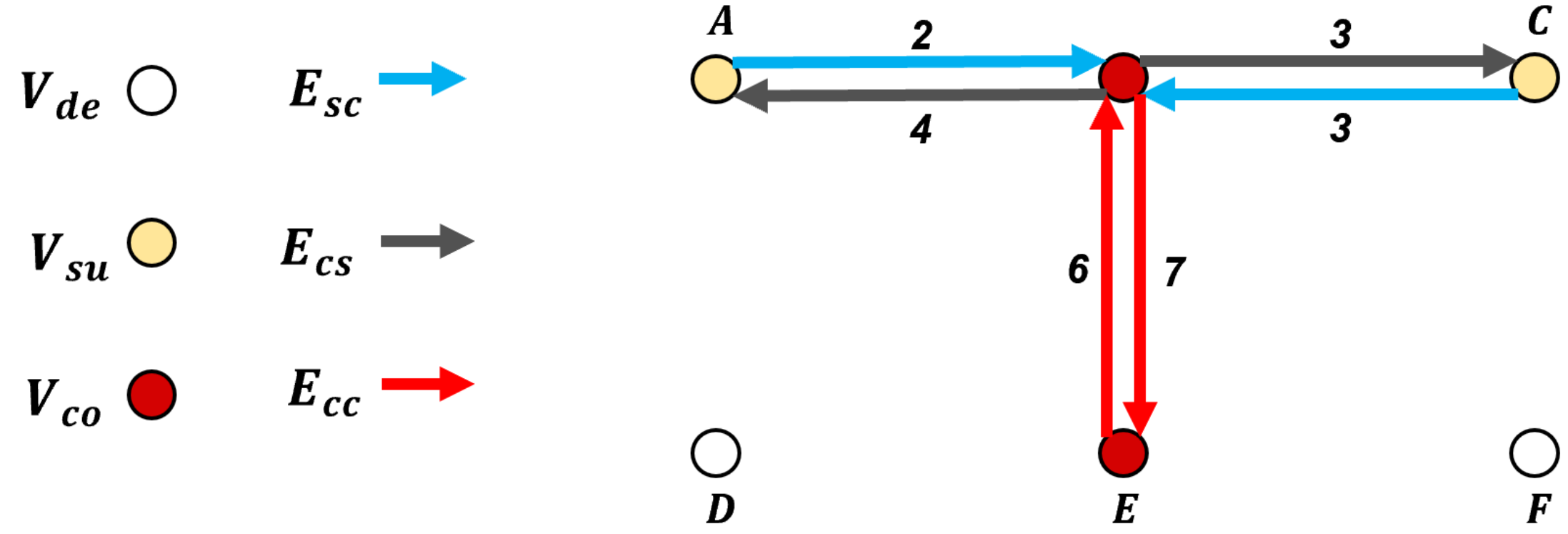}\vspace{-1ex}
	\caption{\small HMPO Graph}\label{fig:final}\vspace{-4ex}
\end{figure}

\begin{example}
	Following the setting in Example \ref{ex:graph}, we construct the $G_h$ in Figure~\ref{fig:final}. $V_{de}=\{D,F\}$ are marked in white and their edges are removed. $V_{co}=\{B,E\}$ are marked in dark red. Sub-level vertices $\{A,C\}$ are marked in light yellow. Two core-to-core edges between them are added to $E_{cc}$ and marked in red. Two edges from $B$ to $A$ and $C$ are added to $E_{cs}$ and marked in grey. Two edges from $A$ and $C$ to $B$ are added to $E_{sc}$ and marked in blue. 
\end{example}

\section{HMPO Graph Based Insertor}
\label{sec:hmpo_framework}
\begin{figure*}[t!]\centering
	\scalebox{0.37}[0.37]{\includegraphics{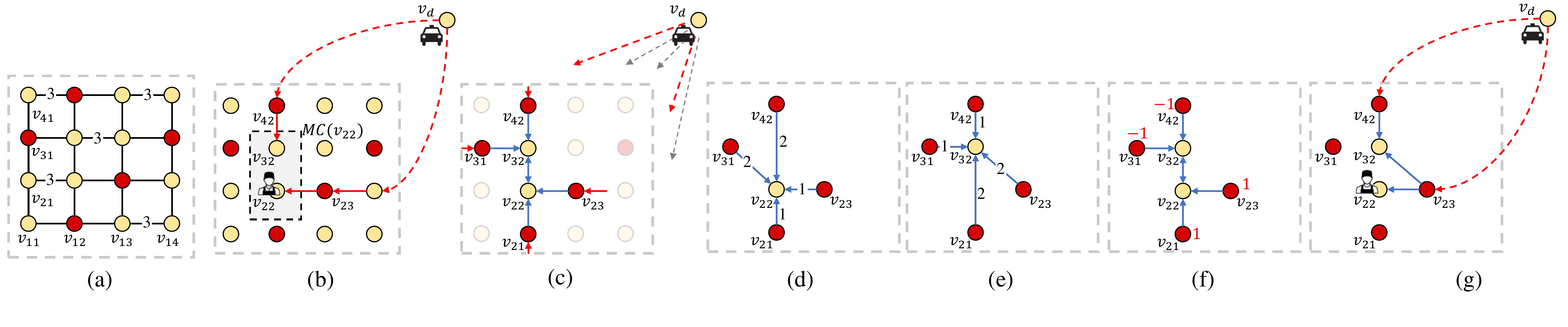}}\vspace{-5ex}
	\caption{\small An Example for the Set Maximum Differrence for Meeting Points}
	\vspace{-4ex}
	\label{fig:smdb}
\end{figure*}
With the HMPO Graph, we introduce a new algorithm to boost the insertion phase for the solution of MORP problem.

Requests are served one-by-one, where the MP candidates of their origins and destinations are inserted into drivers' routes. With limited walking distance, the MP candidates $MC(u)$
for each $u\in V_p$ are close to each other.
One interesting problem is, if we fail to insert a candidate $v\in MC(u)$, do
the rest $MC(u)-\{v\}$ help? To answer it, we define a new distance
correlation, which bounds the time saving of switching to any vertex in
$MC(u)-\{v\}$. If deducting the saving still cannot meet the time limitation,
we can safely prune the whole set. However, to derive it on the traditional
graph, we need all the distances from $\abs{V}$ sources even though vertices in
$MC(u)$ are close to each other.
Fortunately, we find that based on the $k$-skip structure in our hierarchical
graph, an upper bound can be derived effectively. As shown in Figure \ref{fig:k-skip},
$k$-skip cover ``cuts off'' the shortest paths using the core vertices, which can
be used as anchors to compare the difference between the paths from the same
source to a pair of vertices. Most of the related works
only use k-skip cover for faster query and vertex clustering
\cite{cheng2017utility}. To the best of our knowledge, this is the first
paper to explore this property of k-skip cover graph for task assignment.

We demonstrate an example through this section in Figure \ref{fig:smdb}. The network and edge weights are consistent with Figure \ref{fig:k-skip}(a), where red vertices belong to core vertices and yellow ones belong to sub-level vertices. In Figure \ref{fig:smdb}(b), a rider is waiting to be picked up at origin $v_{22}$ and a driver is at $v_d$. The MP candidates for $v_{22}$ are $\left\{v_{22},v_{32}\right\}$. We want to know that if driver cannot reach MP $v_{22}$/$v_{23}$ in time, can we prune the other MP.

In the following subsections, we first extract the important distance correlations in subsection~\ref{subs:MD}, then devise an effective algorithm \emph{SMDB}  for MORP problem in subsection~\ref{subs:SMDB}.

\subsection{Maximum Difference for Reaching Distances}
\label{subs:MD}
Insertion for MPs induces many shortest path queries, which are from the \emph{same} source to closely-located MPs. Here, we bound the distance differences between these queries. We first give related definitions for the bound.
Given a vertex $u$ and its MP candidates $MC(u)$, we have \textbf{(i)} one \emph{checker} vertex $Ch(u)\in MC(u)$, and \textbf{(ii)} the distance bound, named \emph{set maximum difference} ($SMD(u)$). We show that after computing the query of one long path from a vertex $v$ to checker $Ch(u)$, (i.e., $SP_c(v,Ch(u))$), the distance from $v$ to any other candidate vertices is no shorter than $SP_c(v,Ch(u))-SMD(u)$.

In this subsection, we first introduce the above concepts. Then, as any shortest path is split by $k$-skip cover $V^*$, we can quickly derive the bound locally, where only the subgraph formed by surrounding $V^*$ is needed.

First, we define \emph{maximum difference} $MD(\cdot,\cdot)$ between two vertices as a general distance correlation as follows:
\begin{definition}(Maximum Difference)
	Given a graph {\scriptsize$G(V,E)$}, for each pair of vertices {\scriptsize$v_1,v_2\in V$}, the maximum difference for $v_1$ and $v_2$ is: {\small$$MD(v_1,v_2) = \max_{v_3\in V}(SP_c(v_3,v_1)-SP_c(v_3,v_2))$$}\vspace{-3ex}
\end{definition}\vspace{-1ex}

In general, starting from any source $v\in V$, the largest difference between distances to reach two given vertices is denoted as their $MD$.

When we insert a set of MP candidates of source or destination into a route, it is more reasonable to use a set-based relationship for maximum difference rather than pairwise $MD$. To be more specific, for the MP candidates $MC(u)$ of vertex $u$, by defining a checker vertex $Ch(u)\in MC(u)$, we want to know the Maximum Differences between checker $Ch(u)$ and all the vertices in the candidate set. The upper bound of these $MD$s is defined as \emph{set maximum difference} ($SMD$). That is,
{\small$$SMD(u)=\max_{v\in MC(u)}MD(Ch(u),v).$$}\vspace{-1ex}

In the following part, we show that for a modified $SMD$ with narrower source vertices, which are not sub-level vertices that have super-edges to {\small$v\in V_M$}, the $k$-skip structure can help us get $SMD$ in $O(\abs{V})$ time.
Mathematically, we define {\scriptsize$$MD^*(u_1,u_2, V_M) = \max_{v:(v,u)\notin E_{sc}\cup E_{ss}, u\in V_M}(SP_c(v,u_1)-SP_c(v,u_2)),$$}\vspace{-1ex}
and 
\begin{equation}
	\label{SMDeq}
	\scriptsize{SMD(u)=\max_{v\in MC(u)}MD^*(Ch(u),v,MC(u)),}
\end{equation}
where $E_{sc}$ and $E_{ss}$ are super-edge sets in Section~\ref{subs:cons}.
\vspace{2ex}

With a $k$-skip cover $V^*$, any shortest paths of length $>k$ ended at a vertex $u$ must reach its surrounding $V^*$. Our core vertices $V_{co}$, also a $k$-skip cover, further builds super edges between $u$ and these surrounding $V_{co}$. As shown in Figure \ref{fig:smdb}(c), any paths ended at $v_{22}$ and $v_{32}$ need reach $v_{31},v_{42},v_{21}$, and $v_{23}$ first. They are also the only 4 vertices having super edges with $v_{22}$ and $v_{32}$ in the HMPO graph, e.g., Figure \ref{fig:smdb}(d). We check super edges and get the surrounding $V_{co}$ for each vertex in an MP candidate set $MC(u)$ and define their union as {\scriptsize$VC(u)$}. In Figure \ref{fig:smdb}, $VC(v_{22})=\left\{v_{31},v_{42},v_{21}, v_{23}\right\}$. As vertices in {\scriptsize$MC(\cdot)$} are close to each other, {\scriptsize$VC(\cdot)$} is a small set.

Note that the distance from any $vc\in VC(u)$ to any $v\in MC(u)$ is already recorded in HMPO graph, which does not need to compute. We denote its cost as $CC(vc,v)$. By fixing a vertex $w\in MC(u)$ as checker vertex, we can compute the following Local Maximum Difference $LMD$ quickly:
{\small\vspace{1.5ex}$$LMD(w)=\max_{v\in MC(u), vc\in VC(u)}(CC(vc,w)-CC(vc,v)),$$}\vspace{-2ex}

Now, we have the following Lemma:

\begin{lemma}
	\label{SMD}
	If we use $w\in MC(u)$ as the checker vertex, the Local Maximum Difference $LMD(w)$ is a valid Set Maximum Difference $SMD(u)$. Mathematically, if a vertex $l_c\in V$ has no super-edge {\small$(l_c,v)\in E_{ss}\cup E_{sc}$} towards any $v\in MC(u)$, then {\small$SP_c(l_c, w)-SP_c(l_c, v)\leq LMD(w)$}.
\end{lemma}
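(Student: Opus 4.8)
The plan is to exploit the ``cut'' property that the core vertices surrounding $MC(u)$ fence off every distant source. The idea is that each reaching distance $SP_c(l_c,\cdot)$ can be factored through a single surrounding core vertex $vc\in VC(u)$, after which the common prefix $SP_c(l_c,vc)$ cancels when we subtract the two distances. This collapses the global maximum over all sources $l_c$ (defining $MD^*$ and hence $SMD$) into the purely local quantity $LMD(w)$, which reads off only the pre-recorded costs $CC(\cdot,\cdot)$ on the small set $VC(u)\times MC(u)$.

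Concretely, I would fix an arbitrary $v\in MC(u)$ and an arbitrary source $l_c$ obeying the hypothesis $(l_c,v)\notin E_{ss}\cup E_{sc}$, and examine a shortest path $P$ from $l_c$ to $v$ in $G_c$. The key structural claim — which I expect to be the main obstacle — is that $P$ meets $VC(u)$, i.e. some $vc\in VC(u)$ lies on $P$ with the $vc$-to-$v$ subpath free of other core vertices. This is exactly where the $k$-skip cover property of $V_{co}$ and the super-edge construction of Section~\ref{subs:cons} are needed: if $P$ contained no core vertex at all, then $l_c$ would be a $k$-skip neighbor of $v$ and would carry a direct super-edge to $v$; for a sub-level $l_c$ this super-edge sits in $E_{ss}\cup E_{sc}$, contradicting the hypothesis, whereas for a core $l_c$ it forces $l_c\in VC(u)$ so we may simply take $vc=l_c$. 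Otherwise $P$ does cross $V_{co}$, and its last core vertex before $v$ is a reverse $k$-skip neighbor ($vc\in M_k^r(v)$), hence a surrounding core vertex of $v$ recorded by a super-edge, so $vc\in VC(u)$. This case analysis is the only step that relies on the precise semantics of the super-edge sets rather than on elementary distance algebra.

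Once $vc$ is secured, the rest is arithmetic. Because $vc$ lies on a shortest $l_c$-$v$ path, I would split $SP_c(l_c,v)=SP_c(l_c,vc)+CC(vc,v)$, using that the recorded cost equals the true shortest distance $SP_c(vc,v)$. For the checker $w=Ch(u)$, the triangle inequality along $l_c\to vc\to w$ gives $SP_c(l_c,w)\le SP_c(l_c,vc)+CC(vc,w)$. Subtracting the first relation from the second cancels the shared term $SP_c(l_c,vc)$ and yields $SP_c(l_c,w)-SP_c(l_c,v)\le CC(vc,w)-CC(vc,v)$. Since $vc\in VC(u)$ and $v\in MC(u)$, the right-hand side is one of the terms maximized in the definition of $LMD(w)$, so $SP_c(l_c,w)-SP_c(l_c,v)\le LMD(w)$.

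Finally, as $l_c$ and $v$ were arbitrary subject only to the no-super-edge condition, taking the maximum over them shows $LMD(w)$ dominates $MD^*(w,v,MC(u))$ for every $v\in MC(u)$, so by the definition in \eqref{SMDeq} it is a valid $SMD(u)$. I would close by remarking that the whole bound uses only the costs $CC(\cdot,\cdot)$ already stored in the HMPO graph on the small subgraph $VC(u)\times MC(u)$, which is precisely what makes $LMD(w)$ obtainable without issuing any new shortest-path query.
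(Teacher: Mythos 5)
Your proposal is correct and follows essentially the same route as the paper's proof: both factor the shortest path from $l_c$ to $v$ through its last core vertex $vc$, observe that $vc$ is a $k$-skip neighbor of $v$ (hence $vc\in VC(u)$ with recorded costs $CC$), and cancel the common prefix $SP_c(l_c,vc)$ via the triangle inequality to bound the difference by $CC(vc,w)-CC(vc,v)\leq LMD(w)$. Your explicit treatment of the degenerate case where the path contains no core vertex (forcing a super-edge or $l_c\in VC(u)$) is a small point of extra care that the paper's contradiction argument leaves implicit, but it does not change the substance of the argument.
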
\vspace{-2.5ex}

\begin{proof}\vspace{-1ex}
	We prove it by contradiction. Given a vertex {\scriptsize$u\in V$} and its
	{\scriptsize$MC(u)$} with outputs {\scriptsize$Ch(u)=w$} and
	{\scriptsize$SMD(u)$}, assume that {\scriptsize$\exists l_c\in V$} has no
	super-edges in {\scriptsize$E_{ss}\cup E_{sc}$} to {\scriptsize$MC(u)$},
	{\scriptsize$\exists v\in MC(u)$} which satisfies {\scriptsize$SP_c(l_c,
		w)-SP_c(l_c, v)>SMD(u)$}. We denote the last core vertex in the path
	from $l_c$ to $v$ as $vc$. Here we have
	{\scriptsize\begin{align}
			&LMD(w)<SP_c(l_c, w)-SP_c(l_c, v)\notag\\
			\leq &(SP_c(l_c,vc)+SP_c(vc,w))-(SP_c(l_c, vc)+SP_c(vc,
			v))\notag\\
			=&SP_c(vc,w)-SP_c(vc, v)\notag
	\end{align}}\vspace{-2ex}
	
	As $vc$ is the last core vertex along the path, $vc$ must be a $k$-skip neighbors of $v\in MC(u)$. Thus, we have: \vspace{-1ex}
	{\scriptsize\begin{align}
			&SP_c(vc,w)-SP_c(vc, v)\notag\\
			=&CC(vc,w)-CC(vc,v)\leq LMD(w)\notag
	\end{align}}\vspace{-4ex}
	
	Contradiction. The original lemma is proved.
\end{proof} \vspace{-2ex}

For example, in Figure \ref{fig:smdb}(f), if we choose $v_{32}$ and check vertex, we can calculate {\small$CC(v_{31},v_{32})-CC(v_{31},v_{22})=CC(v_{42},v_{32})-CC(v_{42},v_{22})=-1$} and {\small$CC(v_{21},v_{32})-CC(v_{21},v_{22})=CC(v_{23},v_{32})-CC(v_{23},v_{22})=1$. So $LMD(v_{32})=1$} when $vc=v_{21}$ or $v_{23}$.
After we compute the cost from $v_d$ to $v_{32}$ as $SP_c(v_d,v_{32})$, we want to bound the cost from $v_d$ to $v_{22}$ (i.e., $SP_c(v_d,v_{22})$). In Figure \ref{fig:smdb}(g), its last passed core vertex is $v_{23}\in VC$. we have {\small$SP_c(v_d,v_{22})=SP_c(v_d,v_{23})+SP_c(v_{23},v_{22})\geq SP_c(v_d,v_{23})+\left[SP_c(v_{23},v_{32})-LMD(v_{32})\right]\geq SP_c(v_d,v_{32})-LMD(v_{32})$}.

Now we can efficiently find $SMD$ given a checker. Here, we show the detail in our HMPO Graph-based Maximum Difference Generator (HMDG), which finds the {\scriptsize$Ch(\cdot)$} for each {\scriptsize$MC(\cdot)$} that minimize {\scriptsize$SMD(\cdot)$} when calculating $LMD$.

\noindent\textbf{Algorithm sketch}
Lines 3-8 collect all the cost from vertices {\scriptsize$vc\in VC$} to vertices {\scriptsize$v\in MC(u)$}. As {\scriptsize$VC\subseteq V_{co}$}, if  {\scriptsize$v\in V_{co}$}, we check the its super edges from core to core vertices. The costs are stored in the form of dictionary {\scriptsize$CC[vc][v]$}; if {\scriptsize$v\in V_{su}$}, we check its super edges from sub to core vertices instead.

Secondly, we initialize {\scriptsize$LMD(v)\rightarrow SMD$} as a dictionary, which can be further used to choose the checker with minimal $SMD$.
Instead of fixing checker and calculating corresponding $SMD$,
for each {\scriptsize$vc\in VC$}, we find the vertex $v^-$ which has minimal {\scriptsize$CC[vc][v^-]$}
in line 11, that is, among vertices in {\scriptsize$MC(u)$}, $v^-$ is the closest destination
for $vc$. If we use $v$ as checker, the {\scriptsize$SMD$} is the maximal of
{\scriptsize$CC[\cdot][v]-CC[\cdot][v^-]$}, which should be minimized. Lines 12-14
enumerate $MC$ and update the {\scriptsize$LMD[v]$} if a higher maximum difference is
found, that is, {\scriptsize$LMD[v]<CC[vc][v]-CC[vc][v^-]$}. Finally, {\scriptsize$LMD[\cdot]$}
saves the required {\scriptsize$SMD$} for each checker. We find the minimal value of {\scriptsize$LMD$}
with its key $v$ and return {\scriptsize$Ch(u)=v$} and {\scriptsize$SMD=LMD[v]$}.

{\small
	\begin{algorithm}[t!]
		\DontPrintSemicolon
		\KwIn{HMPO graph $G_h=(V,E_h)$, MP candidate sets $MC$.}
		\KwOut{Checker $Ch$ and Set Max Diff $SMD$ for each MP candidate set}
		
		\ForEach{$u\in V$}{
			Build set $VC= \{vc| (vc,v)\in E_{cs},v\in MC(u)\}$\;
			
			Initialize dictionary $CC$ for costs from $VC$ to $MC(u)$\;
			
			\ForEach{$v\in MC(u)$}{
				\If{$v\in V_{co}$}{
					Check the super edges of $v$. For each edge $\in E_{cc}$ from $vc$, record the costs into $CC[vc][v]$\;
				}
				
				\If{$v\in V_{su}$}{
					Check the super edges of $v$. For each edge $\in E_{cs}$ from $vc$, record the costs into $CC[vc][v]$\;
				}
			}
			
			Initialize $LMD[v]=0$ for all $v\in MC(u)$\;
			
			\ForEach{$vc\in VC$}{
				Find the minimal in $CC[vc][\cdot]$, denote the key as $v^-$\;
				
				\ForEach {$v\in MC(u)$}{
					\If{$LMD[v]< CC[vc][v]-CC[vc][v^-]$}{
						$LMD[v]= CC[vc][v]-CC[vc][v^-]$\;
					}
				}
			}
			
			Find the minimal value of $LMD[v]$, return $Ch(u)=v$ and $SMD(u)=LMD[v]$\;
		}
		
		\Return{$Ch$, $SMD$}
		\caption{\small HMPO graph-based Max Difference Generator}
		\label{algo:HMDG}
	\end{algorithm}
}

So our algorithm finds $Ch(u)$ with $SMD(u)$ such that: (i) for any source $vc\in VC$ from nearby core vertices, $\forall v\in MC(u)$, $SP_c(vc, Ch(u))-SP_c(vc, v)\leq SMD(u)$; (ii) $Ch(u)$ is chosen among $MC(u)$ to minimize $SMD(u)$.

\noindent\textbf{Time Complexity}. The time costs of Lines 2 and 3 are $O(nc_m)$. There are $O(nc_m)$ iterations in lines 4-8 and each iteration costs $O(\sigma^*)$, where $\sigma^*$ is the number of super edges a vertex has on average. Line 9 costs $O(nc_m)$ time. Line 11 costs $nc_m$ to find the minimum value and lines 12-14 form $O(nc_m)$ iterations taking $O(1)$ time in each iteration. For the size of $VC$, we borrow the definition $\bar{\sigma}_k$ from \cite{DBLP:conf/sigmod/TaoSP11}, where $\bar{\sigma}_k$ is the average number of k-hop
neighbors of the vertices in $V$. Thus, there are $O(\abs{VC})=O(\bar{\sigma}_k nc_m)$ iterations in lines 10-14. Their total time complexity is $O(\bar{\sigma}_k nc_m^2)$. Line 15 cost $O(nc_m)$ to find the minimum. So the total time complexity of the big loop in lines 1-15 is $O(\abs{V}(nc_m\sigma^*+\bar{\sigma}_k nc_m^2))$ and grows linearly with $\abs{V}$, where both $\sigma^*$ and $\bar{\sigma}_k$ depend on the structure of the road network instead of the size.

Note that, though we only discuss the distance relationships from the same source to different destinations, similar property still hold for queries from different sources to the same destination. We further emphasize that such a relationship is not only applicable for meeting points, but a property of $k$-skip cover. It is useful for many crowdsourcing scenarios, which have shortest path queries with close sources and destinations.

\subsection{SMD-Boost Algorithm}
\label{subs:SMDB}
We use $SMD$ to boost insertion in this subsection.

Recall that whenever we try to insert a request $r_j$ into a route, the rider
should be picked up no later than $tp_j$. 
Considering that we try to insert one pick-up into a fixed position of route. If we use $Ch(s_j)$ as the MP, it is not insertable if the time to pick up rider at $Ch(s_j)$, denoted as $t_c$ here, is later than $tp_j$. The worker needs to pick up rider at least $t_c-tp_j$ earlier. In the last subsection, we find the
maximum time saving for each checker, $SMD$. If $SMD$ of $Ch(s_j)$ is smaller than $t_c-tp_j$, no matter which MP of $s_j$ is used for insertion, we cannot pick up the request by $tp_j$. In such case, there is no need to try other MPs for insertion with the help of $SMD$.

{\small
	\begin{algorithm}[t!]
		\DontPrintSemicolon
		\KwIn{ a driver $w_i$ with route $S_{w_i}$, request $r_j$, MP candidate set
			$MC$, set maximum difference $SMD$, checker set $Ch$, dead vertices $DV$}
		\KwOut{ a  route $S_w^*$ for the driver $w$ and updated $DV$}
		\If{Driver's location $l_i\in DV$}{
			Return $S_{w_i}$ and $DV$ without insertion\;
		}
		
		Generate arriving time $arv[\cdot]$ for $S_{w_i}$\;
		
		Collect all sub-level
		vertices which have super-edges to vertices in $MC(s_j)$ into set $Ne$\;
		
		The largest index to insert pick-up: $id^*=\abs{S_{w_i}}$\;
		
		\ForEach{$v\in S_{w_i}$}{
			\If{$v\in Ne$}{
				Continue\;
			}
			\If{\scriptsize$arv[v]+SP_h(v,Ch(s_j))-SMD(Ch(s_j))\geq tp_j$}{
				\If{$v$=$l_i$}{
					Add $l_i$ to $DV$. Insertion fails and returns Null\;
					
				}
				Record $id^*=idx(v)-1$\;
				Break\;
			}
		}

		Insert $r_j$ with adapted insertion algorithm where insertion indexes of pick-ups larger than $id^*$ are pruned.\;
		
		\Return{$S_w^*$, $DV$}
		\caption{SMDBoost}
		\label{algo:SMDB}
	\end{algorithm}
}

We illustrate our algorithm
\emph{SMDBoost} for the insertion phase in Algorithm~\ref{algo:SMDB}. Note
that we add one more set for pruning, dead vertices $DV$. It means that no
driver $w_i$ with current location $l_i\in DV$ can serve this rider. We
initialize $DV=\emptyset$ for each new request. Assume that we try to insert
rider $r_j$ into the route of driver $w_i$. First, if $l_i\in DV$, we can prune
driver $w_i$. Otherwise, we derive arriving time $arv[\cdot]$ for each route
vertex according to \cite{tong2018unified}. In line 4, all the sub-level
vertices which have super-edges towards $MC(s_j)$ are collected into a set $Ne$, which covers vertices that cannot be pruned. This is because that
distances between these vertices and an MP can be arbitrarily short
through super-edges in $E_{sc}\cup E_{ss}$ and not follow the definition of $SMD$ in Equation~\ref{SMDeq}.

Pruning strategy in lines 5-12 finds the largest index $id^*$ to insert any pick-up in $MC(s_j)$. We initialize $id^*=\abs{S_{w_i}}$. Then we check each vertex $v\in S_{w_i}$ in order.
As the distances from vertices in $Ne$ to some MPs in $MC(s_j)$ can be
arbitrarily short, if $v\in Ne$, it is possible to insert rider after $v$. In
this case, we continue to check the next insertion position in line 8.
For each vertex $v\notin Ne$, it can be viewed as a source for $MC(u)$ subject
to $SMD(u)$. So if {\small$arv[v]+SP_h(v,Ch(s_j))-SMD(Ch(s_j))\geq tp_j$}, according to
Lemma.\ref{SMD}, inserting any MP after $v$ misses the deadline
$tp_j$. Insertion position can only be smaller than index of $v$ denoted as
$idx(v)$, that is, $id^*=idx(v)-1$. A special case is that if inserting after
the driver's current location $l_i$ cannot catch $t_j$, driver $w_i$ and all
the other drivers at $l_c$ currently can not serve $r_j$. So we add $l_i$ into
$DV$ for future pruning.

After the checking phase, we insert $r_j$ without checking indexes after $id^*$
for the pick-up point. The base algorithm adapts the linear insertion algorithm
\cite{tong2018unified} for MPs.

\noindent\textbf{Time Complexity}. Line 3 is a linear operation according to \cite{tong2018unified}. Line 4 costs $O(nc_m)$. There is an $O(\abs{S_{w_i}})$ loop in lines 5-12. All other lines from 1 to 12 cost $O(1)$. As the algorithm in line 13 is linear, the total time complexity is $O(\abs{S_{w_i}})$.

\begin{lemma}
	\label{proof6}
	Pruning Algorithm \ref{algo:SMDB} has no performance loss.
\end{lemma}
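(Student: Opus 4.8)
The plan is to show that \emph{SMDBoost} never discards a feasible insertion, so the minimum-cost route it returns coincides with the one the unpruned linear insertion of \cite{tong2018unified} would return; ``no performance loss'' then follows because every other requirement (capacity, drop-off deadlines, cost minimization) is still handled by the base insertion invoked in line 13. Concretely, I would reduce the claim to two assertions: (a) every pick-up insertion position forbidden by the pruning in lines 5--12 (every index larger than $id^*$) is infeasible because the rider cannot be picked up by $tp_j$; and (b) every driver skipped because $l_i\in DV$ genuinely cannot serve $r_j$.

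For (a), let $v$ be the first route vertex with $v\notin Ne$ at which the test in line 9 fires, so $arv[v]+SP_h(v,Ch(s_j))-SMD(s_j)\geq tp_j$. Since $v\notin Ne$, $v$ carries no super-edge in $E_{ss}\cup E_{sc}$ towards $MC(s_j)$, so Lemma~\ref{SMD} applies and gives $SP_c(v,mp)\geq SP_c(v,Ch(s_j))-SMD(s_j)$ for \emph{every} candidate $mp\in MC(s_j)$ (using $SP_h=SP_c$ by the exactness of HMPO queries from Section~\ref{subs:cons}). Writing the earliest pick-up time when the pick-up is inserted immediately after a route vertex $x$ as $A(x,mp)=arv[x]+SP_c(x,mp)$, this yields $A(v,mp)\geq arv[v]+SP_h(v,Ch(s_j))-SMD(s_j)\geq tp_j$ for all $mp$. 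I would then extend the bound to every \emph{later} insertion position $x$ (with $idx(x)>idx(v)$): because route arrival times are monotone, $arv[x]=arv[v]+\mathrm{route}(v\!\to\!x)\geq arv[v]+SP_c(v,x)$, and the triangle inequality gives $SP_c(x,mp)\geq SP_c(v,mp)-SP_c(v,x)$; adding these, $A(x,mp)\geq arv[v]+SP_c(v,mp)=A(v,mp)\geq tp_j$. Hence no candidate $mp$ is reachable by $tp_j$ from any position after $v$, so every pruned position is infeasible.

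The step I expect to be the main obstacle is exactly this extension, because the pruned positions include those sitting after \emph{later} $Ne$ vertices, where Lemma~\ref{SMD} itself fails (distances to MPs can be arbitrarily short through super-edges). The resolution is that the lower bound is anchored at $v$ and transported through the triangle inequality, never through the later vertex's super-edges: a short hop from a late $Ne$ vertex $x$ to an MP cannot help, since simply reaching $x$ already costs at least $arv[v]+SP_c(v,x)$ and the detour $v\!\to\!x\!\to\!mp$ is no shorter than $v\!\to\!mp$. I would also flag two minor points to pin down: the identity $SP_h=SP_c$ rests on the exact-query property of the HMPO graph, and the boundary case $A(v,mp)=tp_j$ must be read consistently with the ``no later than $tp_j$'' pick-up constraint.

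For (b), observe that the test fires at $v=l_i$ only when even inserting the pick-up immediately after the current location misses $tp_j$; by the monotonicity argument of (a) every later position misses it too, so $w_i$ cannot serve $r_j$ and returning Null is correct. Finally, since $arv[l_i]$ equals the common current time of every driver presently located at $l_i$, and the triggering quantity $arv[l_i]+SP_h(l_i,Ch(s_j))-SMD(s_j)$ depends only on $l_i$ and $s_j$, the same test fires for all such drivers; adding $l_i$ to $DV$ therefore prunes only drivers that provably cannot serve $r_j$, which completes the argument that no feasible, and hence no optimal, insertion is lost.
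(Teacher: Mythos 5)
Your proposal is correct and follows essentially the same route as the paper's own (much terser) proof: the paper's ``inserting the pick-up at a later index is starting from $v$ with some detour, which costs more'' is exactly your triangle-inequality extension of the Lemma~\ref{SMD} bound anchored at the first non-$Ne$ vertex $v$ where the test fires. Your write-up additionally makes explicit the two points the paper leaves implicit --- why later $Ne$ vertices cannot rescue feasibility, and why the $DV$ pruning is safe for all drivers co-located at $l_i$ --- but these are elaborations of the same argument, not a different one.
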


\begin{proof}
	For details, please refer to Appendix 
	\ref{A:proof6}.
\end{proof}

For example, in Figure \ref{fig:smdb}(g), assume that rider $r_i$ {needs} to be picked up in 10 time steps. If the shortest path from $v_d$ to $v_{32}$ is 12 time steps, as $SMD$ is 1, the shortest time to reach any other MP candidates is bounded by $12-1=11>10$. We can stop checking other MP candidate(s) ($v_{22}$). Any other worker at $v_d$ currently cannot serve $r_i$. $v_d$ is marked as a dead vertex and added to $DV$.

\section{Experimental Study}
\label{sec:experiment}

\subsection{Experimental Methodology}
\noindent\textbf{Data set}. We use both real and synthetic data to test our HMPO Graph-Based solution. Specifically, for the real data, we use a public data set
NYC~\cite{NYC}. It is collected from two types of taxis (yellow and green) in New York City, USA.
We use all the request data on December 30th to simulate the ridesharing requests in our experiments. Each taxi request in NYC contains the latitudes/longitudes of its source/destination locations, its starting timestamp, and its capacity. We can generate a ridesharing request and initialize its locations, release time, and capacity correspondingly.

In addition, we derive the distribution of requests
of all the NYC requests in December and generate {5 synthetic
	datasets (Syn) with the size of requests as 100k, 200k, 400k,  800k, and 1000k}.
We extract the road network of NYC from Geofabrik \cite{geofabrik}. It includes the labels of roads for both driving and walking. We clean it into the directed car graph $G_c$ and passenger graph $G_p$ according to road labels.
The road network is widely used in existing ridesharing studies~
\cite{tong2018unified}.

The settings of our dataset are summarized in Table~\ref{table:data}.

\begin{table}[t!]
	\centering
	{\small 
		\caption{\small Setting of Dataset and Model} \vspace{-3ex}\label{table:data}
		\begin{tabular}{c|c}
			\hline
			Parameters &Settings \\
			\hline
			Number of vertices of NYC & 57030\\
			
			Number of edges of NYC & 122337\\
			
			Number of valid requests of NYC & 277410\\
			\hline
		\end{tabular}
	}\vspace{-3ex}
\end{table}

\begin{figure*}[t!]\centering 
	\centering
	\subfigure[\scriptsize Served requests ($\abs{W}$)]{
		\includegraphics[width=0.23\linewidth]{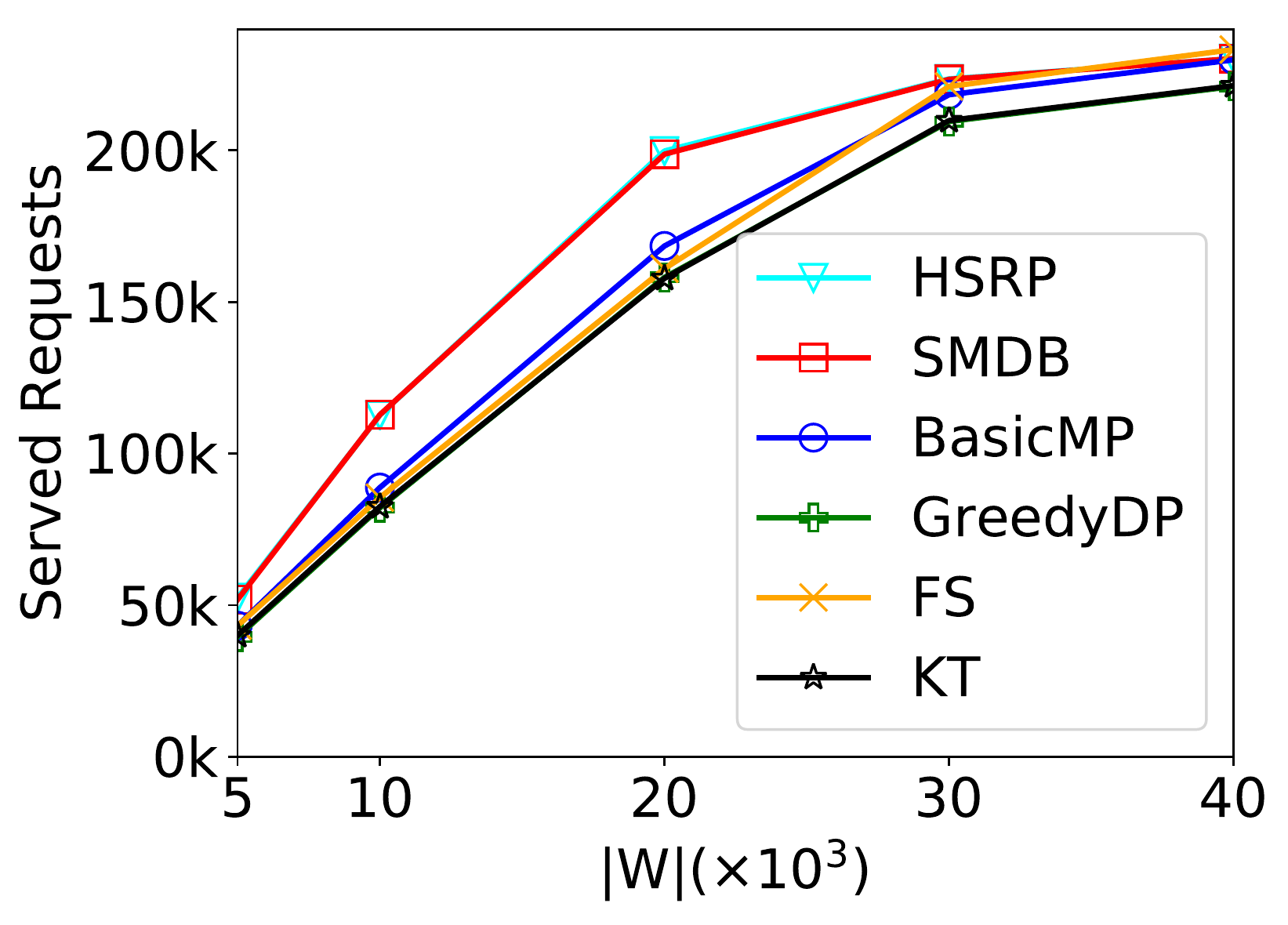}\label{subfig:sr_worker}
	}
	\subfigure[\scriptsize Served requests ($a_w$)]{
		\includegraphics[width=0.23\linewidth]{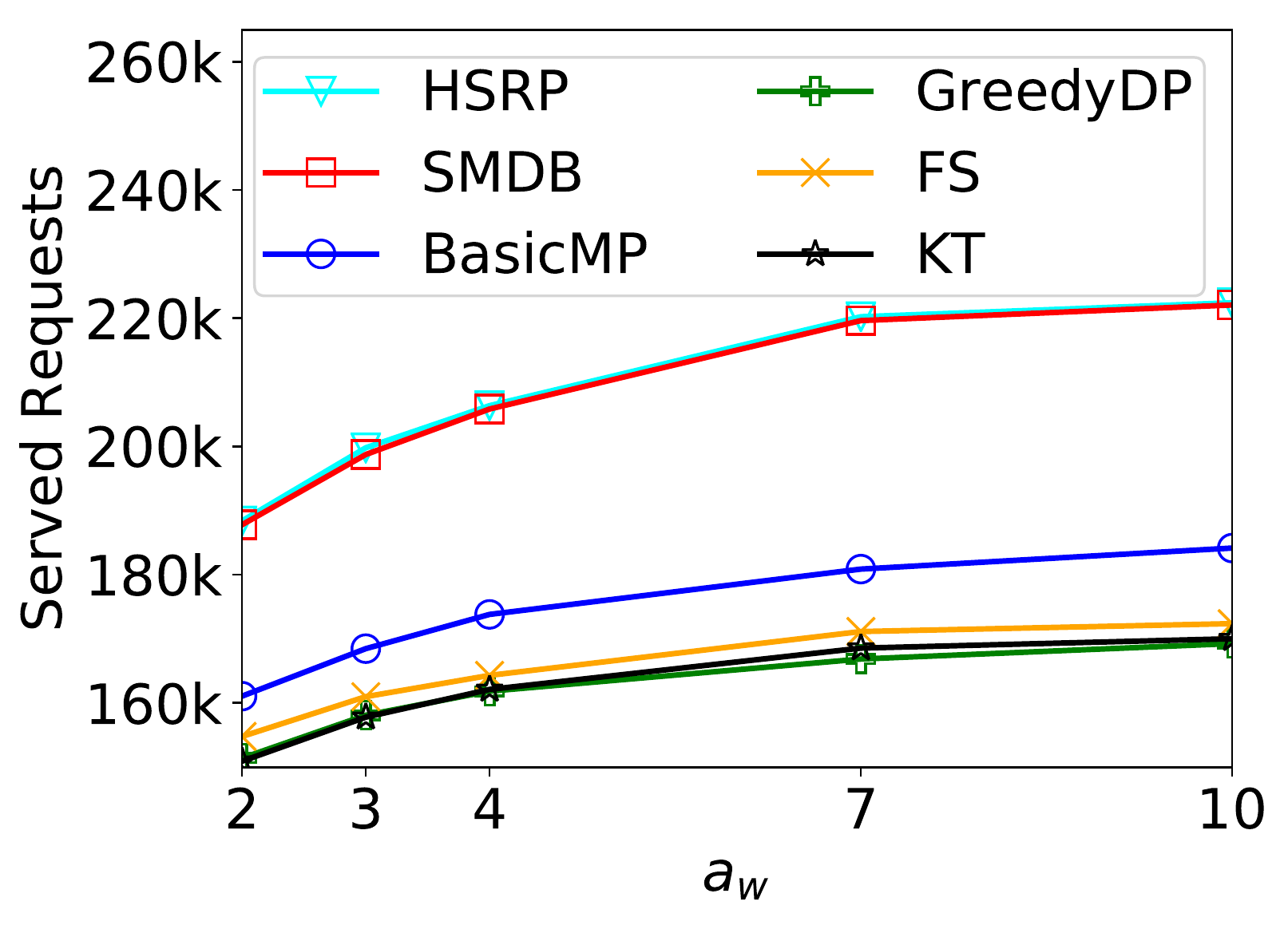}\label{subfig:sr_capacity}
	}
	\subfigure[ \scriptsize Served requests ($e_r$)]{
		\includegraphics[width=0.23\linewidth]{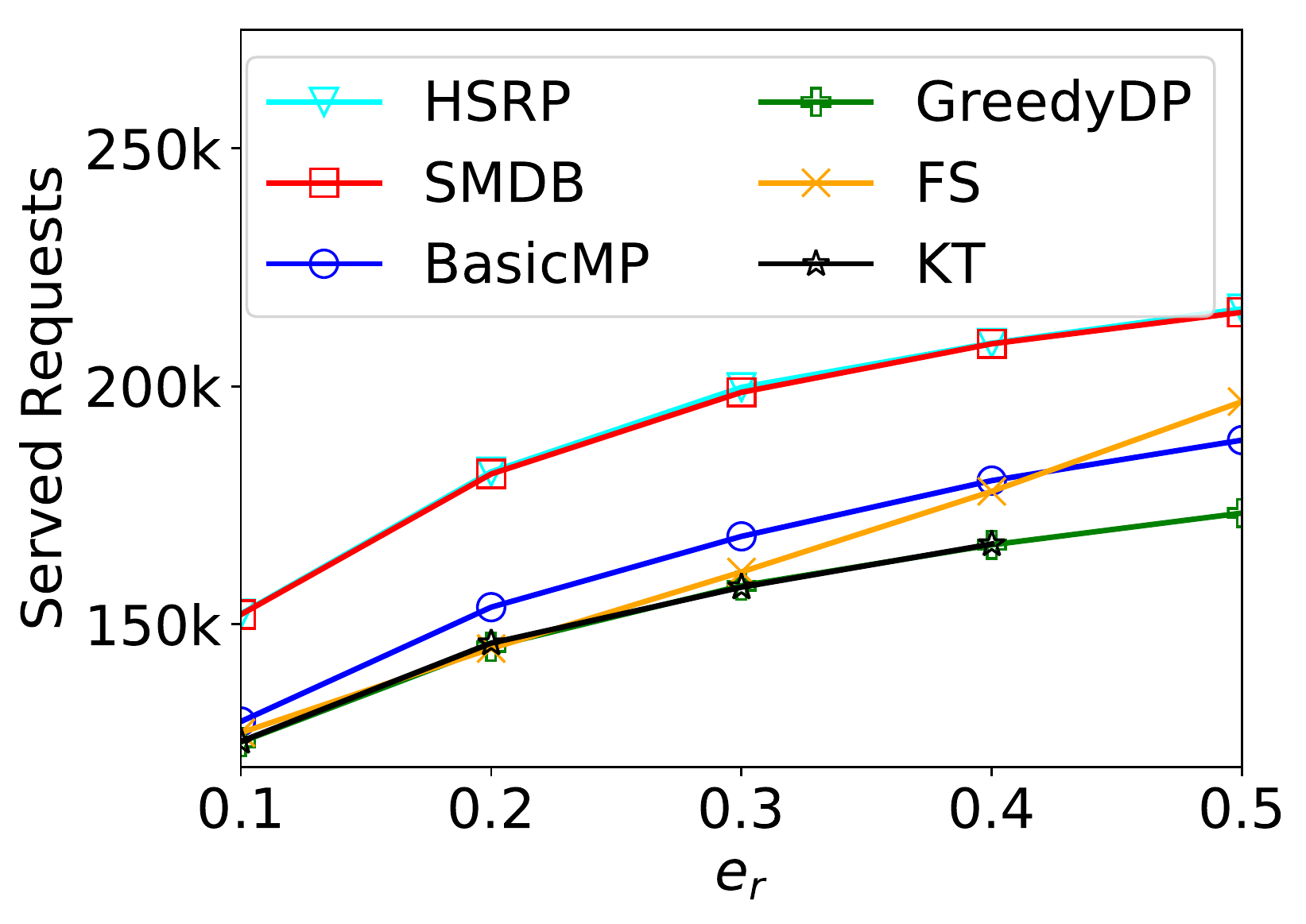}\label{subfig:sr_deadline}
	}
	\subfigure[\scriptsize Served requests ($\abs{R}$)]{
		\includegraphics[width=0.23\linewidth]{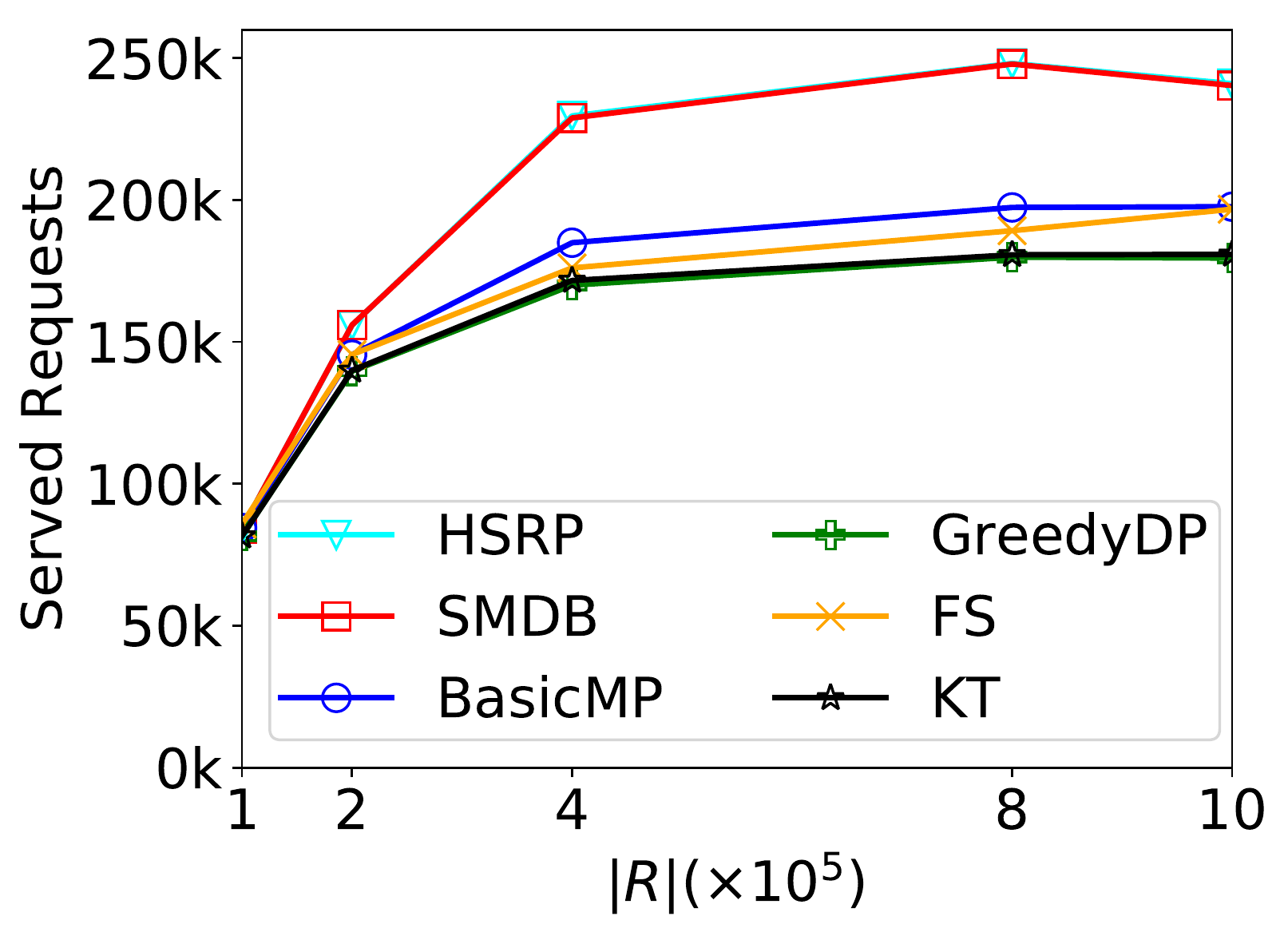}\label{subfig:sr_request}
	}\vspace{-2ex}
	
	\subfigure[ \scriptsize Unified cost ($\abs{W}$)]{
		\includegraphics[width=0.23\linewidth]{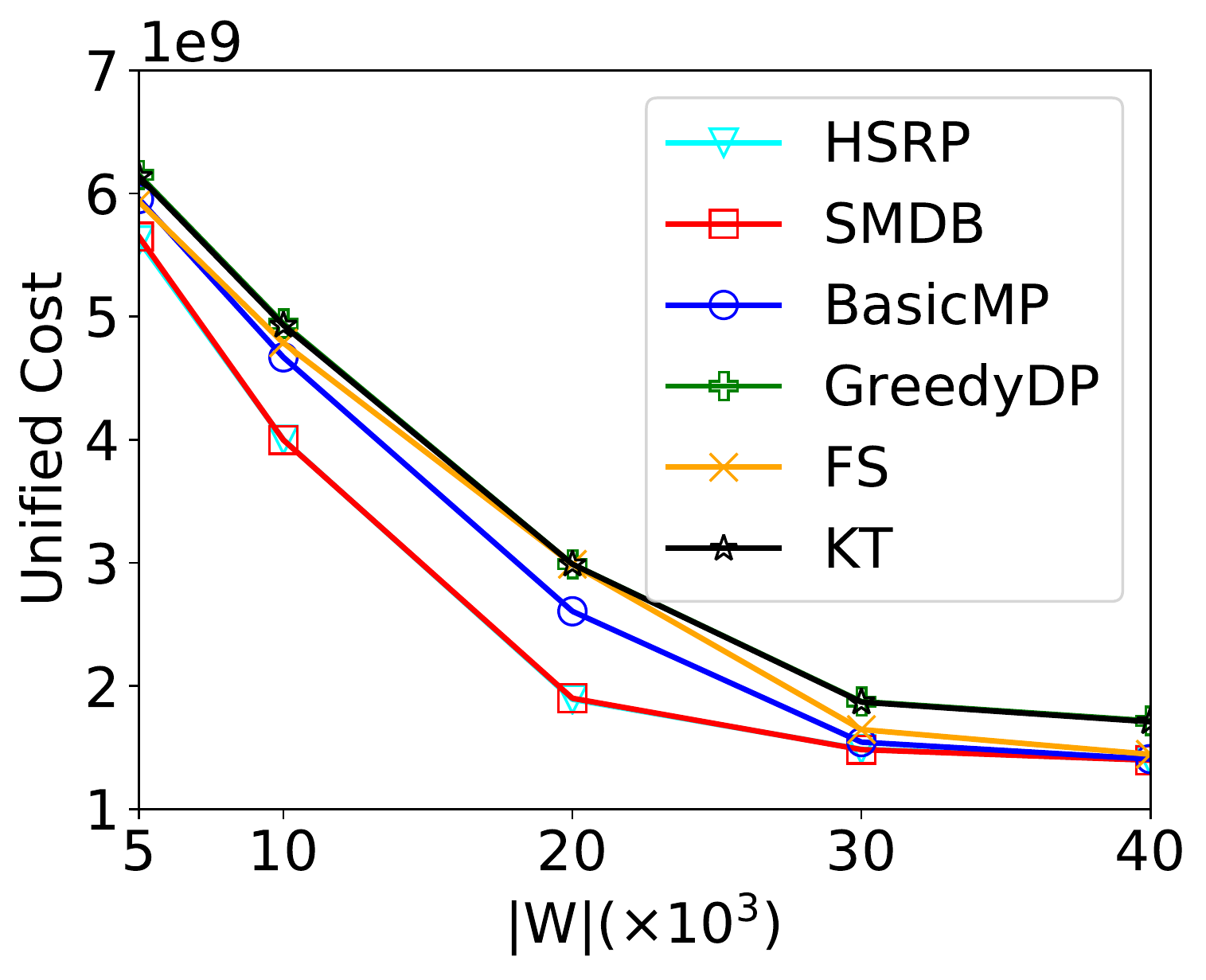}\label{subfig:uc_worker}
	}
	\subfigure[\scriptsize Unified cost ($a_w$)]{
		\includegraphics[width=0.23\linewidth]{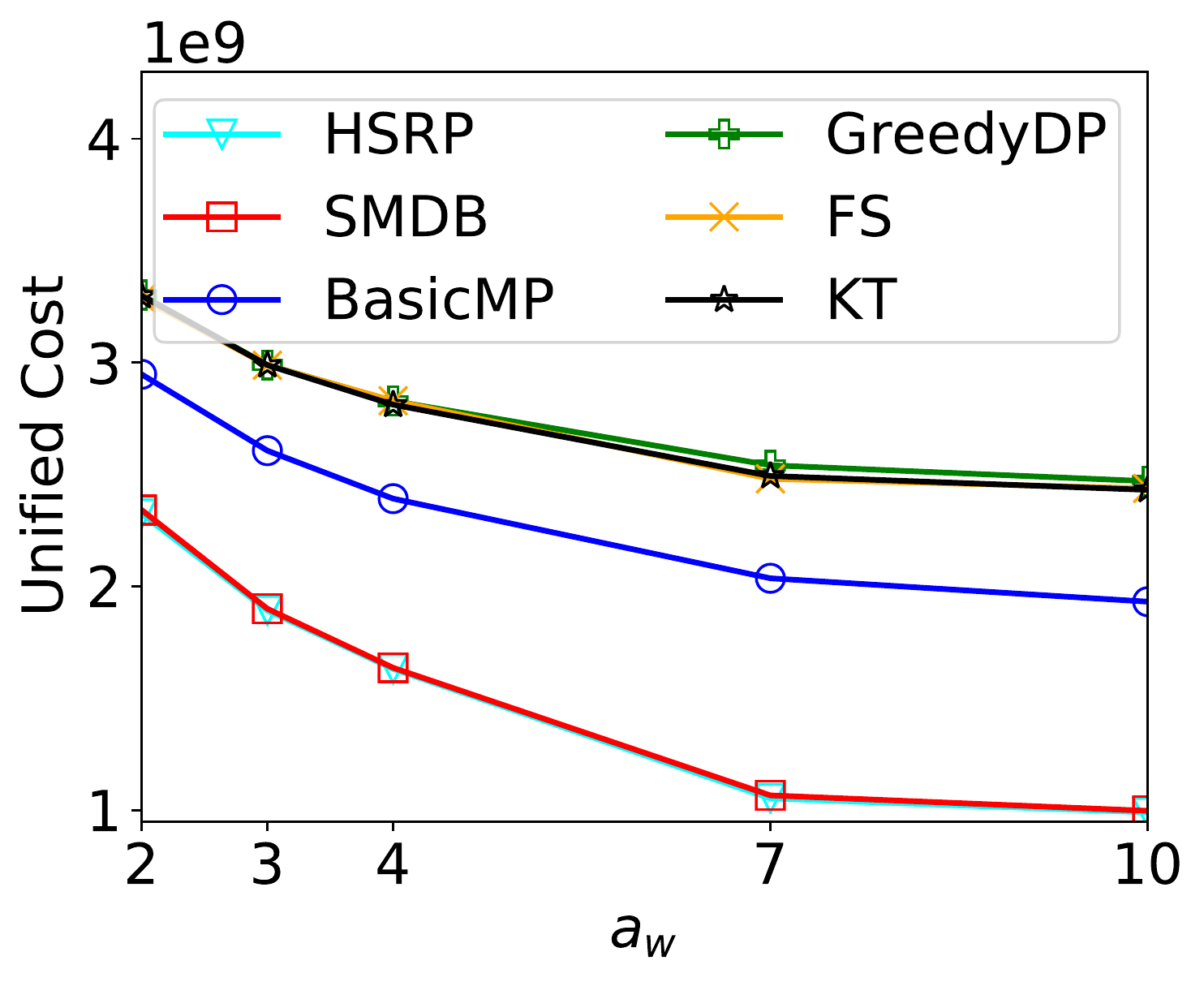}\label{subfig:uc_capacity}
	}
	\subfigure[\scriptsize Unified cost ($e_r$)]{
		\includegraphics[width=0.23\linewidth]{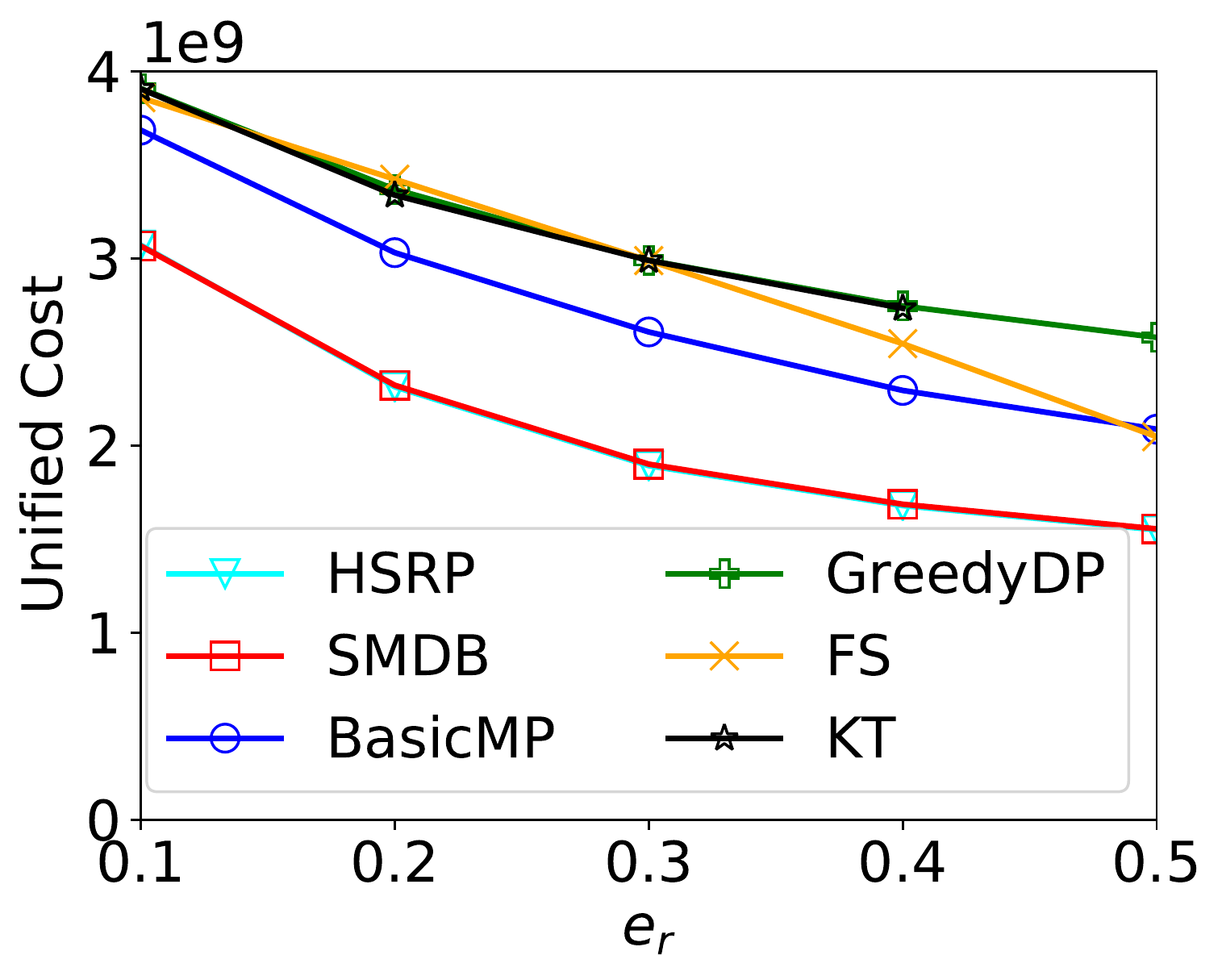}\label{subfig:uc_deadline}
	}
	\subfigure[\scriptsize Unified cost ($\abs{R}$)]{
		\includegraphics[width=0.23\linewidth]{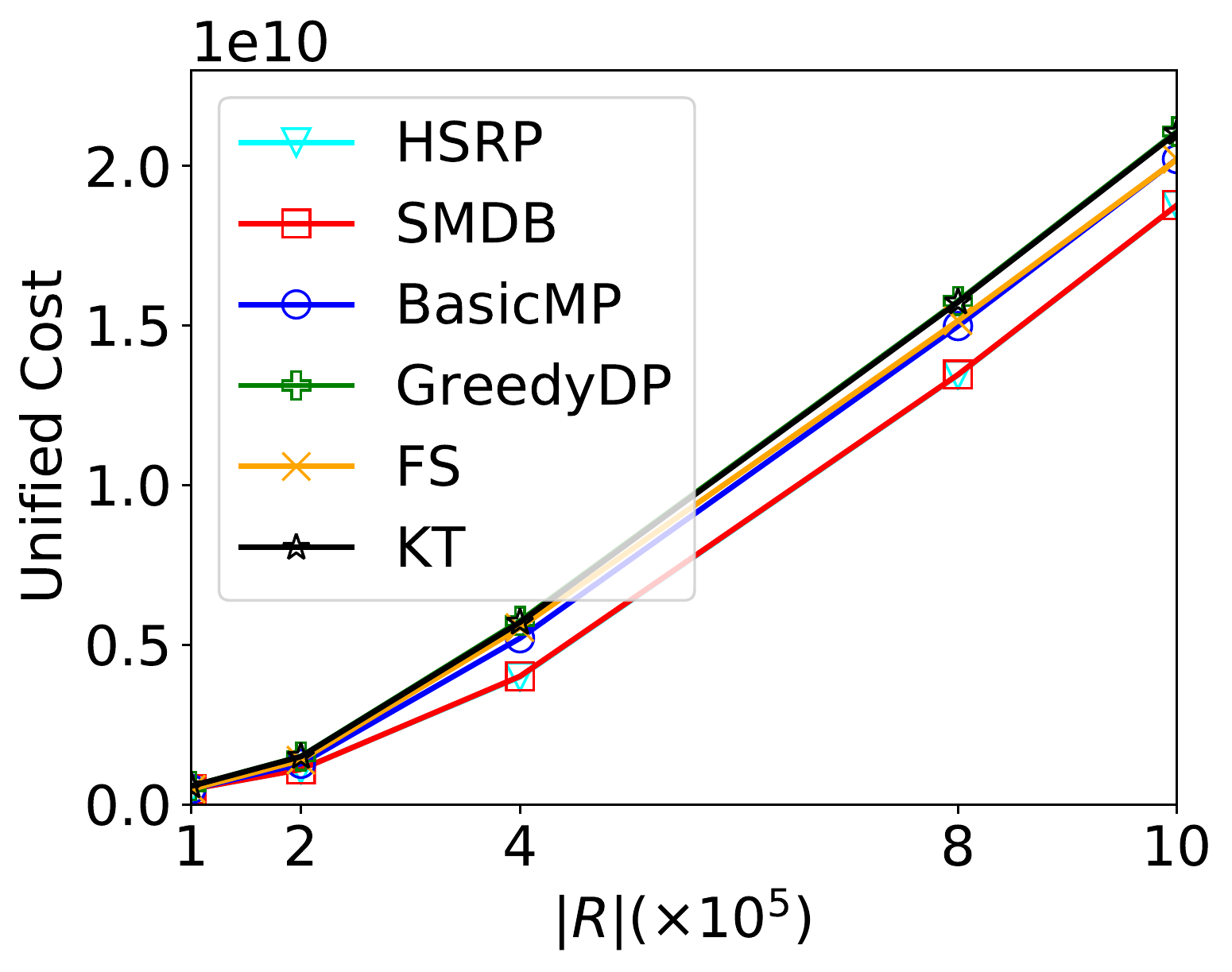}\label{subfig:uc_request}
	}\vspace{-2ex}
	
	\subfigure[\scriptsize Resp. time ($\abs{W}$)]{
		\includegraphics[width=0.23\linewidth]{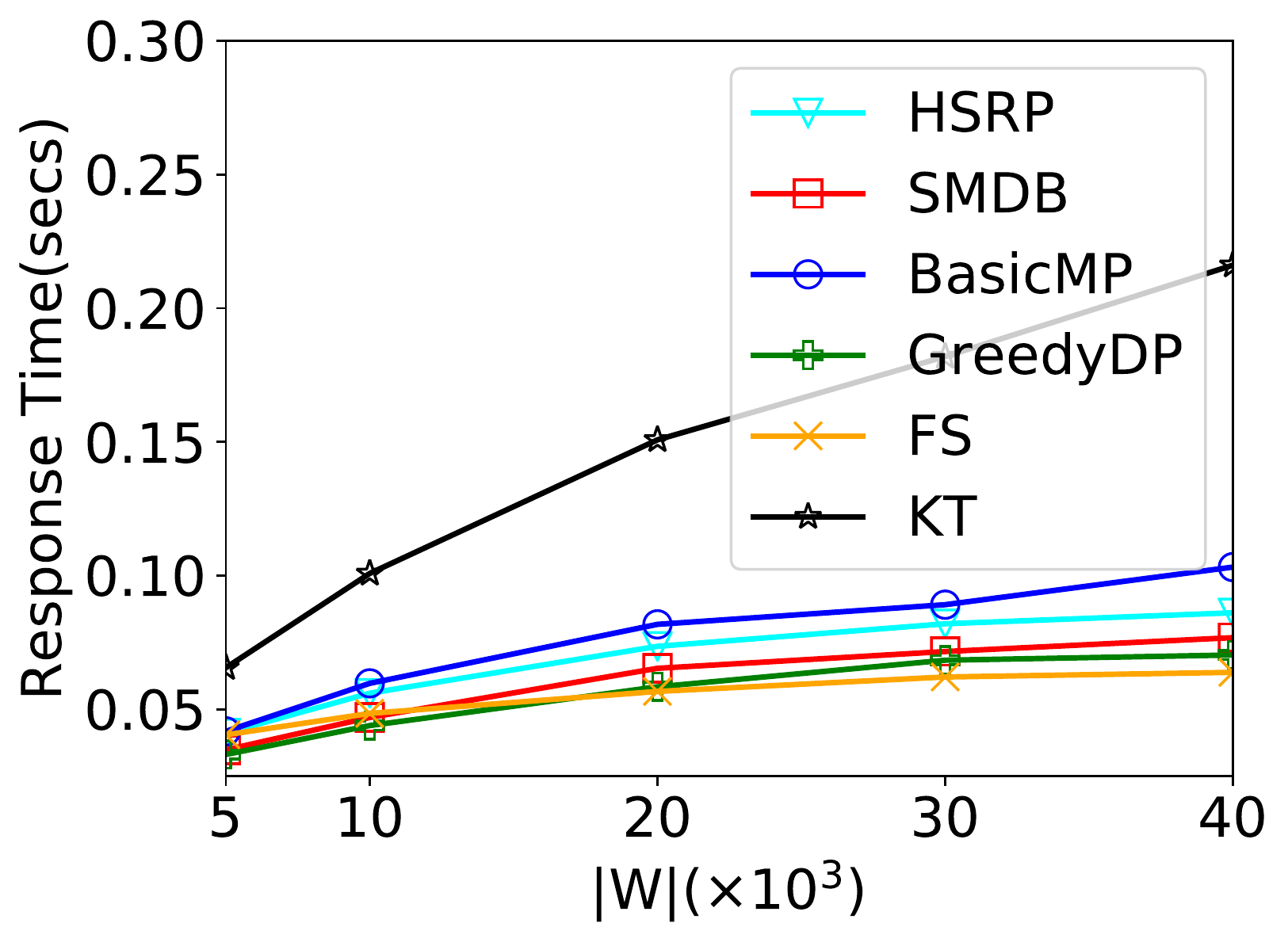}\label{subfig:rt_worker}
	}
	\subfigure[\scriptsize Resp. time ($a_w$)]{
		\includegraphics[width=0.23\linewidth]{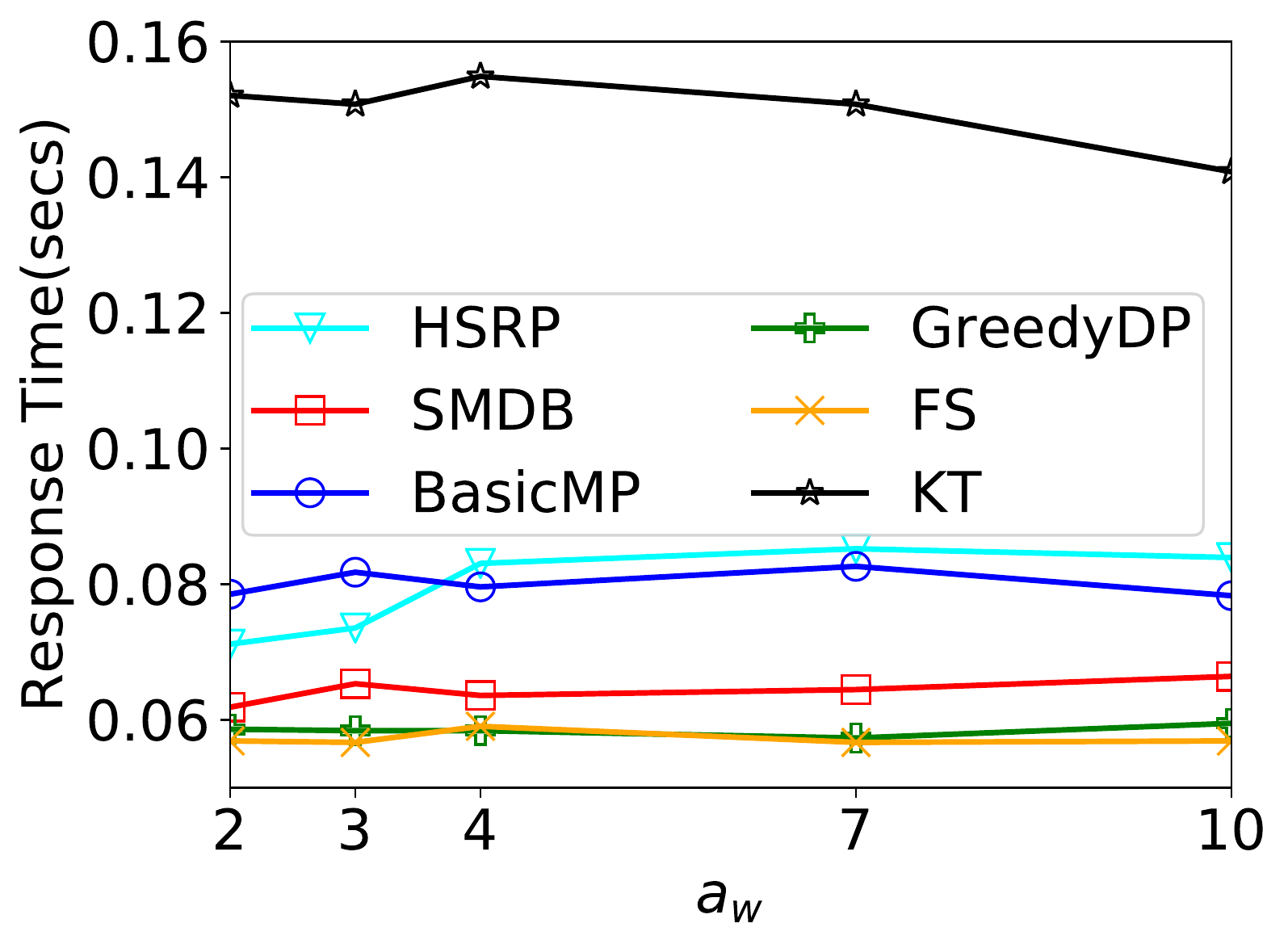}\label{subfig:rt_capacity}
	}
	\subfigure[\scriptsize Resp. time ($e_r$)]{
		\includegraphics[width=0.23\linewidth]{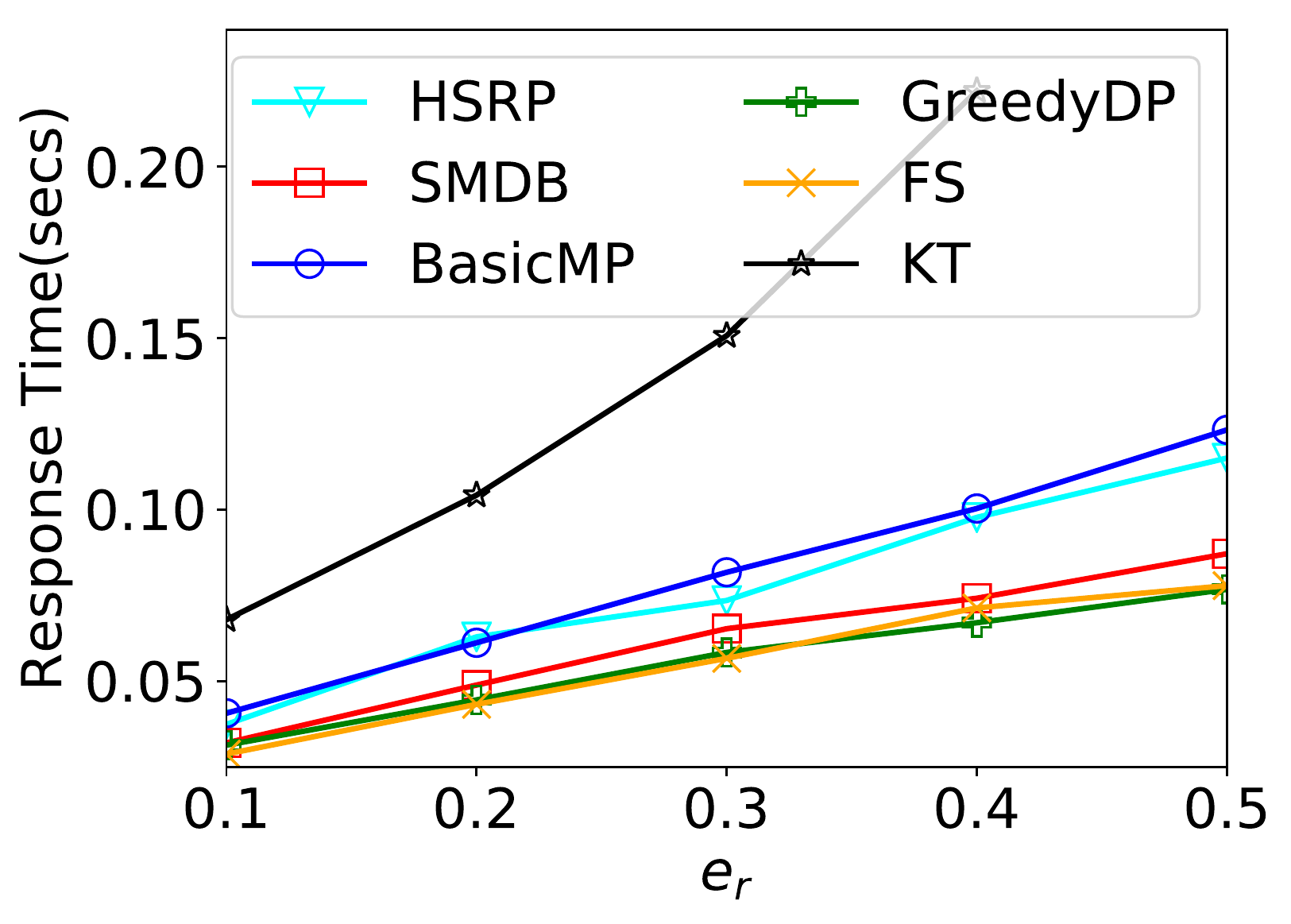}\label{subfig:rt_deadline}
	}
	\subfigure[\scriptsize Resp. time ($\abs{R}$)]{
		\includegraphics[width=0.23\linewidth]{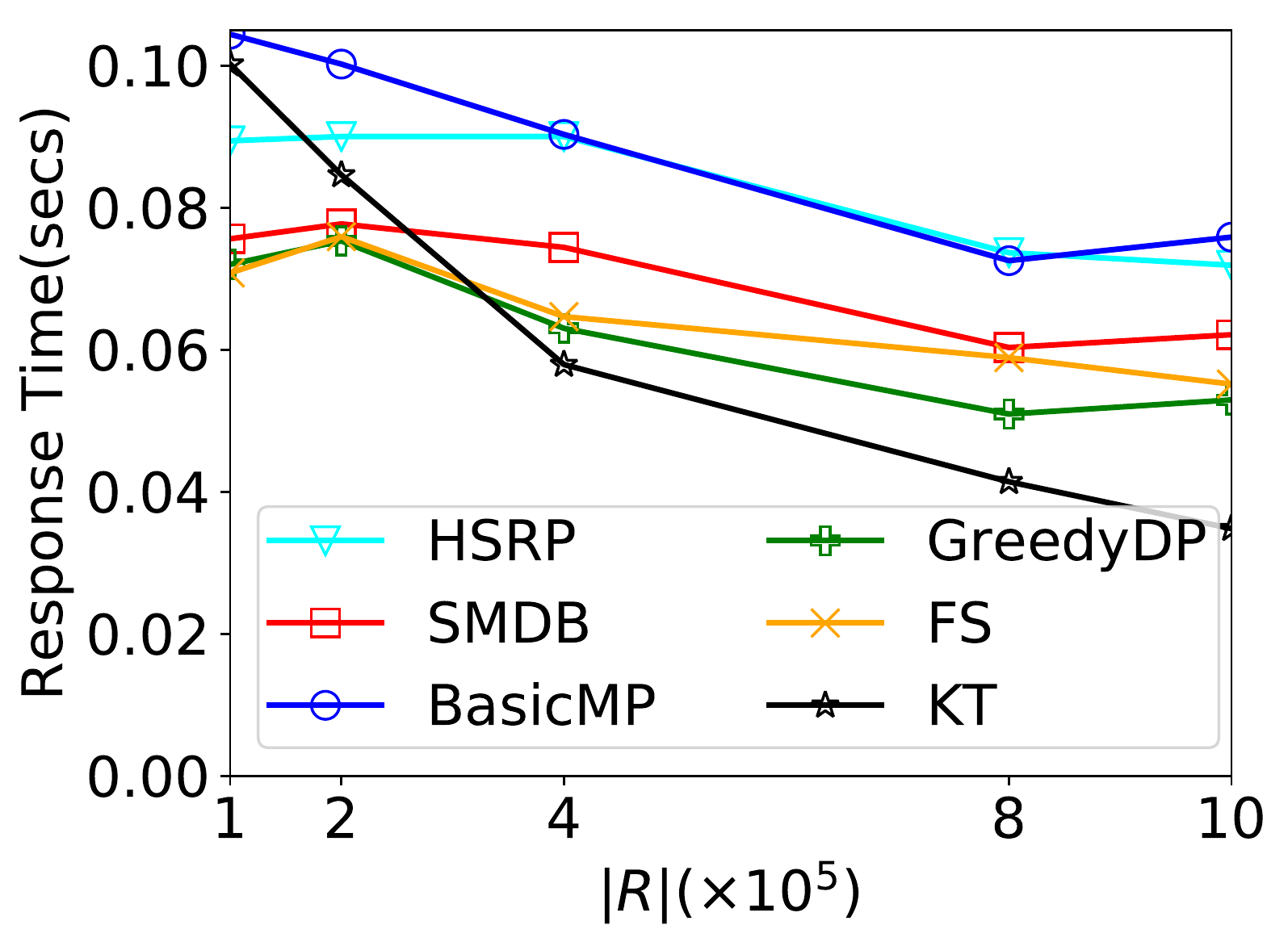}\label{subfig:rt_request}
	}
	\vspace{-2ex}
	\caption{Performance of varying number of drivers $\abs{W}$, capacity $a_w$, deadline coefficient $e_r$, and number of requests $\abs{R}$}\label{fig:exp_result}\vspace{-2ex}
\end{figure*}

\noindent\textbf{Implementation}. We follow the common settings for simulating ridesharing applications in \cite{asghari2016price, huang2014large, tong2018unified}. While building the graph for road network, the weights of edges are set to their time cost (divide the road length on Geofabric by the velocity of its road type). The walking speed is set to 3.6km/h, { \label{rw:r1-4}while the driving speed varies according to the road type in OSM, i.e., 60\% of the maximum legal speed limit in their cities \cite{Speed}}. For each request, we map its source and destination to the closest vertex in the road network. The initial location of each driver is randomly chosen.

\subsection{Setting Default  Values for Parameters}

\begin{table}[t!]
	\centering
	{\small 
		\caption{\small Parameter Settings.}\vspace{-3ex} \label{table1}
		\begin{tabular}{c|c}
			\hline
			Parameters &Settings \\
			\hline
			Deadline Coefficient $e_r$ &0.1, 0.2, \textbf{0.3}, 0.4, 0.5\\
			Capacity $a_w$ &2, \textbf{3}, 4, 7, 10\\
			Driving Distance Weight $\alpha$ &\textbf{1}\\
			Walking Distance Weight $\beta$ &0.5, \textbf{1}, 1.5, 2\\
			Penalty $p_o$  &{3, 5, 10, 15, }\textbf{30}\\
			Number of drivers $\abs{W}$ &5k, 10k, \textbf{20k}, 30k, 40k\\
			Number of requests $\abs{R}$ &100k, 200k, 400k, 800k, 1000K\\
			\hline
		\end{tabular}
	}\vspace{-2ex}
\end{table}

Table~\ref{table1} summarizes the major parameter settings for our experiments, where default values are in bold font. { \label{rw:r1}Other hyperparameters are discussed in Appendix1
	\ref{A:set} with extensive experiments.}

Note that for a mix mode where some riders are not willing to walk, our algorithm \emph{SMDB} can directly handle it by manipulating its input. Instead of generating meeting point candidates according to Section~\ref{sec:MeetingPoints}, we restrict the candidates as source $\{s_j\}$ for pick-up point and destination $\{e_j\}$ for drop-off point, where the walking distance is 0. The candidates are fed into \emph{SMDB}, which still works regardless of the number of candidates.

The experiments are conducted on a server with Intel(R) Xeon(R) E5 2.30GHz
processors with hyper-threading enabled and 128 GB memory. The simulation
implementation is single-threaded, and the total running time
is limited to 14 hours for
NYC. In reality, a real-time solution should stop before its time limit (24 hours for us)~\cite{huang2014large, tong2018unified}.
All the algorithms are implemented in Java 11.
We first preprocess our road network according to Section~\ref{sec:HSGraph}.
The modified vertices and weighted edges can be loaded directly for route planning.
We boost the shortest distance and path queries with an LRU cache according to the previous studies~\cite{huang2014large,tong2018unified}.

\noindent\textbf{Compared Algorithms.} We compare SMDB with the state-of-the-art algorithms for route planning of ride-sharing. 
\begin{itemize}[leftmargin=*]
	\item \textbf{GreedyDP} \cite{tong2018unified}. It uses a greedy strategy
	for route planning without MPs. Each request is assigned to the feasible
	new route with minimum increased cost. 
	{\item \textbf{Kinetic Tree (KT)}. For each driver, it saves all the possible routes for the assigned request using a structure called Kinetic Tree. The new request is inserted by traversing and updating the tree, which is more local optimal but slower compared with GreedyDP.}
	\item \textbf{BasicMP}. It is an extension from GreedyDP by adapting
	MPs to solve the MORP problem.
	{\item \textbf{First Serve (FS)}. A variant of BasicMP, where each request is directly assigned to the first driver who can serve it. }
	\item \textbf{HSRP}. It uses the HMPO Graph to improve the effectiveness of
	BasicMP without pruning.
\end{itemize}

\noindent\textbf{Metrics}. All the algorithms are evaluated in terms of
total unified cost, served requests
{\scriptsize$\abs{\hat{R}}$}
and response time (average waiting time to arrange a request, \emph{resp. time}
for short), which are widely used as metrics in
large-scale online ride-sharing proposals \cite{huang2014large,ma2013t,tong2018unified}.

\subsection{Experimental Results}
In this subsection, we present the experimental results.

\textbf{Impact of Number of Drivers $\abs{W}$}.
The first column of Figure \ref{fig:exp_result} presents the results with
different numbers of drivers in NYC. Compared with GreedyDP, BasicMP
outperforms it in terms of the number of served requests by 6.6\% to 12.7\%
with the help of MPs, while SMDB and HSRP outperforms it by 21.4\%
to 29.9\%. More requests are served with more drivers, results in a decrease of
unified costs and an increase in the served rates of all the algorithms.
BasicMP decreases the cost by 2.7\% to 13.2\% and SMDB decreases it by 4.5\% to
32.7\%. { GreedyDP and FS run the fastest, where GreedyDP does not use MPs and FS assigns each request to the first feasible worker suboptimally.} SMDB runs faster than other { local-optimal} MP-based methods (
HSRP and BasicMP) with a \emph{resp. time} $ <0.08s$.

\textbf{Impact of Capacity of Drivers $a_w$}.
The second column of Figure \ref{fig:exp_result} presents the effect of the
capacities of drivers. BasicMP serves 8.6\% to 12.2\% more requests than
GreedyDP with 5.1\% to 9.2\% less cost. SMDB outperforms other algorithms on
both serving rate, 27.3\% to 33.1\% higher than GreedyDP, and unified cost,
13.0\% to 20.5\% less than GreedyDP. With a larger capacity, all the algorithms
serve more requests with lower costs. However, the improvement with capacity
from 4 to 10 is not as significant as from 2 to 4. The reason is that the extra
spaces are mostly wasted. GreedyDP and FS still runs faster
but serve fewer requests. SMDB costs less time than BasicMP and HSRP. {KT is much slower than other methods but serves the fewest requests}.

\textbf{Impact of Deadline Coefficient $e_r$}.
The third column of Figure \ref{fig:exp_result} shows the results of varying
the deadline coefficient $e_r$. With larger $e_r$, all the algorithms serve
more requests with a lower unified cost. SMDB still serves more requests with
lower cost, which outperforms GreedyDP by serving 26.6\% to 30.3\% more
requests than GreedyDP and decreasing its cost by 9.8\% to 18.7\%. With meeting
points, BasicMP increases the number of served requests by 8.0\% to 12.6\% and
decreases the cost by 3.4\% to 9.2\% compared with GreedyDP. With a larger
deadline coefficient $e_r$, they find more feasible routes and serve more
requests. {KT builds tree to record all the possible routes for each driver, which even fails to finish with $e_r=0.5$.} Time cost increases with larger $e_r$ as each
request can find more feasible candidate routes. GreedyDP and FS still run the fastest
and SMDB is faster than BasicMP and HSRP.

\textbf{Impact of Number of Requests $\abs{R}$}.
The fourth column of Figure \ref{fig:exp_result} displays the results on
different sizes of synthetic requests. The datasets are generated based on the
distribution of all the NYC requests in December. { We increase the number of requests from 10k to 100k to test the extreme case that  drivers are fully busy, i.e., the number of served requests does not increase with 100k arrival requests compared with 80k requests for all the algorithms.}

{When drivers are not fully busy with requests fewer than 80k,} all the algorithms serve more
requests with a lower unified cost as the $\abs{R}$ increases.  Comparing with
each other, SMDB serves 7.3\% to 28.4\% more requests than GreedyDP and
decreasing its cost by 10.6\% to 21.8\%. BasicMP works weaker that increases
the number of served requests by 3.3\% to 10.8\% and decreases the cost by
5.0\% to 14.8\% compared with GreedyDP. { Response time of KT drops quickly with more requests while it rejects more requests. SMDB has a closer response time compared with GreedyDP and FS with increasing $\abs{R}$, where HSRP and BasicMP are the slowest methods.}

\vspace{1ex}
\textbf{Summary of Results}.
\begin{itemize}[leftmargin=*]
	\item Our SMDB algorithm can serve 7.3\% to 33.1\% more requests than the state-of-art algorithm \cite{tong2018unified}. The unified cost is decreased by 4.5\% to 32.7\%. These results validate the effectiveness of our algorithm in large scale datasets. 
	\item With MPs, BasicMP, HSRP, and SMDB outperform
	GreedyDP and KT. { Suboptimal MP-based First Serve method also has competitive performance compared to GreedyDP. } SMDB potentially arranges more requests on the highway and prunes
	candidates, which uses less time to serve more requests than BasicMP and
	HSRP. { \label{rw:r2-1}On the real-world dataset, SMDB can save 11\% to 24\% time compared with HSRP. On the synthetic dataset where a large amount of requests is test, SMDB saves more time (up to 31\% compared with HSRP) given more requests, which is important to handle heavy traffic cases. With a response time lower than 0.2 seconds, SMDB is acceptable to be
		used as a real-time solution for ridesharing tasks.}
\end{itemize}\vspace{-1ex}

\section{Conclusion}
\label{sec:conclusion}
In this paper, we propose the MORP problem, which utilizes meeting points for better ride-sharing route planning. We formulate a modified objective function to cover the cost of walking. 
We prove that the MORP problem is NP-hard and there is no polynomial-time algorithm with a constant competitive ratio
for it. 
To cut the search space of meeting points for fast assignemnts, we devise an algorithm to prepare meeting point candidates for each vertex. 
Besides, 
we construct an HMPO graph with hierarchical order on vertices for fast route planning, which takes the advantage of flexibility from meeting points to improve efficacy.
Based on it, we propose SMDB algorithm to solve MORP problem effectively and efficiently.
Extensive experiments on real and synthetic datasets show that
our proposed solution outperforms baseline and the state-of-the-art algorithms for traditional ridesharing in
effectiveness greatly without losing too much efficiency. Our paper
is provided as a comprehensive theoretical reference for optimizing route planning with meeting points in ridesharing.

\begin{acks}
	Peng Cheng’s work is partially supported by the National Natural Science Foundation of China under Grant No. 62102149, Shanghai Pujiang Program 19PJ1403300 and Open Foundation of Key Laboratory of Transport Industry of Big Data Application Technologies for Comprehensive Transport.
	Libin Zheng is supported by the National Natural Science Foundation of China No. 62102463, the Natural Science Foundation of Guangdong Province of China No. 2022A1515011135 and the Fundamental Research Funds for the Central Universities, Sun Yat-sen University No. 22qntd1101.
	Wenjie Zhang's work is partially supported by the ARC Future Fellowship FT210100303.
	Corresponding author: Peng Cheng.
\end{acks}

\bibliographystyle{ACM-Reference-Format}

\bibliography{meetingpoint.bib}

\appendix
\section{Proof of Lemma \ref{proof1}}
\label{A:proof1}
\noindent\textbf{Lemma \uppercase\expandafter{\romannumeral3}.1.} \emph{The 
	MORP problem 
	is NP-hard}.

\begin{proof}
	The basic route planning for ridesharing problems, which only takes the driving 
	cost and rejection cost into account, is NP-hard \cite{tong2018unified}. A 
	reduction from it to the MORP problem can be established by setting 
	$\beta=\infty$ to ban walking. So MORP problem is NP-hard.
\end{proof}\vspace{-2ex}

\section{Proof of Lemma \ref{proof2}}
\label{A:proof2}
\noindent\textbf{Lemma \uppercase\expandafter{\romannumeral3}.2.} \emph{There 
	is no randomized or deterministic algorithm guaranteeing constant CP for the 
	MORP problem.}
\begin{proof}
	By proving no deterministic algorithm can generate constant expected value 
	(e.g., $\infty$) with a distribution of the input including the destinations of 
	the requests, the previous work \cite{tong2018unified} guarantees that no 
	randomized algorithm has a constant CP using Yao's Principle 
	\cite{yao1977probabilistic}, which is also applicable for our problem. Our 
	problem is a variant of the basic problem as stated in the proof of Lemma 
	\ref{proof1}. Thus, no randomized or deterministic algorithm guarantees 
	constant CP for MORP.
\end{proof}\vspace{-2ex}

\section{Local-Flexibility-Filter Algorithm}
\label{A:select}
\begin{algorithm}[h!]
	\DontPrintSemicolon
	\KwIn{Graph $G_c$ for cars and $G_p$ for passengers, the number of reference vertices $n_r$, maximum walking distance $d_m$, threshold $thr_{CS}$ for pruning candidates, maximum number of candidates $nc_m$}
	\KwOut{MP candidates $MC$ for each vertex}
	
	Initialize integer lists $ECI$, $ECO$ with sizes of $\abs{V_c}$.
	
	Initialize set lists $n_i$ and $n_o$ with sizes of $\abs{V_c}$.
	
	Initialize dictionary list $MC$ with size of $\abs{V_p}$.
	
	\ForEach{vertex $u\in V_c$ of $G_c$}{
		Find the $n_r$ nearest vertices for $u$ on $G_c$.\;

		Derive $ECO(u)$.\;

		Find the $n_r$ reverse nearest vertices for $u$ on $G_c$.\;

		Derive $ECI(u)$.\;

	}
	
	\ForEach{vertex $u\in V_p$}{
		Find the accessible vertices, $A_u$, for $u$ on $G_p$ within $d_m$.\;
		
		\ForEach{vertex $v_i\in$ $A_u$}{
			If $SCS(u,v_i)\leq SCS(u,u)+thr_{CS}$, add it to u’s
			MP candidates and only keep top $nc_m$ points.
		}
		
		Add these candidate vertices into $MC(u)$.
	}

	\Return{$MC$}
	\caption{Local-Flexibility-Filter Algorithm}
	\label{algo:LFF}
\end{algorithm}

\textbf{Algorithm sketch}. Algo. \ref{algo:LFF} Local-Flexibility-Filter Algorithm (LFF Algorithm for short) shows the detail of our candidate selection solution. As vertices in road network have bounded number of adjacent edges (usually no more than 5), the degree of a vertex can be regarded as $O(1)$. We will stop after finding $n_r$ nearest vertices in line 5 and line 8, so there are $n_r$ rounds to pop vertex and relax edge. The time complexity of lines 5-8 is $O(n_r\log(n_r))$. The total time complexity from lines 4 to 8 is thus $O(\abs{V}n_r\log(n_r))$. We assume that given any vertex $u$, the maximum number of vertices that can reach $u$ within $d_m$ on $G_p$ is $n_w$, line 10 would have a time complexity $O(n_w\log(n_w))$. Line 12 costs $O(\log(nc_m))$ by using a min-heap to keep top $nc_m$ points. The lines 11-12 thus costs $O(n_w\log(nc_m))$. Line 13 costs $O(n_w)$. The total cost from line 9 to 13 is $O(\abs{V}n_w(\log(n_w)+\log(nc_m)))$. Because maximal number of candidates is always smaller than the number of reachable vertices, that is, $nc_m<n_w$, the time complexity of LFF algorithm is $O(\abs{V}(n_w\log(n_w)+n_r\log(n_r)))$.

\section{Defective Vertices Selection Algorithm}
\label{A:dvs}
\begin{algorithm}[h!]
	\DontPrintSemicolon
	\KwIn{Graph $G_c=\langle V_c,E_c\rangle$ for cars, list $ECO$ and $ECI$ 
		from Algo.\ref{algo:LFF}, MP candidates $MC$}
	\KwOut{The set of defective vertices $V_{de}$}
	
	Sort the vertices in $V_c$ in decreasing order of $ECO(v_i)+ ECI(v_i)$\;
	
	Initialize defective vertices $V_{de}=\emptyset$\;
	
	Initialize reserved vertices $V_{re}=\emptyset$\;
	
	\ForEach{Vertex $u\in V_c$}{
		\If{$u\in V_{re}$ or $MC(u)\subseteq V_{de}$}{
			Continue\;
		}
		Put vertices that can directly reach to $u$ into $inV$\;
		Put vertices that can be directly reached from $u$ into $outV$\;
		$t_m=\max_{v\in inV} t_c(v,u)+\max_{v\in outV} t_c(u,v)$\;
		
		Remove $u$ and its related edges from $G_c$ \;
		
		\ForEach{Vertex $v\in inV$}{
			Put every vertex $v'$ with distance $SP_c(v,v')\leq t_m$ into $V_{tm}$\;
			\If{$outV \nsubseteq V_{tm}$}{
				Put $u$ with its related edges back to $G_c$\;
				Check next vertex from Line 5\;
			}
			
			\While{Vertex  $v' \in V_{tm}$}{
				\If{{\scriptsize$v'\in outV$} and 
					{\scriptsize$SP_c(v,v')>t_c(v,u)+t_c(u,v')$}}{
					Put $u$ with its related edges back to $G_c$\;
					Check next vertex from Line 5\;
				}
			}
		}
		Add $u$ to $V_{de}$ and add all vertices in $MC(u)$ to $V_{re}$\;
	}
	
	\Return{$V_{de}$}
	\caption{Defective Vertices Selection Algorithm}
	\label{algo:DVS}
\end{algorithm}

\textbf{Algorithm sketch.} The pseudo code of the DVS approach is shown in 
Algorithm \ref{algo:DVS}. The main idea of the DVS approach is to ensure that: 
(a) every selected defective vertex will have at least one MP 
candidate existing in the rest vertices (i.e., core vertices or sub-level 
vertices); (b)  the travel cost of any two vertices in $G_c$ will not increase 
after removing the defective vertices and their related edges. 

The implementation of the DVS approach (Algorithm \ref{algo:DVS})  can be 
summarized in 3 steps:

\textbf{(i)} Sort  $v_i\in V$ in descending order of {\scriptsize$ECO(v_i)+ 
	ECI(v_i)$};

\textbf{(ii)} Initialize each vertex  as unmarked, which means that 
it is not required by any defective vertices as MP;

\textbf{(iii)} pop vertices one-by-one. For each vertex $u$, check whether it 
is marked or its MP candidates $MC(u)$ are all in $V_{de}$. If so, 
pop the next vertex. Otherwise, remove it from $G_c$ and record each vertex $v$ 
which has an edge with $u$. If the edge starts from $v$ (an edge comes to $u$), 
we add $v$ into a set $inV$ with the weight $t_c(v,u)$; if it starts from $u$, 
we add $v$ into a set $outV$ with the weight $t_c(u,v)$. Save the sum of the 
largest in and out weights as $t_m$. Then we remove $u$ with its adjacent edges 
from $G_c$. For each vertex $v\in inV$, we run the Dijstra algorithm from 
source $v$ until the new reached vertex costs more than $t_m$. If any $v'\in 
outV$ is accessed with cost larger than $t_c(v,u)+t_c(u,v')$, we refuse to add 
$u$ into $V_{de}$ and check next one. If all vertices in $outV$ have been 
accessed, add $u$ to $V_{de}$ and mark vertices in $MC(u)$. $u$ and its edges 
are recovered if we failed to add it to $V_{de}$. 

A vertex $u$ with a large $ECO$/$ECI$ is hard to reach and leave, thus it has a 
high possibility to be isolated from the city center and the main street. 
Removing the defective vertices will have little influence on the connectivity 
of the road network. We sort the vertices according to their $ECO$ and $ECI$ at 
the beginning of DVS, which help us remove the unimportant vertices first as 
they affect fewer vertices, such that a more concise HMPO graph 
can be achieved.

\textbf{Time Complexity}. We assume that each vertex has an $O(1)$ degree, which is common in road networks. Line 1 runs in $O(N\log N)$ time, where $N=\abs{V}$. Lines 7 and 8 collect adjacent vertices in $O(1)$ time as degree is in $O(1)$ scale. Assume that (i) the longest length-2 path in $G_c$ costs $t_{near}$; (ii) given any vertex $u\in G_c$, the number of vertices that $u$ can reach within cost $t_{near}$ is no larger than $\sigma_{near}$, then the number of accessed vertices is bounded by $\sigma_{near}$. Thus, line 12 is a $O(\sigma_{near}\log\sigma_{near})$-time-complexity loop. All checking phases (lines 5, 13, 17) are implemented in hash index and cost $O(1)$. Loop from lines 11 to 19 and loop from lines 16 to 19 enumerate $O(1)$ vertices. Loop from line 4 to 20 enumerates $O(N)$ vertices. Thus, the overall time complexity is $O(N\cdot(\log N+\sigma_{near}\log\sigma_{near}))$.

\section{Example for Algorithm \ref{A:dvs}}
\label{A:exp}

\begin{example}
	Let us continue the setting in Example \ref{ex:graph}. Now we want to select the defective vertices among them. We first sort the vertices and initial their labels as unmarked. Then we check all the vertices one-by-one and show their status at each round with a row in Table~\ref{table:de}. Round 0 initializes flags as unmarked ($u$). Round 1 pops $F$ and there is no node in $inV$. We add it to $V_{de}$ and mark its candidates $MC(F)$ as $m$. In the table we use $de$ to suggest that $F$ is added to $V_{de}$. Round 2 adds $D$ to $V_{de}$ and updates its unmarked candidate(s) ($A$). Rounds 3~5 all pop marked vertices and fail to add a new vertex to $V_{de}$. Round 6 checks the $B$ by first removing it from the graph with edges and recording its $inV= outV=\{A,C,E\} $ with $tm=6+7=13$. Running Dijkstra algorithm from $A$ results in $\infty$ cost and poping $B$ is rejected. We mark the failed ones as $f$. 
\end{example} 

\begin{table}[h!] \vspace{1ex}
	\centering
	{\small \scriptsize
		\caption{\small Status of each vertex at each round}
		\label{table:de}
		\begin{tabular}{c|cccccc}
			\hline
			{Round}&$F$&$D$&$E$&$C$&$A$&$B$\\
			\hline
			$0$ &$u$&$u$&$u$&$u$&$u$&$u$ \\
			
			$1$ &$de$&$u$&$m$&$m$&$u$&$u$ \\
			
			$2$ &$de$&$de$&$m$&$m$&$m$&$u$ \\
			
			$3$ &$de$&$de$&$f$&$m$&$m$&$u$ \\
			
			$4$ &$de$&$de$&$f$&$f$&$m$&$u$ \\
			
			$5$ &$de$&$de$&$f$&$f$&$f$&$u$ \\
			
			$6$ &$de$&$de$&$f$&$f$&$f$&$f$ \\
			\hline
		\end{tabular}
	}\vspace{-2ex}
\end{table}

\section{Proof of Lemma \ref{proof3}}
\label{A:proof3}
\noindent\textbf{Lemma \uppercase\expandafter{\romannumeral5}.1.} 
\emph{Removing all vertices selected by the DVS algorithm from $G_c$ with 
	their edges leads to no detour cost.}
\begin{proof}
	Recall that if a shortest path query $SP_c(v_1,v_2)$ finds a route contains 
	a vertex $u$ in the middle, we need to find an alternative path without $u$ 
	after removing $u$ from graph $G_c$. If the new route has higher cost, 
	removing $u$ results in a detour cost. We define the new car graph without 
	$V_{de}$ as $G_{c'}=G_c-V_{de}$. Hereby the proof is equivalent to proof 
	$\forall v_1,v_2\notin V_{de}, SP_c(v_1,v_2)=SP_{c'}(v_1,v_2)$, where 
	$SP_{c'}(v_1,v_2)$ is the shortest distance query on graph $G_{c'}$. 
	We prove it by construction, that is, given any shortest path on $G_c$ from 
	$v_1$ to $v_2$, where $v_1,v_2\notin V_{de}$, we show that there is a path 
	from $v_1$ to $v_2$ on $G_{c'}$ with the same cost.
	
	We use $\{u_1,u_2,\cdots,u_{\abs{V_{de}}}\}$ to denote the removed 
	defective vertices in order. When we remove a $u_k$, any length-2 shortest 
	path $(v_x, u_k, v_y)$ must have a same cost substitution 
	$SCS_{x,y}=$$(v_x, v_{s_1},v_{s_2}\\\cdots, v_{s_p}, v_{y})$, where 
	$v_{s_i}\notin \{u_q|0<q<=k\}$, $i=1, 2,\cdots,p$. It is satisfied 
	according to the phase \textbf{(iii)} of Algorithm~\ref{algo:DVS}.
	
	Then, for any path on $G_c$ from $v_1$ to $v_2$, we can iteratively find 
	each $u_k$ with the lowest index. Denote its previous and latter vertices as 
	$v_x, v_y$, we substitute $(v_x, u_k, v_y)$ with $SCS_{x,y}$. In each 
	round, the lowest index of $u_k$ in the new path is increasing. After at 
	most $\abs{V_{de}}$ rounds, the path has no vertex in $V_{de}$. As each 
	substitution does not increase the cost, the final substitution is a valid 
	path on $G_{c'}$ with cost equal to $SP_c(v_1,v_2)$.
\end{proof}\vspace{-2ex}

\section{Proof of Lemma \ref{proof4}}
\label{A:proof4}
\noindent\textbf{Lemma \uppercase\expandafter{\romannumeral5}.2.} 
\emph{$\forall u\in V$ is accessible after removing vertices selected by the 
	DVS	algorithm from $G_c$ with the help of meeting points.}
\begin{proof}
	After popping each vertex $u$ in phase 3 of the DVS algorithm, we ensure 
	that $\exists v\in MC(u)$ is in $V_{de}$ to guarantee that any vertex 
	without available meeting points is still in graph $G_c$. Once a request 
	starts from or aims at $u$, we can serve it by $u$ itself. On the other 
	hand, if a vertex is added to $V_{de}$, we mark its meeting point 
	candidates so that we would not further remove these vertices from $G_c$. 
	Request with $u$ as origin or destination can be served via its meeting 
	point candidates in $G_c$. 
\end{proof}\vspace{-2ex}

\section{Core Vertices Selection Algorithm}
\label{A:CVS}
{\small
	\begin{algorithm}[t!]
		\DontPrintSemicolon
		\KwIn{All the vertices $V$ and defective vertices $V_{de}$, list $ECO$ and 
			$ECI$ from Algo.\ref{algo:LFF}, MP candidates $MC$}
		\KwOut{The set of core vertices $V_{co}$}
		
		Initialize candidate serving set for each $u\in V-V_{de}$ as $\emptyset$.
		
		\ForEach{$v\in V_p$}{
			\ForEach{$u\in MC(v)$}{
				Add $v$ to $MS(u)$;
			}
		}
		
		Solve the partial set cover problem using the algorithm in  
		\cite{DBLP:journals/jal/GandhiKS04}, where the weights of sets are 
		substituded with $ECO+ECI$ in its sorting step. Finally, record its cost 
		$cost_u$ of choosing each $MS(u)$ as the highest cost set. Denote its 
		output as $V'_{co}$.
		
		Initialize the core vertices $V_{co}=V-V_{de}$.
		
		Sort the vertices $u\in V-V'_{co}-V_{de}$ in decreasing order of $cost_u$. Check them one-by-one and remove a vertex from $V_{co}$ if the $V_{co}$ is still a k-skip cover.
		
		Sort and check the vertices $u\in V'_{co}$ in decreasing order of $cost_u$. 
		Remove a vertex $v$ from $V_{co}$ if $V_{co}-\{v\}$ is still a k-skip cover 
		and the $MS$s of $V_{co}-\{v\}$ cover $\epsilon\cdot\abs{V_p}$ vertices.
		
		\Return{$V_{co}$}
		\caption{Core Vertices Selection Algorithm}
		\label{algo:CVS}
	\end{algorithm}
}

\textbf{Algorithm sketch.} We construct it with three steps:

(1) In lines 1-4, we first construct candidate serving sets $MS(\cdot)$. Then we obtain the partial set cover solution $V_{co}'\in V-V_{de}$ satisfying attribute (i) using \cite{DBLP:journals/jal/GandhiKS04}. As each set (a vertex with its $MS$ in our setting) has the same cost 1, we sort each vertex in increasing order of their $ECO+ECI$ at the sorting step of their algorithm. During the process, we record the cost $cost_u$ of choosing each candidate serving set $MS(u)$ to be the highest cost set. Return their output $V'_{co}$.

(2) Initialize the core vertex set $V_{co}=V-V_{de}$. Sort the vertices $\forall u\in V-V_{de}-V_{co}'$ according to $cost_u$ in decreasing order.  We check each vertex $u\in V-V_{de}-V_{co}'$ in order and remove it from $V_{co}$ if the constraint of $k$-skip shortest path is not violated.

(3) Sort the vertices $\forall u\in V_{co}'$ according to $cost_u$ in decreasing order. Pop each vertex $u\in V_{co}'$ in order and check whether the $MS$s of remaining vertices cover $\epsilon$ vertices and satisfy a $k$-skip cover over $G_c-V_{de}$. If so, remove it.

\noindent\textbf{Time Complexity}. Lines 1-4 construct $MS(\cdot)$ in $O(\abs{V}\cdot nc_m)$ time. Line 5 costs time as same as the solution in \cite{DBLP:journals/jal/GandhiKS04} costs, which is $O(\abs{V}^2)$. In lines 6 to 8, checking k-skip cover constraint costs $O(\bar{\sigma}_{k-1}\log\bar{\sigma}_{k-1})$, where $\bar{\sigma}_{k-1}$ is the average number of $(k-1)$-hop neighbors of the vertices in $V$, which grows linearly with $\abs{V}$ \cite{DBLP:journals/pvldb/FunkeNS14}. Using a hashset to record a copy of $MC$ can help us check the servable vertices in $O(1)$ time. With an $O(\abs{V})$ loop, lines 6 to 8 cost $O(\abs{V}^2\log{\abs{V}})$ time. So the overall time complexity is $O(\abs{V}^2\log{\abs{V}})$.

\section{Proof of Lemma \ref{proof5}}
\label{A:proof5}
\noindent\textbf{Lemma \uppercase\expandafter{\romannumeral5}.3.} 
\emph{Assume that we have $N$ vertices in total, with $M$ set as optimal 
	solution for the attribute \emph{(ii)}, the upper bound of the size of core 
	vertex set is $\sigma(k)=\max(\frac{N}{k}log\frac{N}{k}, nc_m\cdot M)$.}
\begin{proof}\vspace{-1.5ex}
	We prove it by dividing all the cases into two types: 
	
	(1) $\bm{nc_m\cdot M\geq\frac{N}{k}log\frac{N}{k}}$. Step (1) returns 
	$V'_{co}$ with size $\abs{V'_{co}}\leq nc_m\cdot M$. 
	So after step (2), if the size of remaining vertices is larger than 
	$nc_m\cdot M$, there must be some vertices $u\in V-V_{de}-V_{co}'$ left.
	
	As \cite{DBLP:conf/sigmod/TaoSP11} shows that any subset of $V$ with size 
	at least $\frac{N}{k}log\frac{N}{k}$ is a k-skip cover of $V$, 
	step (2) can still prune some vertices $u\in V-V_{de}-V_{co}'$ without 
	violating attribute \emph{(i)}, which is a contradiction. Thus, at most 
	$nc_m\cdot M$ vertices are left after step (2).
	As the remaining vertices including all the $V'_{co}$, which is a valid 
	cover, the size of the final output is no larger than $nc_m\cdot M$.
	
	(2) $\bm{nc_m\cdot M<\frac{N}{k}log\frac{N}{k}}$. 
	Step (1) returns $V'_{co}$ with size $\abs{V'_{co}}\leq nc_m\cdot M< 
	\frac{N}{k}log\frac{N}{k}$. 
	So after step (2), if the size of the remaining vertices is larger than 
	$\frac{N}{k}log\frac{N}{k}$, there must be some vertices $u\in 
	V-V_{de}-V_{co}'$ left. This also implies that step (2) can still prune 
	some vertices $u\in V-V_{de}-V_{co}'$ without violating the $k$-skip cover, 
	which results in a contradiction. So at most $\frac{N}{k}log\frac{N}{k}$ 
	vertices are left after step (2).
	
	The step (3) will maintain the valid cover and reduce the size. So the 
	final output size is no larger than $\frac{N}{k}log\frac{N}{k}$.
\end{proof}\vspace{-2ex}

\section{Proof of Lemma \ref{proof6}}
\label{A:proof6}
\noindent\textbf{Lemma \uppercase\expandafter{\romannumeral6}.2.} 
\emph{Pruning Algorithm \ref{algo:SMDB} has no performance loss..}

\begin{proof}
	Line 9 in Algorithm \ref{algo:SMDB} guarantees that starting from $v$ at 
	time $arv[v]$ to pick up $r_j$ 
	directly misses the deadline $tp_j$. So inserting pick-up in latter indexes 
	can be viewed as, starting from $v$ to pick up $r_j$ with some detour, 
	which costs more. So the pruned insertion positions latter than $v$ are not 
	insertable in the original algorithm. Thus, our algorithm has no 
	performance loss.
\end{proof}\vspace{-2ex}

\section{Parameter Settings}
\label{A:set}

In real-application, the penalty of a rejection can be treated mainly as the 
money loss in proportion to the length of the tour (i.e. $p_j = p_o \times 
SP_c(s_j,e_j)$). {\label{rw:r1-1}The penalty weight $p_o$ is usually greatly larger than the 
	weight for travel cost $\alpha$. This guarantees no request will be rejected if 
	a feasible new route can serve it. Different $p_o$ neither changes the 
	assignment result nor affects the serving rate. }{ To make our evaluation solid, we decrease the penalty until it affects the serving results, where $p_o$ is down to $3$. The served requests and unified costs under varied $p_o$ are shown in Figure~\ref{fig:exp_pen}. The serving rate is nearly constant and the cost is linear to $p_o$.}

\begin{figure}[h!]\centering \vspace{-4ex}
	\centering
	\subfigure[\scriptsize Served requests ($p_o$)]{
		\includegraphics[width=0.47\linewidth]{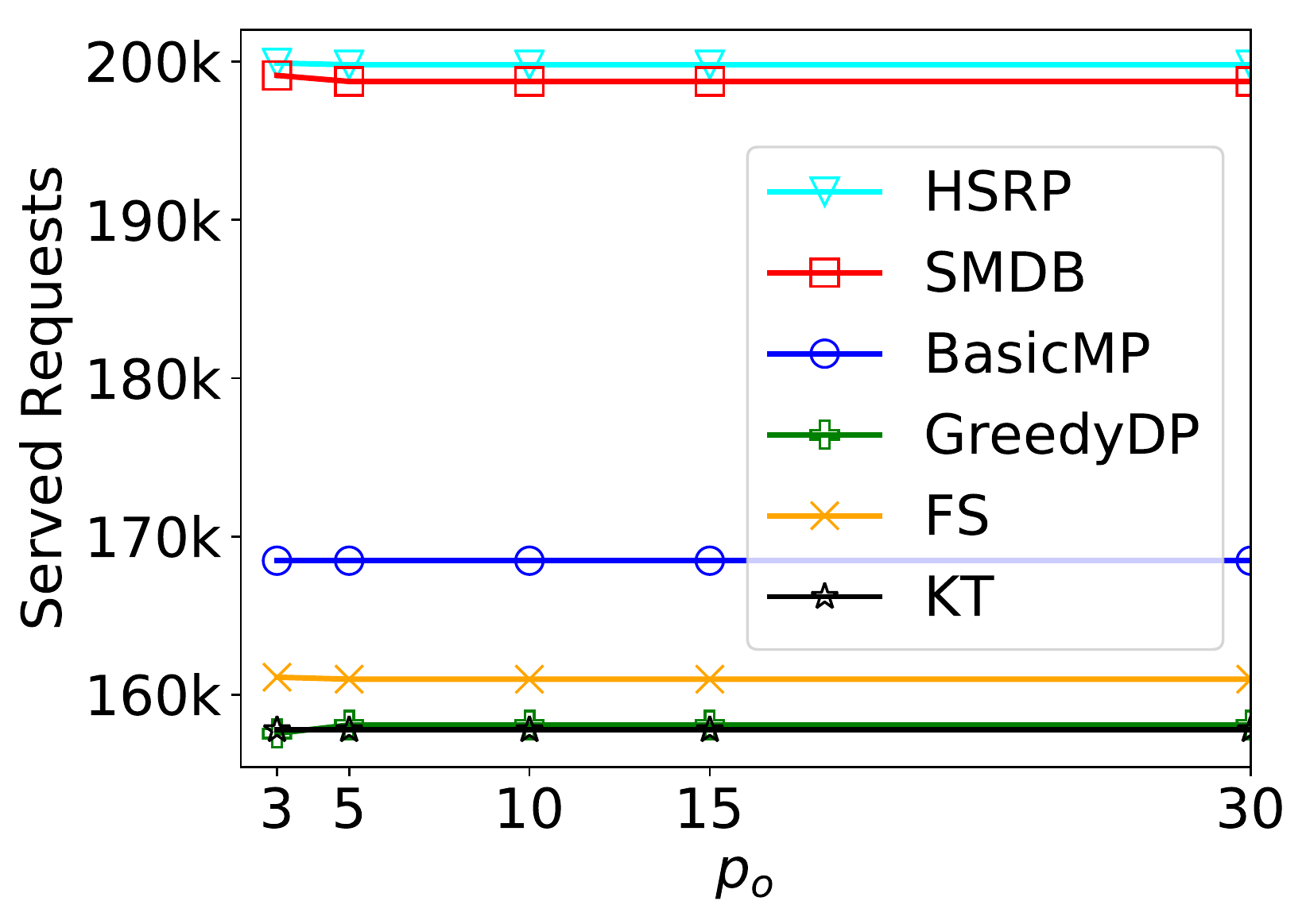}\label{subfig:sr_pen}
	}
	\subfigure[\scriptsize Unified cost ($p_o$))]{
		\includegraphics[width=0.47\linewidth]{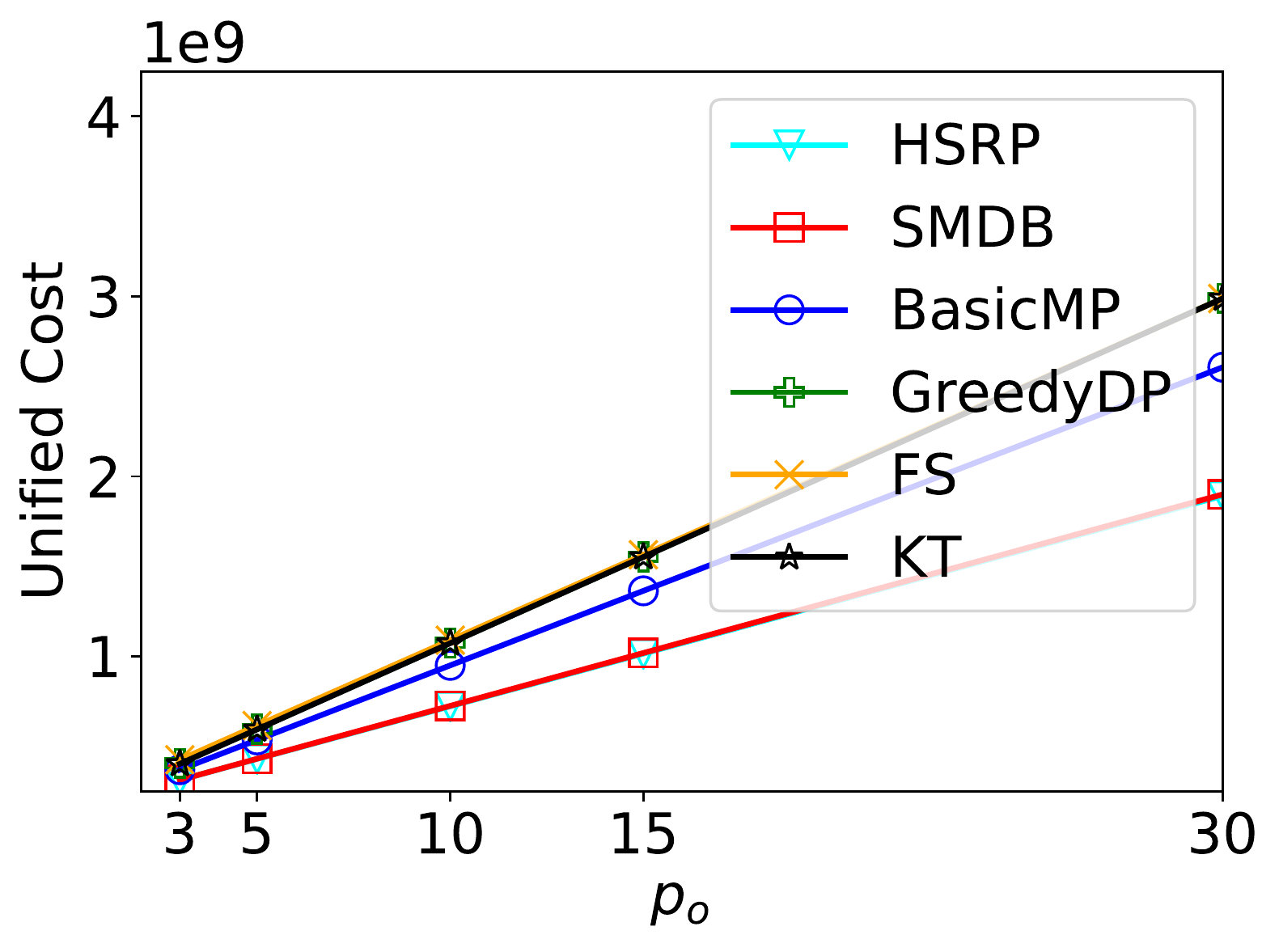}\label{subfig:uc_pen}
	}
	\caption{Performance of varying $p_o$}\label{fig:exp_pen}
\end{figure}\vspace{1ex}

\begin{figure}[h!]\centering \vspace{-4ex}
	\centering
	\subfigure[\scriptsize Served requests ($\beta$)]{
		\includegraphics[width=0.47\linewidth]{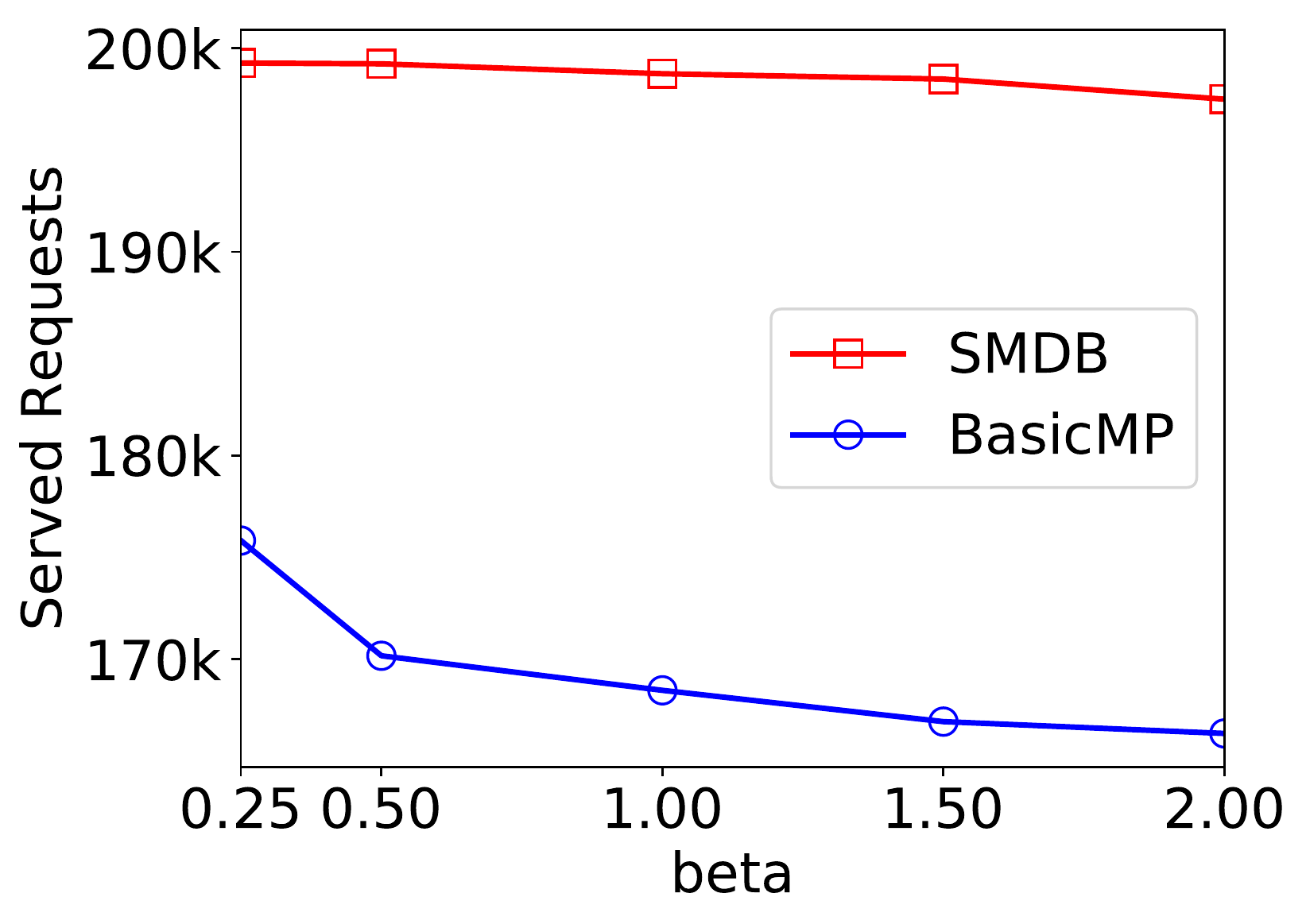}\label{subfig:sr_beta}
	}
	\subfigure[\scriptsize Unified cost ($\beta$)]{
		\includegraphics[width=0.47\linewidth]{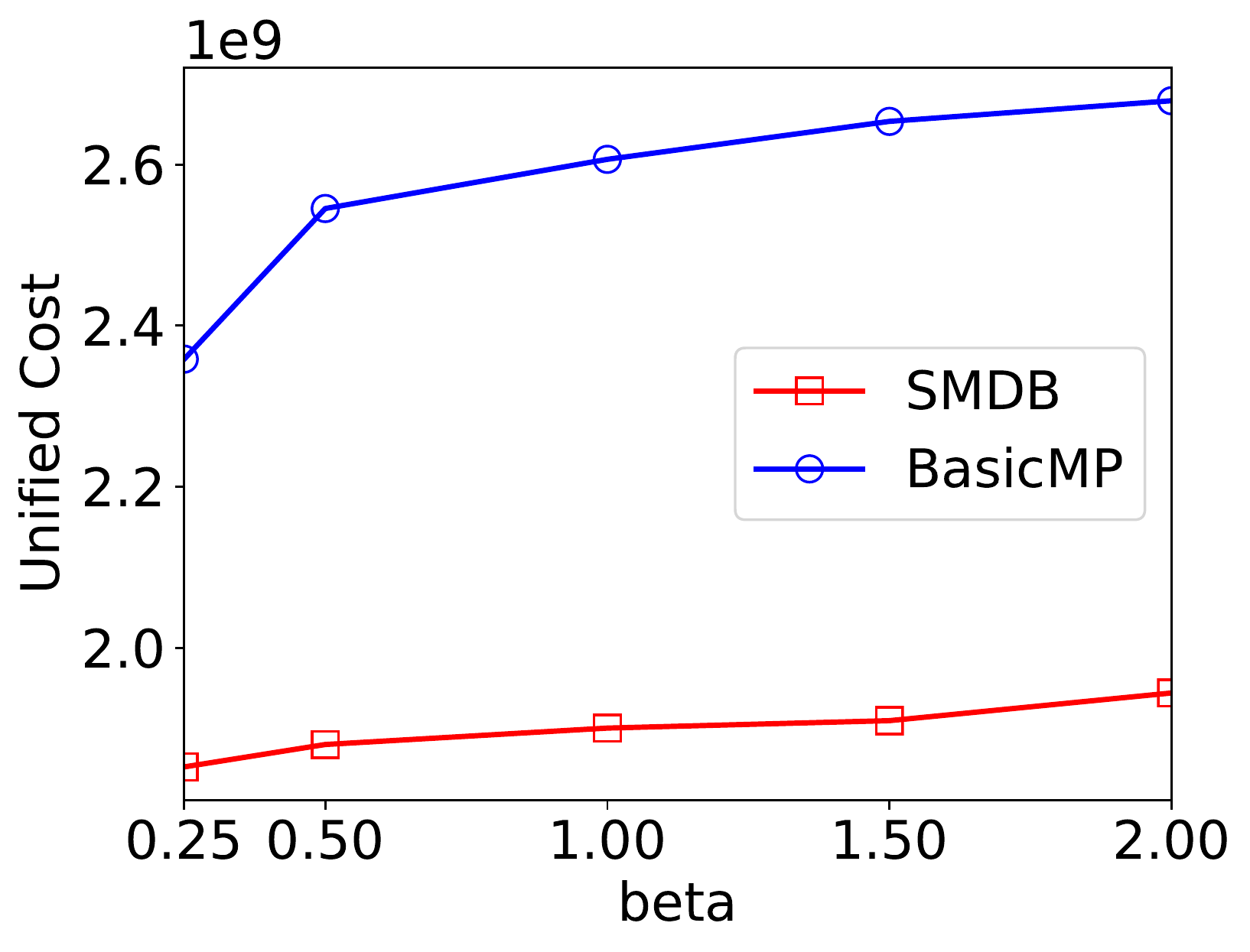}\label{subfig:uc_beta}
	}
	\caption{Performance of varying $\beta$}\label{fig:exp_beta}
\end{figure}\vspace{1ex}

\begin{figure*}[t!]\centering \vspace{-4pt}
	\centering
	\subfigure[\scriptsize Served requests ($n_r$)]{
		\includegraphics[width=0.18\linewidth]{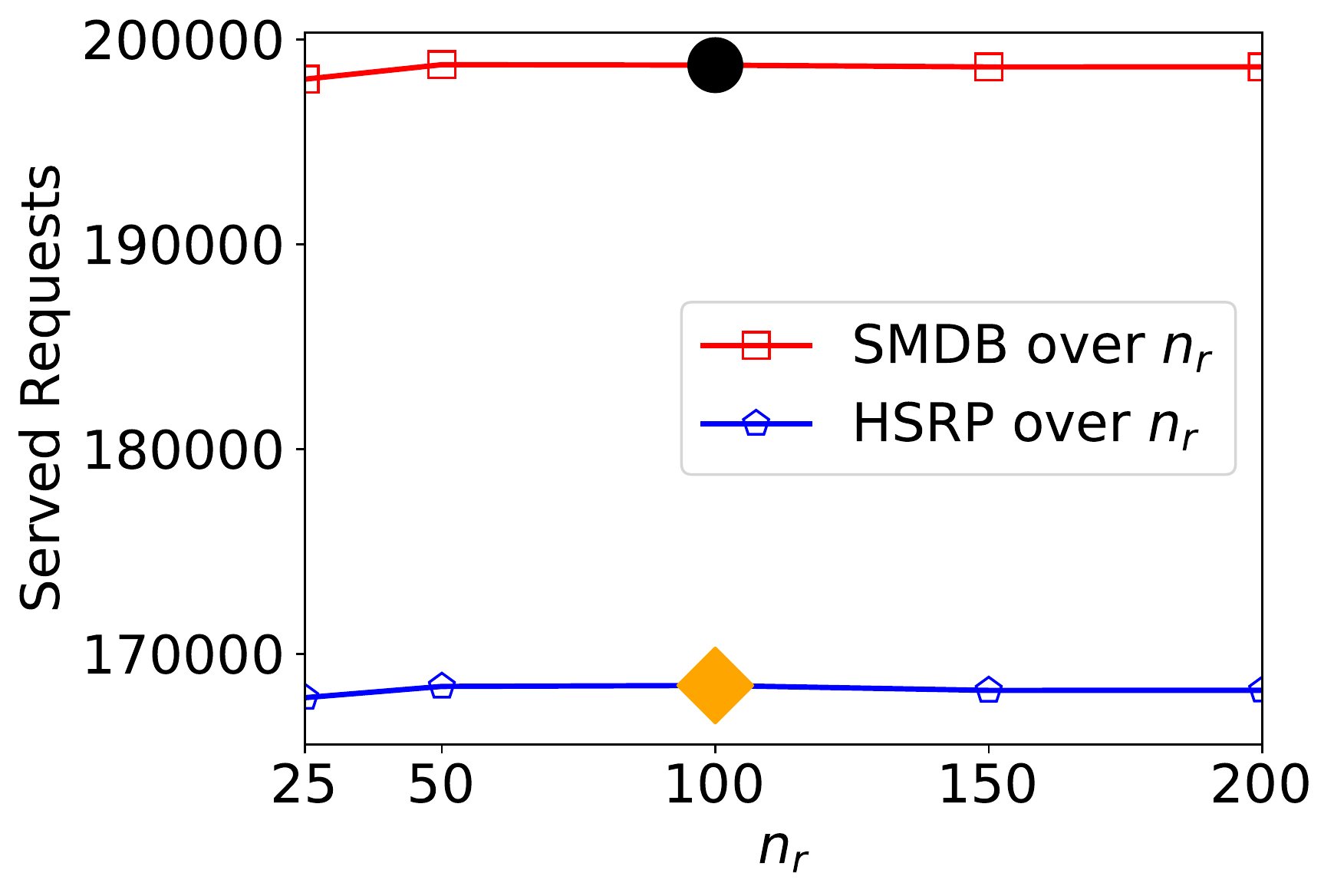}\label{subfig:serve_nr}
	}
	\subfigure[ \scriptsize Served requests ($d_m$)]{
		\includegraphics[width=0.18\linewidth]{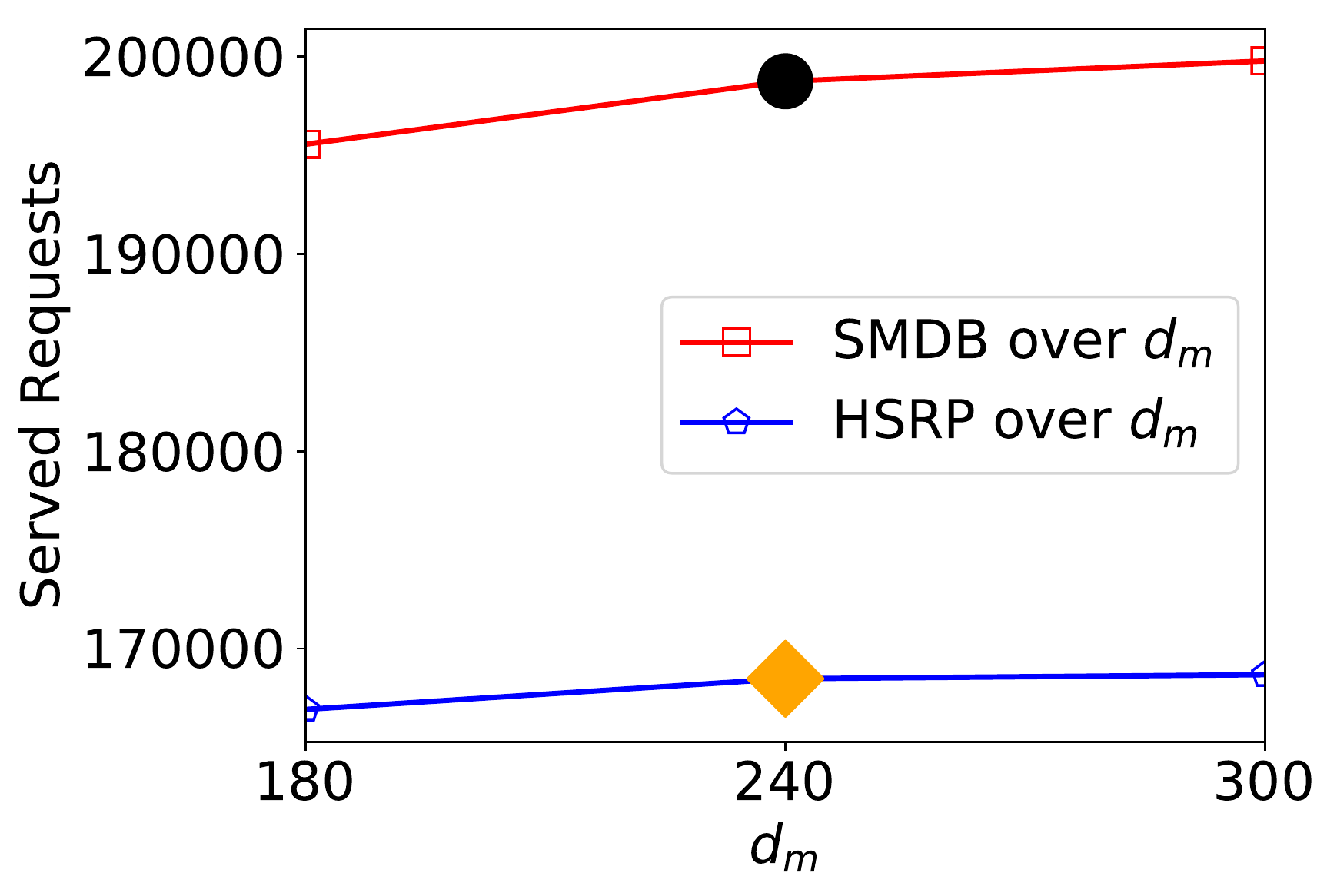}\label{subfig:serve_dm}
	}
	\subfigure[\scriptsize Served requests ($nc_m$)]{
		\includegraphics[width=0.18\linewidth]{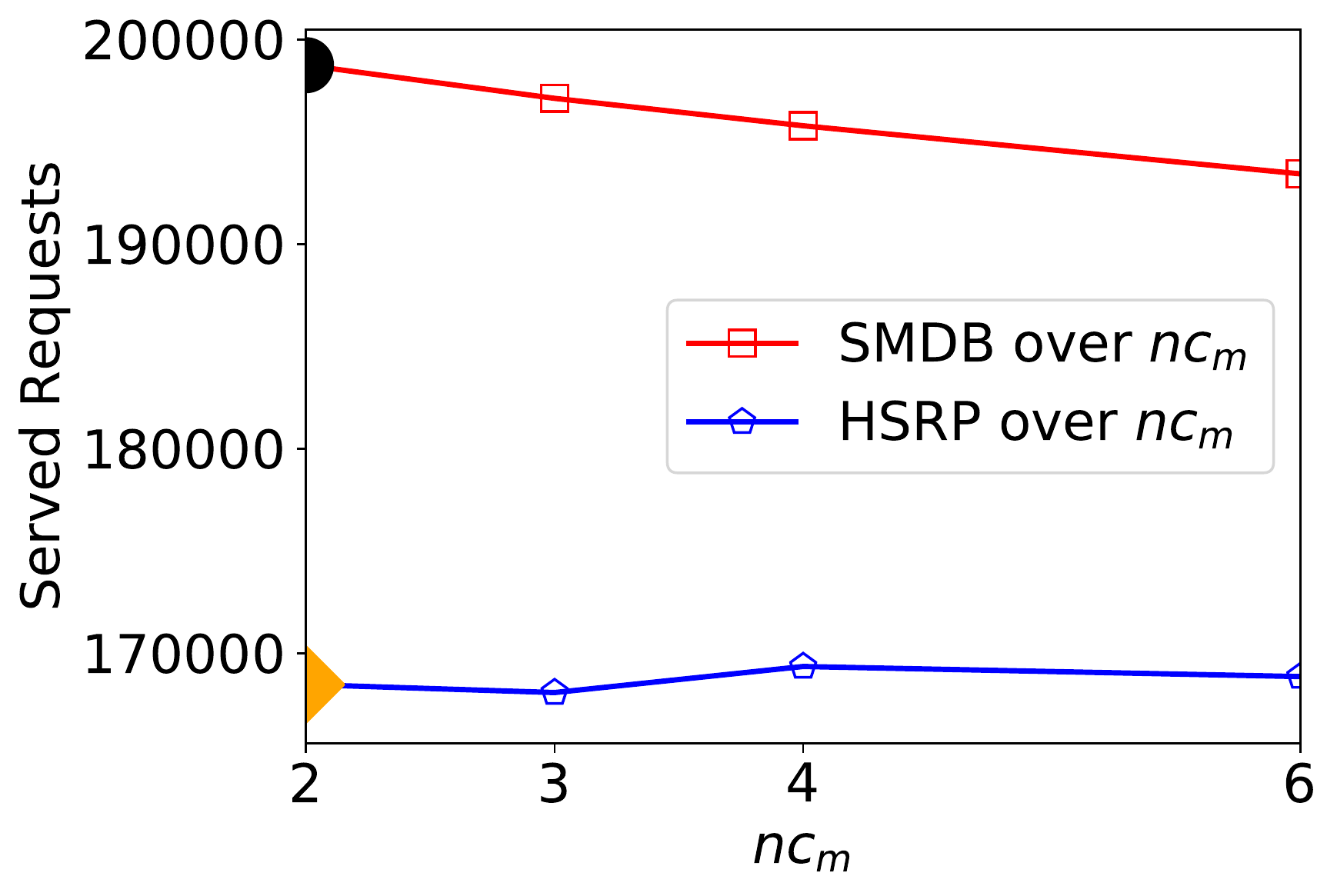}\label{subfig:serve_ncm}
	}
	\subfigure[ \scriptsize Served requests ($thr_{CS}$)]{
		\includegraphics[width=0.18\linewidth]{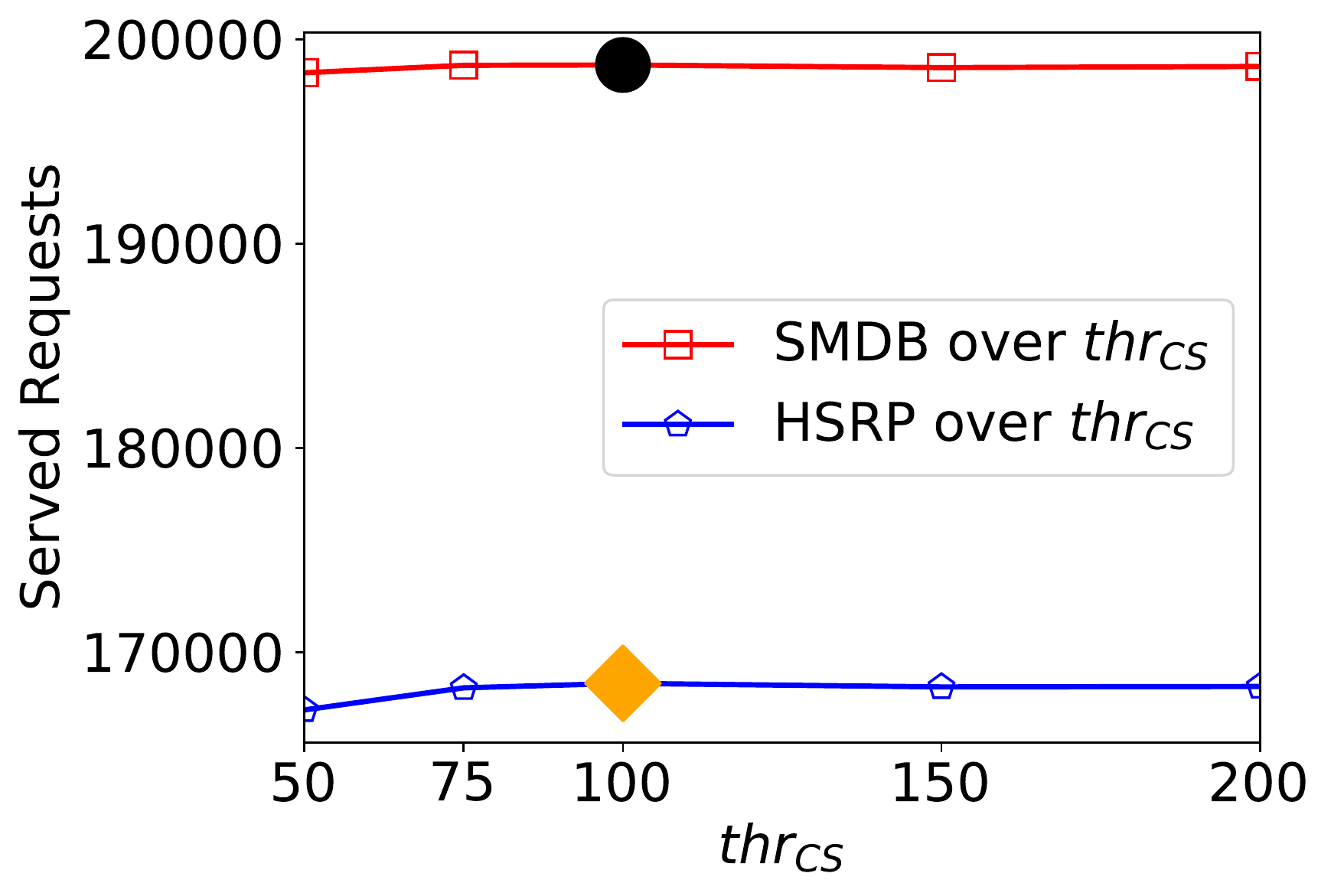}\label{subfig:serve_thr}
	}
	\subfigure[\scriptsize Response time ($\epsilon$)]{
		\includegraphics[width=0.18\linewidth]{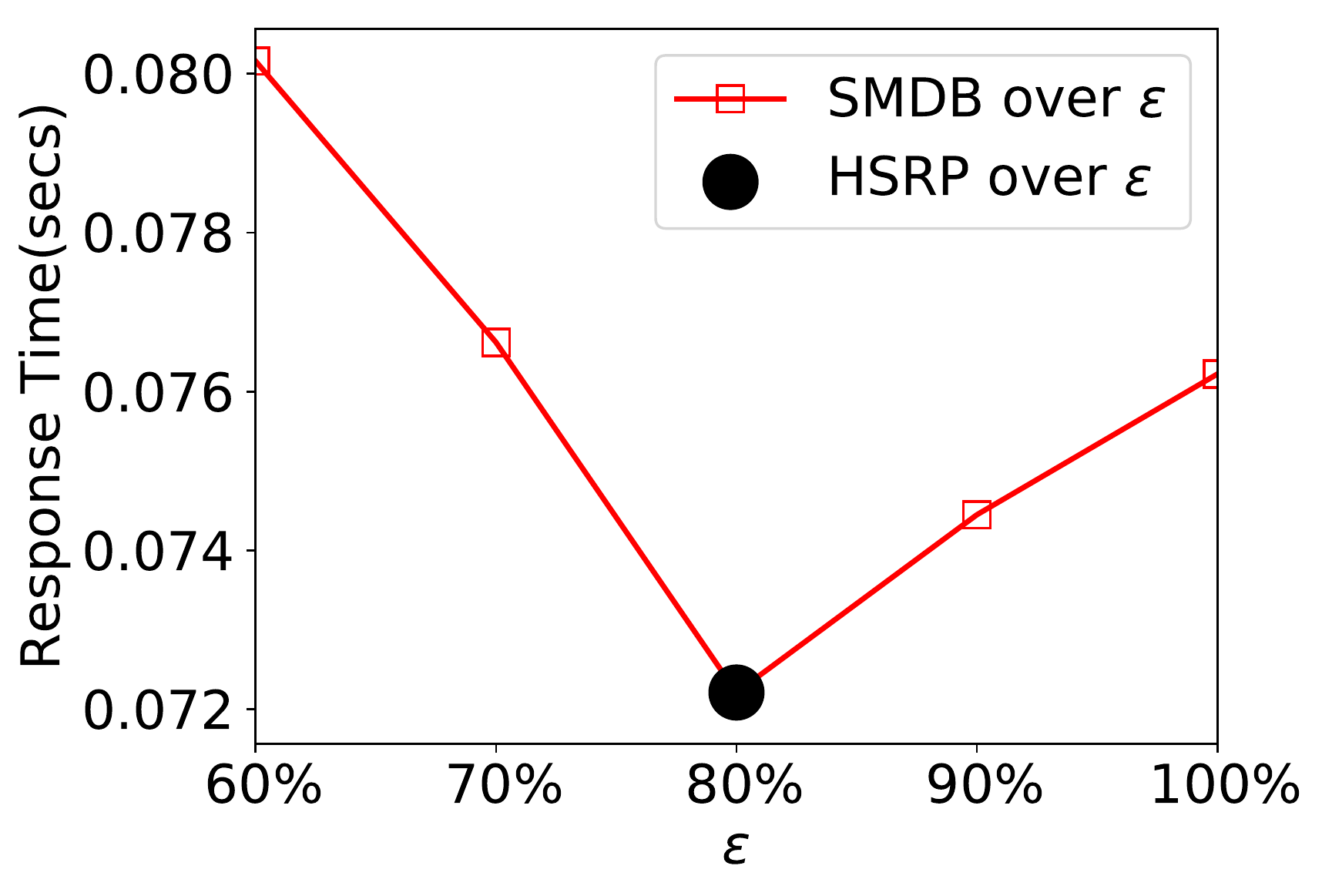}\label{subfig:time_eps}
	}
	\vspace{-2ex}
	
	\subfigure[\scriptsize Unified cost($n_r$)]{
		\includegraphics[width=0.18\linewidth]{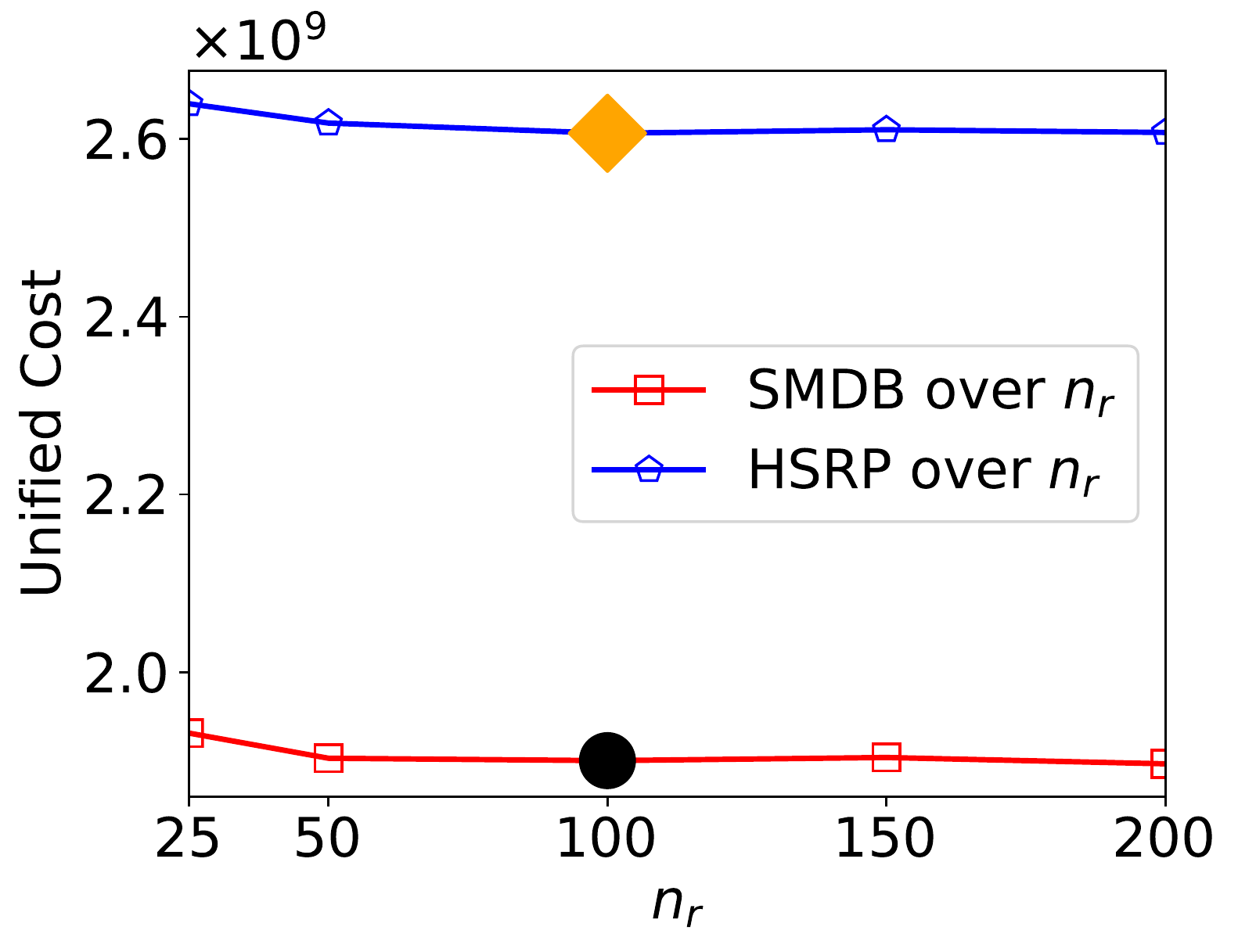}\label{subfig:cost_nr}
	}
	\subfigure[ \scriptsize Unified cost ($d_m$)]{
		\includegraphics[width=0.18\linewidth]{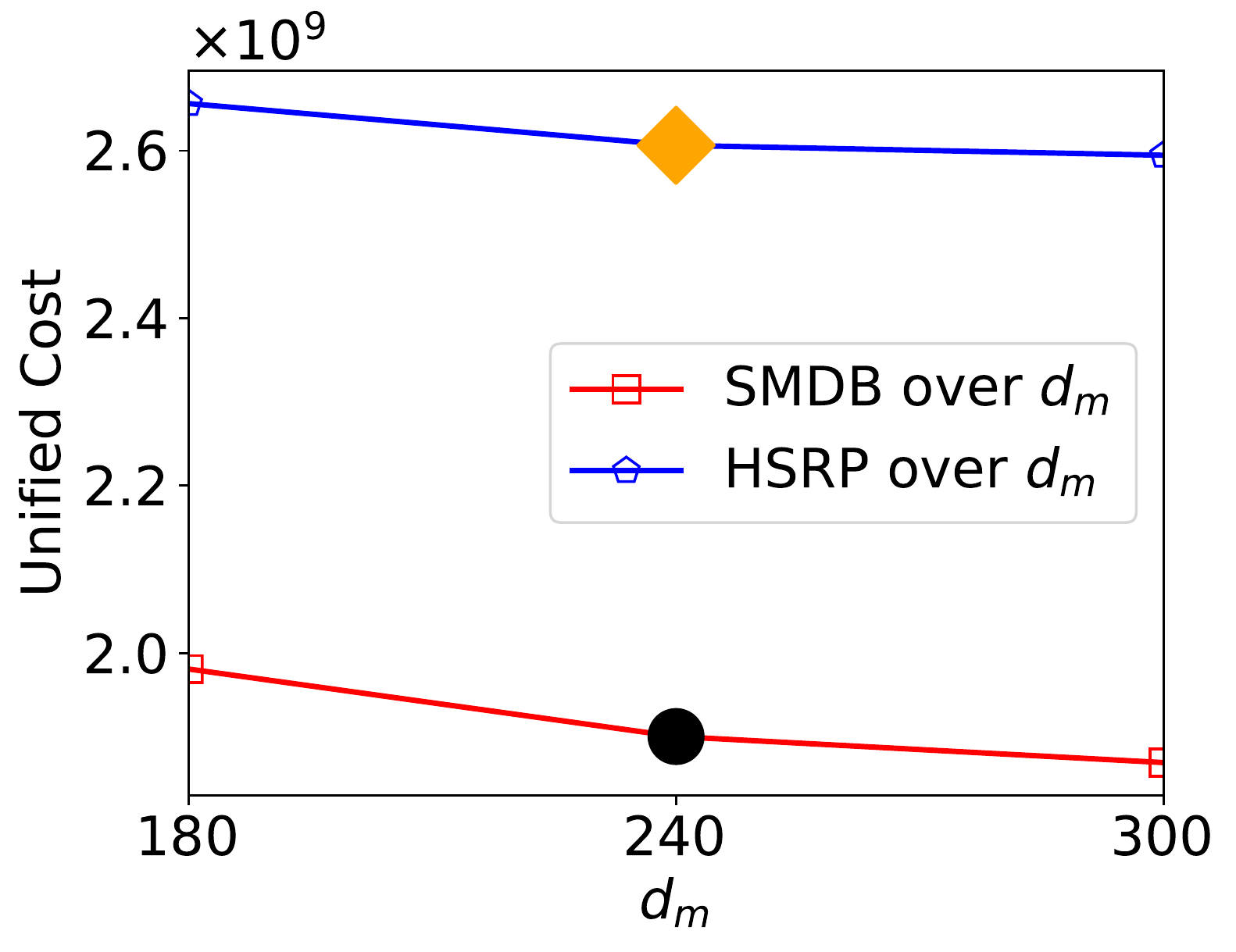}\label{subfig:cost_dm}
	}
	\subfigure[\scriptsize Unified cost ($nc_m$)]{
		\includegraphics[width=0.18\linewidth]{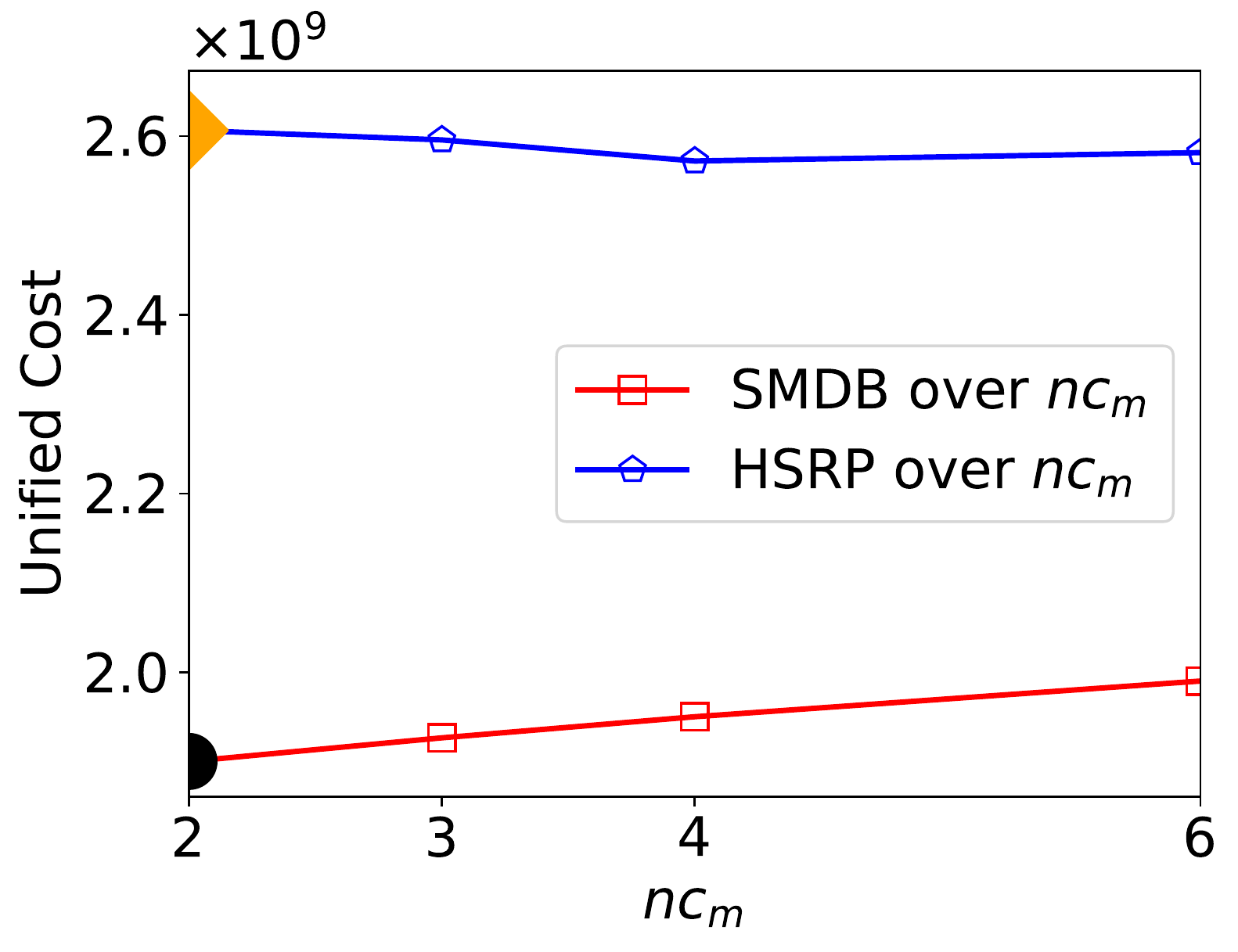}\label{subfig:cost_ncm}
	}
	\subfigure[ \scriptsize Unified cost ($thr_{CS}$)]{
		\includegraphics[width=0.18\linewidth]{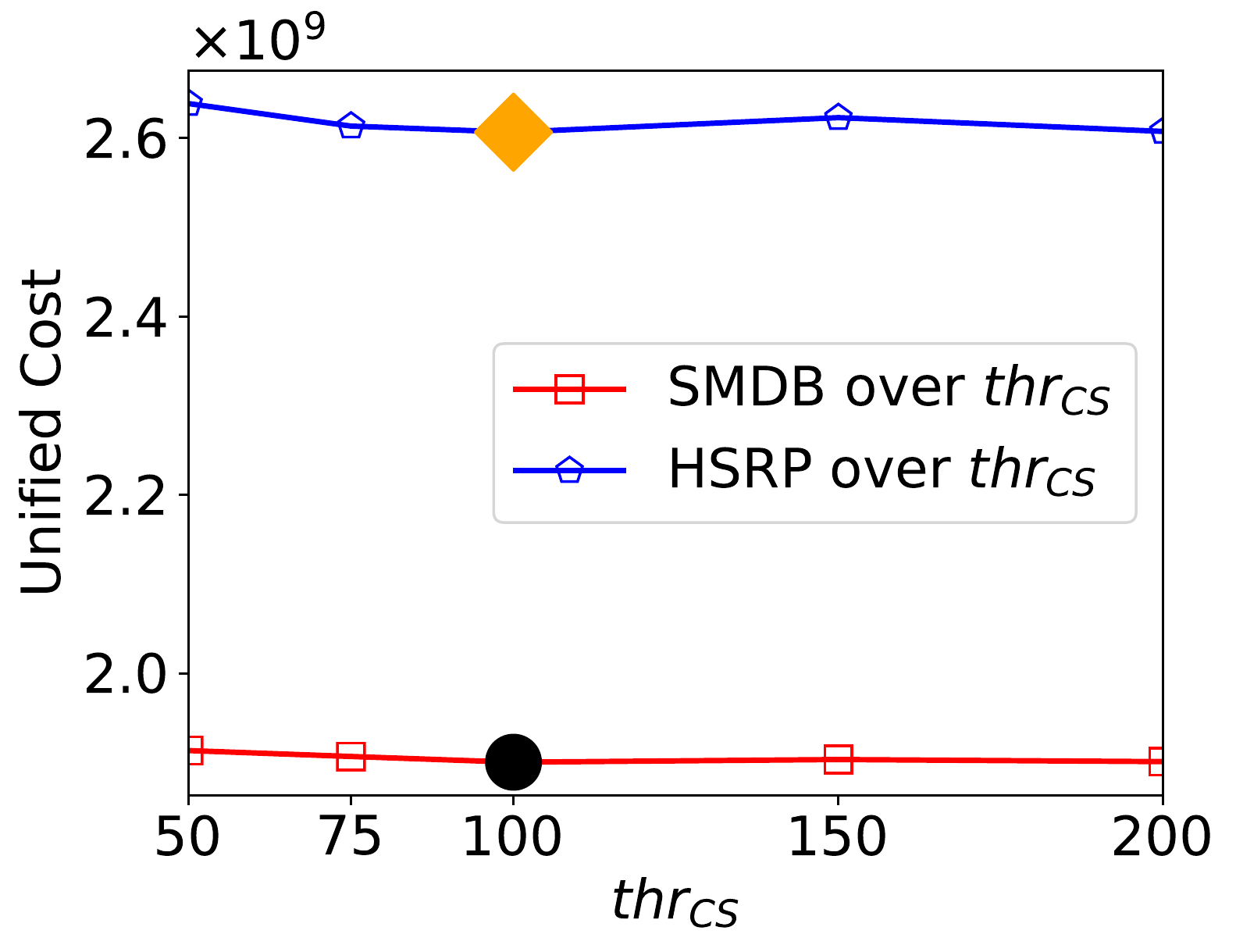}\label{subfig:cost_thr}
	}
	\subfigure[ \scriptsize Response time ($k$)]{
		\includegraphics[width=0.18\linewidth]{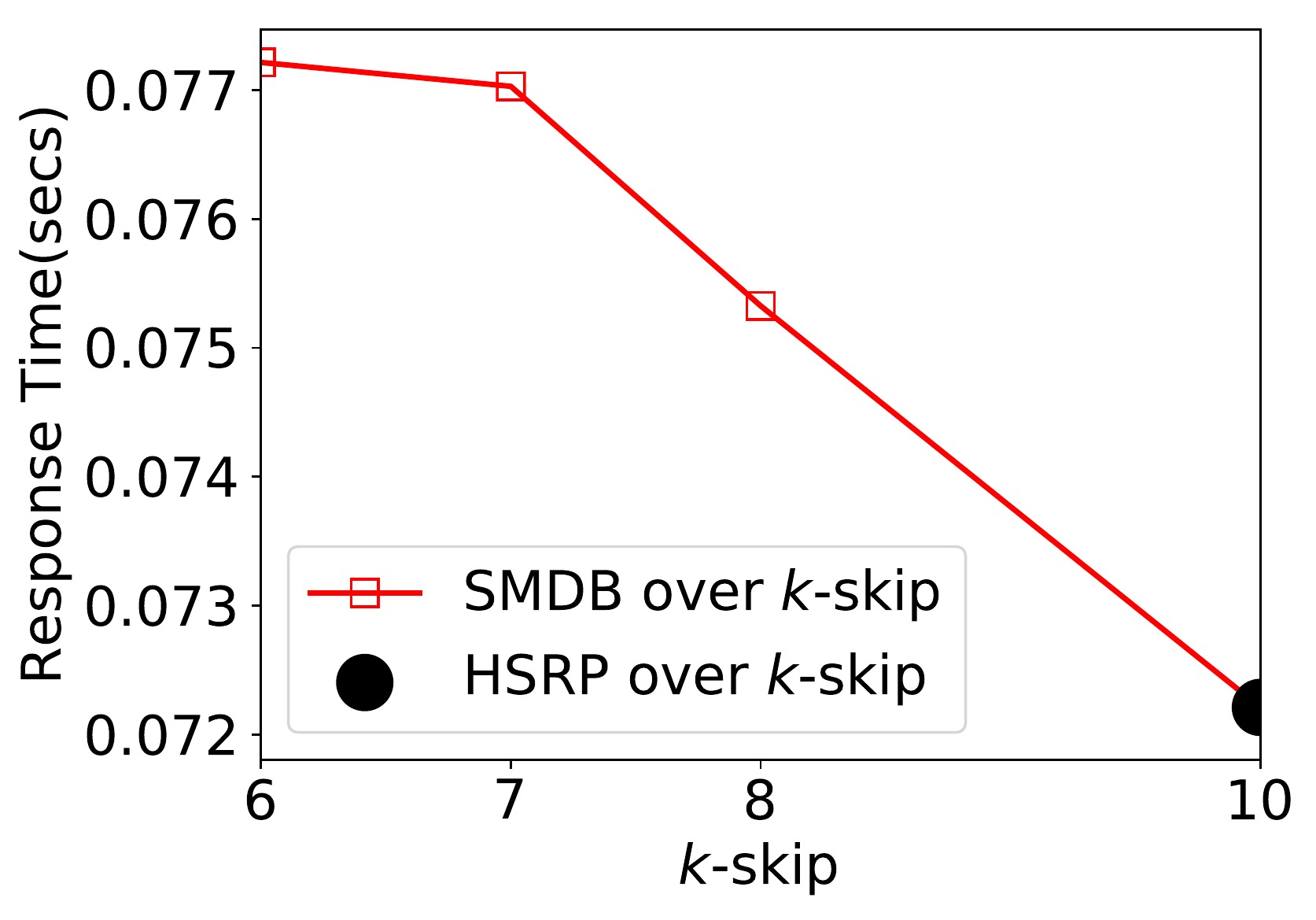}\label{subfig:time_k}
	}
	\vspace{-1ex}
	
	\caption{\small Performance of varying maximum walking distance $d_m$, 
		number of reference vertices $n_r$, maximum number of candidates $nc_m$, 
		the threshold $thr_{CS}$, $\epsilon$, and $k$ for the $k$-skip 
		cover.\label{fig:tune_result}}
\end{figure*}
{\color{red}
	\begin{table}[h!]
		\centering
		{\small 
			\caption{\small Parameter Settings.}\vspace{-3ex} \label{table_hyper}
			\begin{tabular}{c|c}
				\hline
				Parameters &Settings \\
				\hline
				number of reference vertices  $n_r$ &25, 50, \textbf{100}, 150, 200\\
				Maximum walking distance of riders $d_m$ &180, \textbf{240}, 300\\
				Number of MP candidates $nc_m$ &\textbf{2}, 3, 4, 6\\
				Threshold for MP candidate prunning $thr_{CS}$ &50, 75, \textbf{100}, 150, 200\\
				Percent of vertices to cover by core vertices $\epsilon$  &60\%, 70\%, \textbf{80\%}, 90\%, 100\%\\
				$k$ of the constructed $k$-skip cover $k$ &6, 7, 8, \textbf{10}\\
				\hline
			\end{tabular}
		}\vspace{-2ex}
	\end{table}
}
We set the delivery deadline of each request as the sum of its release time and 
the shortest time from source to destination extended by a Deadline Coefficient 
$e_r$. For example, the default deadline for a
request with release time $tr_j$ is 
$tr_j +(1+e_r)\cdot SP_c(s_j,e_j)$.
We set the deadline for pick-up as the latest time, that is, $tr_j$ minus the 
shortest time required to finish it after pick-up.
$a_j$ is varied from 2 to 10. { \label{rw:r1-2}The platform pays a driver for its travel time. 
	We set the unit travel fee as the unit cost. (i.e. $\alpha=1$). Walking 
	time results in a discount for riders. The unit walking time corresponds to 
	$\beta$ unit cost, which varies from 0.5 to 2. Its default value is set to 1, 
	which ensures that using meeting points should not increase the travel time of 
	the rider.}

Figure~\ref{fig:exp_beta} displays the effect of $\beta$. We propose 
$BasicMP$, which adapts the state-of-art traditional ridesharing solution to 
fit meeting points mode, as a baseline to compare with $SMDB$. We use the 
default setting in Table~\ref{table1} to test them. Larger $\beta$ leads to a 
larger cost of walking, thus decreases the flexibility. As $\beta$ increases 
from 0.25 to 2.0, both of the two algorithms serve fewer requests and cost 
higher. However, $BasicMP$ works worse and serves 12.2\% fewer requests while 
$SMDB$ serves 7.8\% fewer. 
The reason is that $SMDB$ exploit the benefit of flexibility by arranging 
routes with convenient vertices. $BasicMP$ only cares about the temporal cost 
and loses a lot.

\begin{table}
	\centering 
	{\small \scriptsize
		\caption{\small Symbols and Descriptions.}
		\label{table:notation}
		\begin{tabular}{c|c}
			\hline
			Notation &Description \\
			\hline
			\multicolumn{2}{c}{Problem Definition}\\
			\hline
			$G_c=\langle V_c, E_c\rangle$& road network for car with vertices and edges\\
			$G_p=\langle V_p, E_p\rangle$& road network for passengers with vertices and edges\\
			$G=\langle V, E\rangle$& union of road network for car and passengers with vertices and edges\\
			${SP_c}(u,v)$ & shortest travel cost on graph $G_c$ for car from vertex $u$ to $v$\\
			${SP_p}(u,v)$ &shortest travel cost on graph $G_p$ for passenger from vertex $u$ to $v$\\
			$W$ &a set od drivers\\
			$w_i$ & a driver of index $i$\\
			$l_{i}$ &current location of driver $w_i$\\
			$R$ &a set of requests \\
			$r_j$ & a request of index $j$\\
			$s_j$ &source location of request $r_j$\\
			$e_j$ &destination of request $r_j$\\
			$p_{j}$ &penalty of unserved request $r_j$\\
			$tr_j$ &release time of request $r_j$\\
			$tp_j$ &deadline to pick up a request $r_j$\\
			$td_j$ &deadline for a request to reach its destination $r_j$\\
			$pi_j$ &pick-up point of request $r_j$\\
			$de_j$ &drop-off point to drop request $r_j$\\
			$wp_j$ &time for request $r_j$ to walk from its source to pick-up point \\
			$wd_j$ &time for request $r_j$ to walk from its drop-off point to destination\\
			$\hat{R}, \bar{R}$ & the set of served and rejected requests\\
			$S_{w_i}, D(S_{w_i})$ &schedule of driver $w_i$ and its distance\\
			$a_i$ or $a_j$ &capacity of driver $w_i$ or request $r_j$ \\
			$\alpha$ &weight parameter for unit driving distance in $S_w$\\
			$\beta$ &weight parameter for cost from walking distance\\
			\hline
			\multicolumn{2}{c}{Meeting Point Selection}\\
			\hline
			$MC(u)$ &meeting point candidates of vertex $u$\\
			\emph{ECI}$(\cdot)$ or \emph{ECO}$(\cdot)$ & equivalent in or out cost for a vertex\\
			$n_r$ & number of reference vertices that are closest to a given vertex\\
			$n_o(u)$& reference vertices of vertex $u$\\
			$d_m$ &maximum walking distance\\
			$SCS(u, v)$ &serving cost score to serve $u$ with $v$ as MP\\
			$thr_{CS}$ &threshold to prune MP according to $SCS$\\
			\hline
			\multicolumn{2}{c}{HMPO Graph Construction}\\
			\hline
			$V_{co}, V_{su}, V_{de}$& Core, sub-level, defective vertices\\
			$\epsilon$&the proportion of vertices that should be servable by core vertices as MPs\\
			$MS(u)$ & candidate serving set of $u$ where $u$ is the candidate MP of all vertices in the set\\
			$G_h=\langle V, E_h\rangle$& road network of our hierarchical graph with vertices and edges\\
			$E_{cc}$&super edges from core to core vertex\\
			$E_{cs}$&super edges from core to sub-level vertex\\
			$E_{sc}$&super edges from sub-level to core vertex\\			
			$E_{ss}$&super edges from sub-level to sub-level vertex\\
			\hline
			\multicolumn{2}{c}{HMPO Graph Based Insertor}\\
			\hline
			$MD(u, v)$&the Maximum Difference for vertices $u$ and $v$\\
			$Ch(u)$&checker vertex for the MP candidate set of vertex $u$\\
			$SMD(u)$& the set Maximum Difference for vertex $u$ with regard to its MP candidate set\\
			$VC(u)$&the core vertices that has super edges with vertices in $MC(u)$\\
			$LMD(v)$& the Local Maximum Difference given a MP candidate set $MP(u)$\\
			$DV$& the pruned vertices based on SMDB\\
			$arv[u]$&arriving time for a route vertex $u$\\
			
			\hline
		\end{tabular}
	}\vspace{-2ex}
\end{table}

Besides, we conduct extensive experiments to compare the impact of different 
parameters for meeting point candidate selection and HMPO graph construction. 
Note that the construction is offline and its time cost would not affect the 
online assignment. Here we display the results of different factors for meeting 
point candidate selection: maximum walking distance $d_m=[180, 240, 300]$; 
number of reference vertices $n_r=[25, 50, 100, 150]$; maximum number of 
candidates $nc_m=[2, 3, 4, 6]$; and the threshold $thr_{CS}=[0, 50, 100, 200]$. 
To construct the HMPO Graph, we compare $\epsilon=[40\%, 60\%, 80\%, 100\%]$ 
and $k=[5, 8, 10, 15]$ for the $k$-skip cover. The generated candidates and 
HMPO graphs are applied to our $SMDB$ algorithm with the default setting in 
Table\ref{table1}. {The notations and settings are displayed in Table~\ref{table_hyper}.} We show their performances in Figure~\ref{fig:tune_result}. 

According to the results, we choose the best setting: $n_r=100$, $nc_m=2$, 
$thr_{CS}=100$, $\epsilon=80\%$, and $10$-skip cover. All the chosen settings 
results in the lowest unified cost and highest serving rate. As for 
the maximum walking distance, a larger $d_m$ always has better performance 
(more flexible choices for meeting points) but impairs the users’ experience 
(walking farther). To increase $d_m$ from $240$ to $300$ (25\%), the number of 
served requests only increase $0.5\%$. Here we choose $d_m=240$ as a trade-off.

{\label{rw:r2-2}To avoid tedious fine-tuning, we give suggestions for default settings. $n_r$ is supposed to be the number of nodes that is accessible within the average time cost of historical requests. Walking distance $d_m$ is user-specified according to their tolerance for walking, where threshold $thr_{cs}$ could be set to about half of $d_m$ as the average walking distance. As vertices in city networks usually have edges no more than 4, we can set the number of candidate $nc_m$ to $2$ or $3$. $\epsilon$ is chosen by selecting vertices according to their \emph{ECI/ECO} and collecting the vertices they can serve (i.e., $MS$) into a set $SS$. The process stops when the size of $SS$ changes slowly while selecting new vertices. We set $\epsilon=|SS|/|V_p|$.  $k$ should be the average length of shortest paths from historical requests.}

\begin{figure}[h]\centering 
	\includegraphics[width=0.47\linewidth]{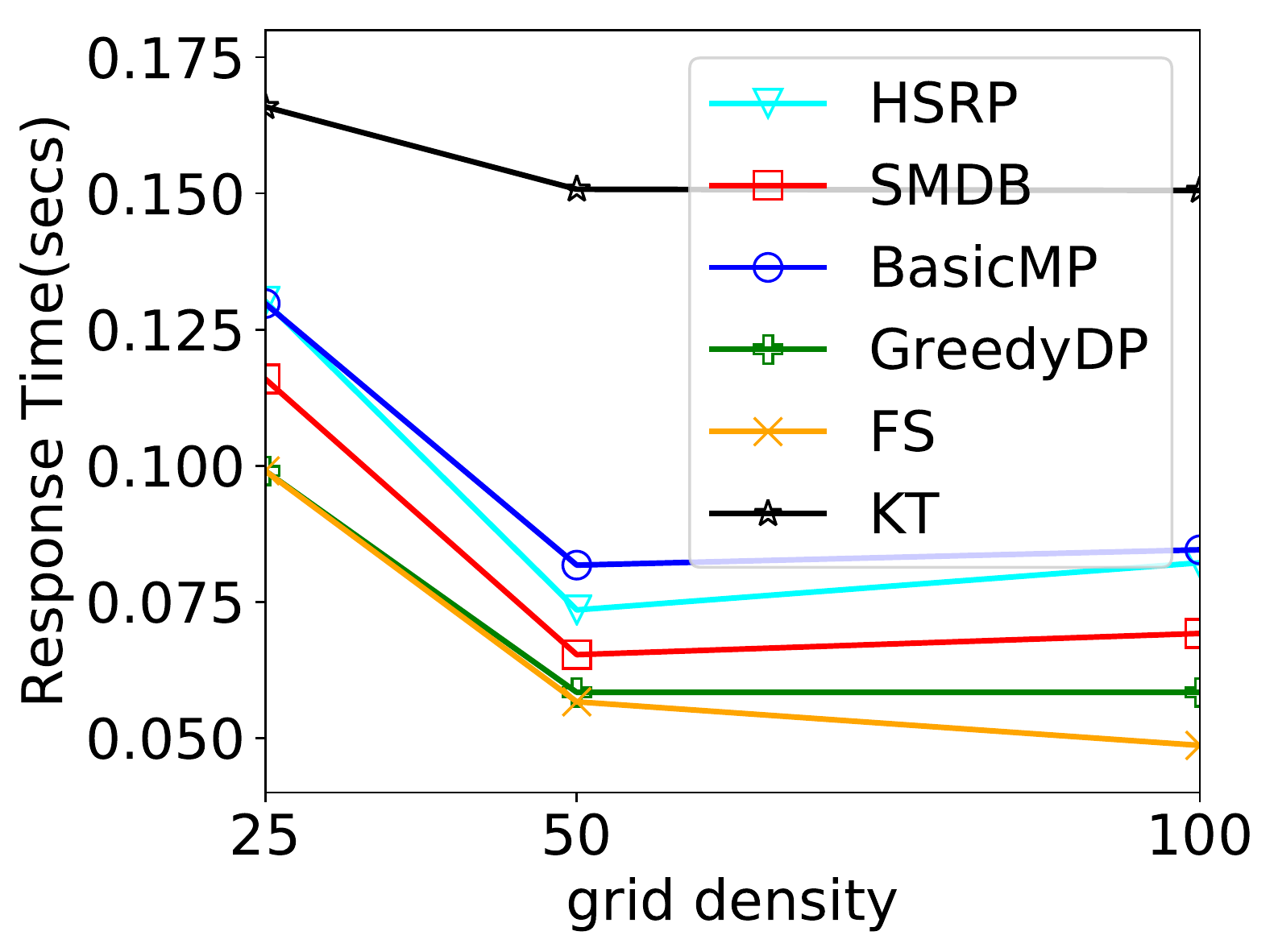}
	\caption{\small Time cost of varying grid density}
	\label{fig:exp_grid}
\end{figure}\vspace{1ex}

{ We use grids to prune workers during assignment. The grids are cut according to lantitude and longitude. We set the number of grid per degree (grid density) to 50 by default, e.g., an area of $1^\circ\times 1^\circ$ has $50\times50$ grids. We vary the grid density from 25 to 100, which only affect the efficiency without changing the effectiveness. The result is shown in Figure~\ref{fig:exp_grid}.}

{
	\section{Notation Summary}
	\label{A:notation}

	In Table~\ref{table:notation}, we summarize all the major notations.}

\end{document}